\documentclass[a4paper]{article}
\usepackage{amsmath,amsthm,amssymb, enumerate}
\usepackage{doi}
\usepackage{float}
\usepackage{nicefrac}
\usepackage{placeins}
\usepackage{pgfplots}
\pgfplotsset{compat=1.15}
\usepackage{mathrsfs}
\usetikzlibrary{arrows}
\usepackage{todonotes}
\usepackage[dvipsnames]{xcolor}
\usepackage[table]{xcolor}
\usepackage{complexity}
\usepackage{xspace}
\usepackage[capitalise]{cleveref}
\usepackage{subcaption}
\usepackage{authblk}
\usepackage{thm-restate}
\usepackage{fullpage}
\usepackage{tikz-cd}
\usepackage{extarrows}
\usepackage{centernot}
\usepackage{multirow}
\usepackage{array}
\usepackage{arydshln}
\usepackage{adjustbox}

\newcolumntype{C}[1]{>{\centering\arraybackslash}m{#1}}

\usetikzlibrary{decorations.markings,arrows.meta}

\newtheorem{theorem}{Theorem}
\newtheorem{corollary}[theorem]{Corollary}

\newtheorem{lemma}[theorem]{Lemma}

\newtheorem{conjecture}[theorem]{Conjecture}
\newtheorem{question}[theorem]{Question}
\newtheorem{remark}[theorem]{Remark}

\tikzset{               
        invisible/.style={opacity=0},
        visible on/.style={alt={#1{}{invisible}}},
        alt/.code args={<#1>#2#3}{%
          \alt<#1>{\pgfkeysalso{#2}}{\pgfkeysalso{#3}} 
        },
        cross/.style={cross out, thick,draw=black, minimum size=2*(#1-\pgflinewidth), inner sep=0pt, outer sep=0pt},
        cross/.default={0.25em},
        edge/.style={ultra thick,black},
        arc/.style={very thick,black,->},
        medge/.style={decorate,very thick,decoration={snake}},
        aedge/.style={very thick,dashed,black},
        dedge/.style={thick,->},
        availedge/.style={thick,blue},
        vertex/.style={inner sep=0.25em,shape=circle,thick,draw,node distance=4em},
        smalledge/.style={thick,DodgerBlue},
        smallvertex/.style={inner sep=0.1em,shape=circle,thick,draw=Black,node distance=4em}
}

\tikzset{
    midarrow/.style={
        postaction={
            decorate,
            decoration={
                markings,
                mark=at position 0.55 with {
                    \arrow{Latex[length=2mm]}
                }
            }
        }
    }
}

\title{On multiplicativity of directed graphs} 
\date{}

\author[1]{Soura Sena Das}
\author[2]{Moritz M\"uhlenthaler}
\author[3]{Sagnik Sen}
\author[2]{Thomas Suzan}

\affil[1]{Indian Statistical Institution, Kolkata, India}
\affil[2]{G-SCOP, Universit\'{e} Grenoble-Alpes, France}

\affil[3]{Indian Institute of Technology Dharwad, India}

\begin{document}

\maketitle

\begin{abstract}
	A graph category is a category with a set of graphs or similar structures (such as, directed graphs, signed graphs, etc.) playing the role of objects, and an  appropriate notion of homomorphism playing the role of morphisms.  The characterization of multiplicative objects are important open problems in categories of undirected and directed graphs. While the recent disproving of the Hedetniemi's conjecture due to Shitov (Ann. Math. 2019), which claimed that all complete graphs are multiplicative, provided a breakthrough in the study of multiplicative undirected graphs, the characterization of multiplicative undirected graphs remains known only for cycles, circular cliques $K_{{n/k}}$ where ${n/k} \in (2,4]$, complete graphs, and graphs whose each edge is part of at most one $4$-cycle.  Similarly, whether a given directed graph is multiplicative or not is known only for some oriented paths, oriented cycles, and transitive tournaments. 

    We study multiplicative graphs in the category of directed graphs where pushable homomorphism plays the role of morphism. We provide full multiplicativity characterization for directed bipartite graphs, oriented cycles, and transitive tournaments. As a consequence we find new (infinite) classes of non-multiplicative directed graphs in the usual directed graphs category. We also resolve an open question posed by Das \textit{et al.} (CALDAM 2026) related to the existence of exponential directed graphs with respect to pushable homomorphisms, and use our solution as a tool for our proofs. 
    
\end{abstract}

\section{Introduction}
In the 1960s, Hedetniemi famously conjectured~\cite{hedetniemi1966} that $\chi(G \times H) = \min\{\chi(G),\chi(H)\}$, where $G$ and $H$ are simple graphs, $\chi(\cdot)$ denotes the chromatic number, and $G \times H$ is the categorical product of $G$ and $H$. The structure of $G \times H$ is known as a graph on the vertex set $V(G) \times V(H)$ where $(u, v)$ is adjacent to $(u',v')$ if $u, u'$ and $v,v'$ are adjacent in $G$ and $H$, respectively. 
The conjecture remained open for many years until it was famously disproved by Shitov~\cite{shitov-counter} in 2019 using exponential graphs  as a tool. 
While the above formulation of the 
Hedetniemi's conjecture using the
notion of chromatic number is its most popular version, 
its equivalent formulation using the language of graph 
homomorphisms and multiplicative graphs possibly 
carry more insight and naturally leads to the big 
picture of characterizing multiplicative graphs in the 
category of undirected graphs. 
Let us try to build our 
motivation from that perspective.

Let $G$ and $H$ be two graphs. 
A \textit{homomorphism} of $G$ to $H$ is a vertex  mapping
$f : V(G) \to V(H)$ such that $f(u)f(v)$ is an edge  of $H$ whenever $uv$ is an edge  of $G$.
Additionally, if a homomorphsim $f$ preserves non-adjacencies, that is, 
$f(u),f(v)$ are non-adjacent in $H$ 
whenever $u, v$ are non-adjacent in $G$, then $f$ is an \textit{isomorphism}.  
We write $G \rightarrow H$ to indicate that $G$ admits a homomorphism to $H$. 
If $G \rightarrow H$ and $H \rightarrow G$, then $G$ and $H$ are \textit{homomorphically equivalent}.
A graph  $K$ is \textit{multiplicative} if $G \times H \rightarrow K$ implies $G \rightarrow K$ or $H \rightarrow K$. 
A graph $K$ is \textit{non-multiplicative} if 
it is not multiplicative. 
It is known (and is easy to observe) that 
 $\chi(G) \leq n$ if and only if $G \rightarrow K_n$, where $K_n$ denotes the complete graph on $n$ vertices. Therefore, the Hedetniemi's conjecture translates to the following statement. 

\begin{conjecture}[Hedetniemi's conjecture 1966~\cite{hedetniemi1966}] 
For all $n \geq 1$, the complete graph $K_n$ is multiplicative.
\end{conjecture}

In 1985, El-Zahar and Sauer~\cite{el-zahar-sauer} proved that $K_1$, $K_2$, and $K_3$ are multiplicative, marking the first significant progress on the conjecture. 
More recently in 2019, Shitov~\cite{shitov-counter} disproved Hedetniemi’s conjecture for  large values of $n$. 
Subsequent works~\cite{tardif2022chromatic,tardiff-kn4,wrochna2020counterexample,zhu2020note} eventually established that $K_n$ is non-multiplicative if $n \geq 4$. 
That means, at present we have a complete characterization of complete graphs with respect to multiplicativity. 

\begin{theorem}[El-Zahar and Sauer 1985~\cite{el-zahar-sauer}, Tardif 2023~\cite{tardiff-kn4}]\label{th kn-multiplicative}
    A complete graph $K_n$ is multiplicative if and only if $n \leq 3$. 
\end{theorem}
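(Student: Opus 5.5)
This theorem combines results from the literature, so the plan splits into the two directions. For $n\le 3$ we must show $K_n$ is multiplicative; for $n\ge 4$ we must exhibit graphs $G,H$ with $G\times H\to K_n$ but $G\not\to K_n$ and $H\not\to K_n$.

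\emph{Easy cases $n=1,2$.} For $K_1$, observe that $G\times H\to K_1$ forces $G\times H$ to be edgeless. If both $G$ and $H$ had an edge, their product would contain an edge, so one of them is edgeless and maps to $K_1$. For $K_2$, suppose $G$ and $H$ both contain odd cycles $C$ and $C'$, of lengths $p$ and $q$. The product $C\times C'$ contains a closed walk of length $\mathrm{lcm}(p,q)$, which is odd. Hence $G\times H$ is not bipartite. Contrapositively, $G\times H\to K_2$ implies $G$ or $H$ is bipartite.

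\emph{The case $n=3$ (El-Zahar--Sauer).} Here I follow the exponential-graph method. First reduce to connected $G$ and $H$ in which every vertex and edge lies on an odd cycle, using the fact that a graph is $3$-colourable iff each component is. Suppose $c$ is a $3$-colouring of $G\times H$ and $H\not\to K_3$. For each $g\in V(G)$, the colouring $c(g,\cdot)$ of $H$ is not proper, so we can record the monochromatic structure in each fibre. The key lemma, which is the main obstacle, is to define the ``majority/forced'' colour of $g$ from the fibre $\{g\}\times H$ and show that adjacent $g,g'$ get distinct colours. Following El-Zahar--Sauer, one defines $f(g)$ to be a colour $i$ such that the subgraph of $H$ where $c(g,\cdot)=i$ contains an odd cycle. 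I would prove this is well defined and proper by a parity argument on closed walks in the product: an odd cycle monochromatic in colour $i$ in both fibres of $g$ and $g'$ would yield a monochromatic edge in $G\times H$. Whether the chosen colour can be made consistent is the delicate part. It is handled via the relation $G\times H\to K_3 \iff G\to K_3^H$ and the structure of the exponential graph $K_3^H$ when $H$ is connected and non-bipartite.

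\emph{Case $n\ge 4$.} Here I would cite the counterexample chain rather than reprove it. Shitov disproved multiplicativity for huge $n$ using exponential graphs $K_n^{H}$ with $H$ a strong-product-type graph of large girth. Tardif, Zhu and Wrochna lowered the threshold, and Tardif's final argument settled $n=4$. Monotonicity then gives all $n\ge 4$ and needs a separate argument: non-multiplicativity of $K_n$ implies that of $K_{n+1}$. For this I would take a witness pair $(G,H)$ for $K_n$ and join a universal vertex to each, i.e.\ use $G+K_1$ and $H+K_1$. Their product maps to $K_{n+1}$ by colouring the new product vertices with the new colour and adjusting the mixed vertices. Since this adjustment is fiddly, I would instead rely on the cited results, which already state non-multiplicativity for every $n\ge 4$.
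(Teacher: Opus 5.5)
The paper gives no proof of this statement. It quotes El-Zahar--Sauer for $n\le 3$ and the Shitov/Wrochna/Zhu/Tardif chain for $n\ge 4$. Your arguments for $n=1$ and $n=2$ are correct and self-contained. For $n\ge 4$ you end up citing what the paper cites. You are right to drop the join construction: in $(G+K_1)\times(H+K_1)$ with apexes $u,w$, the sets $\{u\}\times V(H)$ and $V(G)\times\{w\}$ are completely joined to each other, so the ``adjustment'' is not routine.

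The genuine gap is the case $n=3$. Your colour rule is not well defined. The hypothesis $H\not\to K_3$ only guarantees a monochromatic \emph{edge} in $c(g,\cdot)$, not a monochromatic odd cycle, and the rule fails even after your reduction. Here is a counterexample:
\begin{itemize}
\item Take $G=C_5=g_0g_1g_2g_3g_4$, and $H=W_5$ with hub $z$ and rim $r_0r_1r_2r_3r_4$.
\item Let $c(g_1,\cdot),c(g_2,\cdot),c(g_3,\cdot)$ be the constants $2,1,0$.
\item Let $c(g_4,\cdot)$ be $2$ everywhere except $c(g_4,r_1)=1$.
\item Let $c(g_0,\cdot)$ be $0$ on $z,r_0,r_2$ and $1$ on $r_1,r_3,r_4$.
\end{itemize}
A direct check shows $c$ is a proper $3$-colouring of $C_5\times W_5$. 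Both factors are connected with every edge on an odd cycle, and $W_5\not\to K_3$. Yet in the fibre of $g_0$ the colour classes induce the path $r_0zr_2$ and the graph $\{r_3r_4\}\cup\{r_1\}$. Neither contains an odd cycle, so $f(g_0)$ is undefined. The theorem is not contradicted here, since $C_5\to K_3$; your method is.

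Your properness step is also too weak. Two monochromatic odd cycles $D,D'$ of colour $i$, in the fibres of adjacent $g,g'$, conflict only if some edge of $H$ joins a vertex of $D$ to a vertex of $D'$. For far-apart cycles nothing is violated, and no parity argument is involved.

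Finally, the step you defer to ``the structure of $K_3^H$'' is exactly the assertion $K_3^H\to K_3$ whenever $H\not\to K_3$. That is the theorem itself.

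What is missing is a global invariant rather than a fibre-by-fibre choice. One option is the winding number of $c$, viewed as a map to $C_3=K_3$, along closed walks of $G\times H$. It vanishes on $4$-cycles, so it can be transported between fibres using connectivity. On a product $C_p\times C_q$ of odd cycles, the diagonal closed walk has odd length $\operatorname{lcm}(p,q)$ and hence odd winding. This parity then forces a $3$-colouring of one factor. This is the kind of argument behind El-Zahar--Sauer and its extension to odd cycles. Either carry it out, or cite the result as the paper does.
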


Thus, the above theorem provides a true and complete resolution of the Hedetniemi's conjecture by appropriately rephrasing its statement. A more general and a natural question that can be asked in this context is the following. 

\begin{question}\label{ques multiplicative}
     Can you characterize all multiplicative graphs? 
\end{question}
 
Most research related to the Hedetniemi's conjecture
in between the works of 
El-Zahar and Sauer~\cite{el-zahar-sauer}, and 
Shitov~\cite{shitov-counter} focused on the above question. Primarily, researchers tried to characterize multiplicative graphs in a particular graph family. 
As a positive result, and a natural extension to 
El-Zahar and Sauer~\cite{el-zahar-sauer}'s proof of 
$K_3$ is multiplicative, the following was proved.

\begin{theorem}[Hell, Zhou, and Zhu 1994~\cite{hell_oriented_multiplicativity}]\label{th simple cycles}
    All cycles are multiplicative. 
\end{theorem}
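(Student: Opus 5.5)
We need to show that for every cycle $C_n$, whenever $G \times H \to C_n$ we have $G \to C_n$ or $H \to C_n$. We split according to the parity of $n$.

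\textbf{Even cycles.} Every even cycle is homomorphically equivalent to $K_2$. Multiplicativity is invariant under homomorphic equivalence, since $X \to C_n$ iff $X \to K_2$. So this case reduces to the multiplicativity of $K_2$ from \cref{th kn-multiplicative}. Concretely, if neither $G$ nor $H$ is bipartite, take odd cycles $C \subseteq G$ and $C' \subseteq H$. Then $G \times H$ contains a closed walk of odd length $\operatorname{lcm}(|C|,|C'|)$, so it is not bipartite.

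\textbf{Odd cycles.} For $n=3$ the statement is again \cref{th kn-multiplicative}, so assume $n = 2k+1 \ge 5$. Suppose $G \times H \to C_{2k+1}$ while $G \not\to C_{2k+1}$ and $H \not\to C_{2k+1}$. We may assume $G$ and $H$ are connected, by passing to components. Neither is bipartite, since bipartite graphs map to $C_{2k+1}$. Following Häggkvist--Hell--Miller--Neumann-Lara, I would use the exponential graph $K = C_{2k+1}^{G}$. Its vertices are all maps $f\colon V(G) \to V(C_{2k+1})$, and $f \sim g$ when $f(x)g(y) \in E(C_{2k+1})$ for every edge $xy \in E(G)$. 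The key property is the adjunction: $G \times H \to C_{2k+1}$ iff $H \to K$. It therefore suffices to show that whenever $G \not\to C_{2k+1}$, every component of $K$ that contains an odd cycle admits a homomorphism to $C_{2k+1}$.

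The steps, in order, are as follows.

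\begin{enumerate}
\item Take an odd cycle $W$ in $H$ of minimal length. Its image in $K$ is a closed walk of odd length $\ell$. Since $H \not\to C_{2k+1}$, we may take $\ell \le 2k-1$. Indeed, if $H$ had odd girth at least $2k+1$, it is not automatic that $H \to C_{2k+1}$, so this needs care. The cleaner route is to work with a \emph{shortest odd closed walk} in $K$ rather than in $H$.
\item Show that any odd closed walk in $K$ of length $\ell < 2k+1$, together with the fact that every vertex of $G$ lies in an odd cycle, forces $G \to C_\ell$ or $G \to C_{2k+1}$. The argument moves along the walk: consecutive maps $f_0, f_1, \dots, f_\ell = f_0$ define $\phi\colon V(G) \to \mathbb{Z}_{2k+1}$ via winding numbers. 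Along an edge of $G$, two adjacent functions of $K$ shift values by $\pm 1$. Summing around the odd walk then yields a constant map, or a map of $G$ into a short cycle, which maps to $C_{2k+1}$.
\item Conclude that the image of $H$ lies in a subgraph of $K$ that is bipartite or has odd girth at least $2k+1$, and construct the homomorphism to $C_{2k+1}$ directly. Define $h(f)$ as the value $f(x_0)$ at a fixed vertex $x_0$ on a shortest odd cycle of $G$, corrected by a parity/winding term. Then check adjacency is preserved, using that $f \sim g$ forces $f(x_0)$ and $g(x_0)$ to differ by exactly one step whenever $x_0$ has a neighbour.
\end{enumerate}

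\textbf{Main obstacle.} I expect step 3 to be the hardest: showing that the map $f \mapsto$ (value at a chosen vertex, adjusted) is a homomorphism $K' \to C_{2k+1}$. The difficulty is that $f(x_0)$ and $g(x_0)$ for adjacent $f,g$ need not be adjacent, since $x_0 x_0$ is not an edge. My first attempt would be a topological/winding-number argument, as in Hell--Zhou--Zhu. Assign to each $f$ in the relevant component the "degree" of $f$ restricted to a fixed odd cycle of $G$. Show that this degree is constant on components of $K$ containing odd cycles. Then use it to select a canonical neighbour value, which gives the required map.
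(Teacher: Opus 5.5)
The paper gives no proof of this theorem; it only cites Hell, Zhou and Zhu. Judged on its own, your argument is fine for even cycles and for $C_3$. The reduction for $n=2k+1\ge 5$ is also correct: it suffices to show that $C_{2k+1}^{G}\to C_{2k+1}$ whenever $G$ is connected and $G\not\to C_{2k+1}$. But for $k\ge 2$ none of Steps 1--3 establishes this, so there is a genuine gap exactly where the content of the theorem lies.

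\textbf{Steps 1 and 2.} Step 1 rests on an inference you yourself retract. Step 2 aims at the wrong target: for odd $\ell<2k+1$ we have $C_\ell\not\to C_{2k+1}$, so ``$G\to C_\ell$'' gives no map to $C_{2k+1}$. Moreover, an odd closed walk of length $\ell$ in $C_{2k+1}^G$ means $C_\ell\times G\to C_{2k+1}$. The odd girth of $C_\ell\times G$ is $\max(\ell,\mathrm{og}(G))$, where $\mathrm{og}$ denotes odd girth. Hence such short walks cannot exist when $\mathrm{og}(G)<2k+1$. When $\mathrm{og}(G)\ge 2k+1$, deriving $G\to C_{2k+1}$ from $C_\ell\times G\to C_{2k+1}$ is itself a special case of the theorem you are proving, and Step 2 does not address it.

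\textbf{Step 3.} It relies on the claim that $f\sim g$ forces $f(x_0)$ and $g(x_0)$ to be adjacent, and this is false even for non-bipartite $G$. Take $V(K_3)=\{a,b,c\}$ in $C_5^{K_3}$, with $f=(0,2,0)$ and $g=(1,4,1)$. All six pairs $f(x)g(y)$ with $x\ne y$ are edges of $C_5$, so $f\sim g$. Yet $f(b)=2$ and $g(b)=4$ are not adjacent, as your ``main obstacle'' paragraph concedes.

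Your proposed repair, the ``degree of $f$ on a fixed odd cycle of $G$'', is undefined. A vertex $f$ of the exponential graph is an arbitrary map, not a homomorphism, so its restriction to a cycle has no winding number. Winding numbers exist only for edges $fg$: such an edge gives a homomorphism $K_2\times G\to C_{2k+1}$, and $K_2\times C_m\cong C_{2m}$ for odd $m$. You would still have to prove these winding numbers are consistent across the relevant component and turn them into a homomorphism to $C_{2k+1}$. That rigidity argument is the missing idea.

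Within the paper's own list of cited results, the quickest correct route is Theorem~\ref{th sparse}. Every edge of a cycle lies in at most one $4$-cycle, so every cycle is multiplicative.
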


A \textit{circular clique} $K_{{n/k}}$ is a graph 
with vertex set 
$\mathbb{Z}/n\mathbb{Z} = \{0, 1, \cdots, n-1\}$ where 
two vertices $i$ and $j$ are adjacent if 
$|i-j| \geq k~(\bmod~n)$. 
Observe  that, $K_{2k+1/k}$ is 
nothing but the odd cycle on $2k+1$ vertices. As a generalization of Theorem~\ref{th simple cycles}, Tardif~\cite{tardiff-circ-cliq} proved the following. 

\begin{theorem}[Tardif 2005~\cite{tardiff-circ-cliq}]\label{th circular clique}
    The circular clique $K_{{n/k}}$ is multiplicative for all
    ${n/k} \in [2, 4)$. 
\end{theorem}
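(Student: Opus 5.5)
We follow Tardif's strategy, which is the route through exponential graphs used by El-Zahar and Sauer for $K_3$ and by Häggkvist, Hell, Miller and Neumann-Lara for odd cycles. Recall the exponential graph $K^G$ for a fixed target $K$. Its vertices are all maps $f:V(G)\to V(K)$, and $f\sim g$ whenever $f(u)g(v)\in E(K)$ for every edge $uv\in E(G)$. The standard adjunction gives $G\times H\to K$ if and only if $H\to K^G$. Multiplicativity of $K=K_{n/k}$ is therefore equivalent to the statement: for every $G$ with $G\not\to K$, the graph $K^G$ admits a homomorphism to $K$. So we fix such a $G$, which we may take connected, and look for a homomorphism $K^G\to K_{n/k}$.

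The first step is to set up the reduction to ``constant functions''. The constant maps $c_i\equiv i$ span a copy of $K$ inside $K^G$, because $G$ has an edge. We want a retraction-like map $\Phi:K^G\to K$. The natural candidate sends $f$ to a value read off from $f$ along a structure of $G$ that witnesses $G\not\to K_{n/k}$.

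For $n/k\in[2,4)$ we use the circular description. Here $G\to K_{n/k}$ holds if and only if $G$ admits an orientation in which every cycle $C$ satisfies $|C|/|C^+|\le n/k$ (or the dual ratio $|C|/|C^-|\le n/k$) — the Goddyn–Tarsi–Zhang characterization. Equivalently, a map $f:V(G)\to\mathbb Z_n$ is a homomorphism exactly when every edge changes the value by a ``long jump'' in $[k,n-k]$. For such an $f$ one defines the winding of $f$ along a closed walk $W$ in $G$: lift each step to the representative of the difference lying in $[k,n-k]$, sum the lifts, and divide by $n$.

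The second step uses that $n/k<4$ to define a winding-type invariant for arbitrary maps, not only homomorphisms. If $f\sim g$ in $K^G$, then $f$ and $g$ are ``close'' on each edge. Because $n/k<4$, the lifted differences can be made consistent: each step $f(u)\to g(v)$ is a long jump, and two consecutive long jumps form a short move of length at most $n-2k<2k$. This short move has a unique short representative. Consequently, on the bipartite double cover (walks of even length) the function $f$ has a well-defined ``rotation class'', and adjacent functions $f\sim g$ have rotation numbers related in a controlled way.

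Now fix a closed walk $W$ of even length in $G$ that certifies $G\not\to K_{n/k}$. Such a walk exists: $G\not\to K_{n/k}$ forces some closed walk on which any assignment of long jumps would have to wind inconsistently. We then define
\[
\Phi(f)=
\begin{cases}
f(w_0) & \text{if } f \text{ has zero winding on } W,\\
\text{a canonical value} & \text{otherwise.}
\end{cases}
\]
The third step is to verify that $\Phi$ is a homomorphism. When both $f$ and $g$ have zero winding we need $f(w_0)\Phi(g)\in E$. This should follow from transporting along $W$: the value $g(w_1)$ is a long jump from $f(w_0)$, and the zero-winding hypothesis lets us relate $g(w_1)$ back to $g(w_0)$. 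In the nonzero-winding case, the certificate property of $W$ should forbid the problematic adjacencies.

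The main obstacle is the precise invariant. In particular, we must show that a map $f$ with nonzero winding on $W$ cannot be adjacent to one that forces a conflict. This is where $n/k<4$ is essential, and where Tardif's argument uses a careful choice of $W$ (a ``tight cycle'' from the Goddyn–Tarsi–Zhang theorem). I would first try to handle the nonzero-winding case by showing that all such $f$ form an independent-free set of functions whose neighbours are again of nonzero winding, so that they can all be mapped via a single fixed homomorphism $K\to K$ composed with evaluation at $w_0$.
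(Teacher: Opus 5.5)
The paper gives no proof of this theorem; it only quotes Tardif. So I am judging your sketch on its own terms, and it has a genuine gap. Two parts are fine. The reduction is correct: it suffices to show $K_{n/k}^G\to K_{n/k}$ for every connected $G\not\to K_{n/k}$. Your arithmetic is also right: for $n<4k$, two consecutive long jumps give a displacement with a unique representative in $[-(n-2k),n-2k]$, so a non-isolated $f$ has a well-defined winding along each even closed walk. But the homomorphism $\Phi$ is never actually constructed. The two steps you mark ``should follow'' and ``should forbid'' are the entire content of the theorem, and the map you propose is in fact not a homomorphism.

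Here is a counterexample with $K=K_3=K_{3/1}$ and $G=K_4$. Fix any vertex $w_0$. Let $f$ be $0$ at $w_0$ and $1$ elsewhere, and let $g$ be $0$ at $w_0$ and $2$ elsewhere. Every edge $uv$ of $K_4$ satisfies $f(u)\neq g(v)$, so $f\sim g$ in $K_3^{K_4}$. Since $f$ takes only the values $0,1$ (and $g$ only $0,2$), the two-step displacements never wrap around $\mathbb{Z}_3$ and telescope to $0$. So both maps have the same winding as the constant maps along \emph{every} even closed walk, whatever $W$ you pick as a certificate. Your rule then gives $\Phi(f)=f(w_0)=0=g(w_0)=\Phi(g)$, but $K_3$ has no loop.

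The reason is structural. $G$ has no loop at $w_0$, so adjacency in $K^G$ imposes no relation between $f(w_0)$ and $g(w_0)$. A winding condition along a fixed walk does not create one. In this example the invariant carries no information at all: $K_4\times K_2$ is the cube, whose cycle space is generated by $4$-cycles, and a closed $4$-walk in $K_{n/k}$ with $n<4k$ always has winding $0$. Hence every adjacent pair in $K_3^{K_4}$ has winding $0$ on every closed walk.

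The other branch has the same problem. Sending all nonzero-winding functions to one canonical vertex is a homomorphism only if no two of them are adjacent in $K^G$, which you do not show. Your fallback, an endomorphism of $K$ composed with evaluation at $w_0$, inherits the defect of evaluation just described. Your claim that a single closed walk ``certifies'' $G\not\to K_{n/k}$ is also not meaningful. Since $G\not\to K_{n/k}$, there are no long-jump assignments on $G$ to be inconsistent. The obstruction is global rather than carried by winding along one walk.

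What is missing is a rule that reads $\Phi(f)$ off at a place depending on $f$, together with a proof that these choices are coherent across edges of $K^G$. For $K_3$ this is what El-Zahar and Sauer do. Since $G\not\to K_3$, each $f$ has an edge $uv$ with $f(u)=f(v)$, which forces $g(u)\neq f(u)$ for every neighbour $g$. The real work is in choosing such edges consistently. For $K_{n/k}$ with $k\ge 2$ even this local fact is lost. A non-homomorphism is only guaranteed an edge with $f(u)-f(v)\in(-k,k)$, and then $g(u)f(v)\in E(K_{n/k})$ no longer implies $g(u)f(u)\in E(K_{n/k})$. Your proposal contains no substitute for this step, and supplying one is the substance of the theorem.
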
 

Zhu~\cite{zhu1992star} had proposed a strengthening 
of the Hedetniemi's conjecture (now disproved) by replacing chromatic number 
with circular chromatic number in its popular formulation. The conjecture by Zhu~\cite{zhu1992star} is equivalent to saying all circular cliques $K_{{n/k}}$s are multiplicative. 
The above theorem was a preliminary attempt to prove the conjecture.
Since $K_{n/1} = K_n$, the conjecture by Zhu~\cite{zhu1992star}, if true, would have implied the 
Hedetniemi's conjecture as a special case. However, it is obviously false since the
Hedetniemi's conjecture has been disproved.  Still it will be interesting to characterize the circular cliques according to their multiplicativity.

The last result which improves our understanding of multiplicativity focuses on sparse graphs. 

\begin{theorem}[Tardif and Wrochna 2019~\cite{tardif-wrochna2019hedetniemi}]\label{th sparse}
    If $K$ is a graph where each of it's edge is part of at most one $4$-cycle, then $K$ is multiplicative. 
\end{theorem}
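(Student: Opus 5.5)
This is the Tardif–Wrochna theorem, cited here rather than proved in the paper. I would follow the exponential-graph and topological approach of El-Zahar–Sauer and Tardif–Wrochna.

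\textbf{Step 1: reduction to connected graphs.} It suffices to show that if $G\times H\to K$ with $G,H$ connected and $G\not\to K$, then $H\to K$. Fix a homomorphism $f:G\times H\to K$. For each $h\in V(H)$, the map $f_h=f(\cdot,h)$ lives in the exponential graph $K^G$. Adjacent vertices $h,h'$ of $H$ give adjacent functions $f_h,f_{h'}$ in $K^G$. Hence $H$ maps into the component of $K^G$ containing the image of $H$, and it is enough to show that component maps to $K$.

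\textbf{Step 2: constant-like maps.} I would show that every function in that component is ``close to constant'' in a homotopy sense. Let $\mathcal{K}$ be the $2$-dimensional complex obtained from $K$ by filling every $4$-cycle with a square. The hypothesis that each edge lies in at most one $4$-cycle makes these squares glue along edges only, without branching. This gives $\mathcal{K}$ a well-behaved universal cover and allows homotopy lifting of walks.

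For a closed walk $W$ in $G$, adjacency in $K^G$ moves the image $f_h(W)$ by ``square flips''. Hence the free homotopy class of $f_h(W)$ in $\mathcal{K}$ is constant along $H$. If all these classes were trivial, $f_h$ would lift to the universal cover. Since $G$ is connected, the lift restricted to $G$ would then yield a homomorphism $G\to K$: the lift of each vertex is a point of the simply connected cover, and one retracts using the square structure. That contradicts $G\not\to K$.

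So some closed walk $W$ of $G$ has a nontrivial class $[f_h(W)]$ for every $h$. The remaining task is to build $H\to K$ from this invariant.

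\textbf{Step 3: building $H\to K$.} Take a vertex $g_0$ on $W$ and consider $h\mapsto f(g_0,h)$. This is not a homomorphism, since $g_0$ is not adjacent to itself. The edge-in-at-most-one-$4$-cycle hypothesis should give the correction. For adjacent $h,h'$, the walks $f_h(W)$ and $f_{h'}(W)$ are related by a ``neighbouring walk'' relation, so they pass through the same non-contractible loop up to local flips. One then chooses, consistently, a vertex $\phi(h)$ such that $\phi(h)\phi(h')$ is an edge, e.g.\ by comparing positions on the reduced cyclic walks in the universal cover.

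\textbf{Main obstacle.} The hard part is Step 3. The consistent choice requires showing that the reduced (cyclically minimal) representatives of the class change by a controlled shift when moving across an edge of $H$. This reduction is exactly where uniqueness of $4$-cycles through edges is used, since squares cannot chain ambiguously. I would first try to prove it for $K$ a cycle, recovering Theorem~\ref{th simple cycles}, and then generalize via the square complex.
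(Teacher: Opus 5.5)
The paper only cites this theorem, so there is no internal proof to compare with. Judged on its own, your proposal has a genuine gap, and Step~2 argues in the wrong direction. Step~1 is the standard reduction and is fine.

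First, $f_h=f(\cdot,h)$ is a vertex of $K^G$ but not a homomorphism $G\to K$, because $h$ carries no loop. So for a closed walk $W$ of $G$, the sequence $f_h(W)$ is in general not a walk in $K$ and has no homotopy class. The invariant has to live on closed walks of $G\times H$. For an edge $hh'$ of $H$, use the walk $f(w_0,h),f(w_1,h'),f(w_2,h),\dots$, going twice around $W$ when $W$ is odd. Invariance when $hh'$ moves along $H$ then holds, because $4$-cycles of $G\times H$ map to filled squares or to degenerate closed $4$-walks of $K$.

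Second, and decisively, the implication ``all classes trivial $\Rightarrow G\to K$'' is false. Take $K=H=K_3$ (no $4$-cycles, so the hypothesis holds), $G$ any connected graph with $\chi(G)\ge 4$, and $f(g,h)=h$. Every $f_h$ is constant, and every correctly defined class is represented by a walk going back and forth along one edge, hence is trivial. Yet $G\not\to K$. Lifting cannot help: if $p\circ\tilde f=f$ with $p$ the covering map, then any adjacency between $\tilde f(u)$ and $\tilde f(v)$ is already an adjacency between $f(u)$ and $f(v)$. So lifting a constant map gives a constant map, and ``retracting using the square structure'' has no content. The conclusion of Step~2 therefore fails. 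The situation you rule out does occur; it is the typical one when $f$ factors through the projection to $H$, and a homomorphism $H\to K$ must still be produced there.

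Third, Step~3, where the proof would actually happen, is a placeholder. $\phi(h)$ is never defined. ``Comparing positions on the reduced cyclic walks'' is not well defined either, because a cyclically reduced walk has no distinguished base point, and nothing in the proposal controls how such a position moves across an edge of $H$.

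The hypothesis is also never used in a checkable way. Every $2$-complex has a universal cover with homotopy lifting, so that is not what ``each edge lies in at most one $4$-cycle'' buys. What it does give is that the squares of $\mathcal K$ are pairwise edge-disjoint, so each has a free edge. Hence $\mathcal K$ collapses onto a graph, its fundamental group is free, and its universal cover is contractible, with $1$-skeleton a cactus whose blocks are single edges and $4$-cycles.

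A proof has to exploit structure of this kind to make a canonical, edge-respecting choice of $\phi(h)$, and to decide which factor receives the homomorphism. As written, the proposal names plausible tools (exponential graph, square complex, universal cover) but establishes none of the steps that carry the theorem.
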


Notice that, even though Theorems~\ref{th kn-multiplicative}, \ref{th simple cycles}, \ref{th circular clique}, and~\ref{th sparse} helped us characterize many graphs with respect to 
multiplicativity, we are still very far from a complete answer to Question~\ref{ques multiplicative}. 

\medskip

In this article, we address the analogue of Question~\ref{ques multiplicative} for directed graphs with respect to the ordinary homomorphism, and also a recent variant called the pushable homomorphism. However, to express our contributions and their proofs, we need a bit of preliminaries including some basics 
of category theory. Therefore, we would like to present the organization of the article here, and defer further motivation and context from the study of directed graphs to the later sections.

\medskip

\noindent \textbf{Organization}
\begin{itemize}
    \item In Section~\ref{sec preliminaries of category theory}, we present the preliminaries of category theory. 

    \item In Section~\ref{sec multiplicativity of directed graphs}, we present 
    an overview of the multiplicativity research on directed graphs with respect to the ordinary homomorphism. We also discuss and present relevant known results as motivation. 

    \item In Section~\ref{sec pushable homomorphisms}, we introduce pushable homomorphism and provide a brief historical overview on its research. We also recall and  prove some basic results which, on the one hand builds the foundation for the rest of the article, and on the other hand 
    establishes a deep connection with the study of directed graphs with respect to the ordinary homomorphism. 

    \item In Section~\ref{sec pushable exponential}, we prove the existence of exponential objects in the category of pushable directed graphs and also use it as a tool for our proofs. This also positively settles an open question posed in~\cite{das2026update}.

    \item  In Section~\ref{sec pushable multiplicativity}, we completely characterize the multiplicativity of the following families of directed graphs
    with respect to pushable homomorphisms: bipartite  directed graphs, oriented cycles, transitive tournaments.

\item In Section~\ref{sec directed non-multiplicative}, 
using the connections between ordinary homomorphisms and pushable homomorphisms established in Section~\ref{sec pushable homomorphisms} and the pushable multiplicativity related results proved in Section~\ref{sec pushable multiplicativity}, we present new infinite families of non-multiplicative directed graphs.


    \item In Section~\ref{sec conclusions}, we summarize the state of the art in this domain, present some open questions,  and suggest future directions to conclude the article. 
\end{itemize}

\section{Preliminaries: Category Theory}\label{sec preliminaries of category theory}
A \textit{category} $\mathbf{C}$ is a mathematical system 
consisting of 
\begin{itemize}
    \item a class of objects denoted by $ob(\mathbf{C})$,

    \item for each pair of $X , Y \in ob(\mathbf{C})$, 
   a set of all  \textit{morphisms} of $X$ to $Y$ denoted by ${Hom}_{\mathbf{C}}(X,Y)$,

   \item for each triple of $X, Y, Z \in ob(\mathbf{C})$,
    an associative function  
   $$\circ: Hom_{\mathbf{C}}(X,Y) \times Hom_{\mathbf{C}}(Y,Z) \rightarrow Hom_{\mathbf{C}}(X,Z)$$ 
   called \textit{composition},

   \item for each $X \in ob(\mathbf{C})$, 
   an \textit{identity morphism} $id_X \in Hom_{\mathbf{C}}(X,X)$ which satisfies 
   $\circ(id_X , f) = f$ 
   and $\circ(g, id_X) = g$
   for any $f \in Hom_{\mathbf{C}}(Y,X)$, 
   $g \in Hom_{\mathbf{C}}(X,Z)$, and  $Y, Z \in ob(\mathbf{C})$.   
\end{itemize}
Usually, we will use the notation 
$f: X \to Y$ instead of $f \in Hom_{\mathbf{C}}(X,Y)$ to denote a morphism. In fact, the notation $X \to Y$ will represent the fact that there exists a morphism of $X$ to $Y$, that is, $Hom_{\mathbf{C}}(X,Y) \neq \emptyset$. Moreover, we will use the notation $f \circ g$ instead of $\circ(f,g)$ for the composition  
function. 

 \medskip

  The \textit{categorical product} of $X, Y \in ob(\mathbf{C})$  is an object $X \times Y \in ob(\mathbf{C})$ along with two morphisms 
  $p_X: X \times Y \to X$ 
  and 
  $p_Y \in X \times Y \to Y$
  called \textit{projections}
  satisfying the following: 
  for any $Z \in ob(\mathbf{C})$ with  
  $f_X: Z \to X$ 
  and 
  $f_Y: Z  \to Y$
  there exists a unique morphism 
  $g: Z \to X \times Y$
  such that 
  $f_X = p_X \circ g$ and $f_Y = p_Y \circ g$. The definition is captured by the following figure with a mandate of the diagram commuting. 

\begin{figure}[h]
    \centering
     \tikzset{
    mid-isharrow/.style={
        postaction={
            decorate,
            decoration={
                markings,
                mark=at position 0.7 with {
                    \arrow{Latex[length=2mm]}
                }
            }
        }
    }
}

\tikzset{
    endarrow/.style={
        postaction={
            decorate,
            decoration={
                markings,
                mark=at position 1 with {
                    \arrow{Latex[length=2mm]}
                }
            }
        }
    }
}

\tikzset{
  every label/.style={fill=white,inner sep=1pt}
}

\usetikzlibrary{backgrounds}

\pgfdeclarelayer{bg}
\pgfdeclarelayer{fg}
\pgfsetlayers{bg,main,fg}

\begin{tikzpicture}[scale=1,
vertex/.style={circle,fill=black,inner sep=1.5pt},
every label/.style={font=\small, fill=white!50, inner sep=1pt},
label distance=4pt]

    \begin{scope}[xshift=0cm]
        
        \node (x1) at (0,0) {$X$};
        \node (x1x2) at (2,0) {$X\times Y$};
        \node (x2) at (4,0) {$Y$};
        \node (y) at (2,1.5) {$Z$};

        \draw[endarrow] (x1x2)--node[left,below]{$p_X$}(x1);
        \draw[endarrow] (x1x2)--node[left,below]{$p_Y$}(x2);
        \draw[endarrow] (y)--node[left,yshift=0.5em]{$f_X$}(x1);
        \draw[endarrow,dashed] (y)--node[left]{$g$}(x1x2);
        \draw[endarrow] (y)--node[right,yshift=0.5em]{$f_Y$}(x2);

    \end{scope}

\end{tikzpicture}
    \end{figure}


The \textit{exponential object} of 
$X \in ob(\mathbf{C})$ to the 
$Y \in ob(\mathbf{C})$ is an object 
$X^Y$ along with a morphism 
$\epsilon: X^Y \times Y \to X$ called \textit{evaluation} satisfying the following:
for any $Z \in ob(\mathbf{C})$ and any morphism 
$g: Z \times Y \to X$ there exists a unique morphism $\lambda_g : Z \to X^Y$ such that 
$g = \epsilon \circ (\lambda_g \times 1_Y)$, 
where $\lambda_g \times 1_Y: Z \times Y \to X^Y \times Y$ is given by $(\lambda_g \times 1_Y)(z,y) = (\lambda_g(z), y)$ for all $ y \in Y,z \in Z$. 
The definition is captured by the following figure with a mandate of the diagram commuting. 

\begin{figure}[h]
    \centering
    \tikzset{
    mid-isharrow/.style={
        postaction={
            decorate,
            decoration={
                markings,
                mark=at position 0.7 with {
                    \arrow{Latex[length=2mm]}
                }
            }
        }
    }
}

\tikzset{
    endarrow/.style={
        postaction={
            decorate,
            decoration={
                markings,
                mark=at position 1 with {
                    \arrow{Latex[length=2mm]}
                }
            }
        }
    }
}

\tikzset{
  every label/.style={fill=white,inner sep=1pt}
}

\usetikzlibrary{backgrounds}

\pgfdeclarelayer{bg}
\pgfdeclarelayer{fg}
\pgfsetlayers{bg,main,fg}

\begin{tikzpicture}[scale=1,
vertex/.style={circle,fill=black,inner sep=1.5pt},
every label/.style={font=\small, fill=white!50, inner sep=1pt},
label distance=4pt]

    \begin{scope}[xshift=0cm]
        
        \node (zy) at (0,0) {$X^Y$};
        \node (x) at (0,2) {$Z$};

        \draw[endarrow,dashed] (x)--node[left]{$\lambda_g$}(zy);     
    
    \end{scope}

    \begin{scope}[xshift=3cm]

        \node (zy) at (0,0) {$X^Y\times Y$};
        \node (xy) at (0,2) {$Z\times Y$};
        \node (z) at (3,0) {$X$};

        \draw[endarrow,dashed] (xy)--node[left]{$\lambda_g\times 1_Y$}(zy);
        \draw[endarrow] (xy)--node[above]{$g$}(z);
        \draw[endarrow] (zy)--node[left,below]{$\epsilon$}(z);
        
    \end{scope}

\end{tikzpicture}
\end{figure}


\begin{remark}
    The set of all graphs  together with the notion of 
homomorphism forms the category of undirected graphs, denoted by \textbf{Und}. The earlier defined 
$G \times H$ actually satisfies the 
universal properties 
 of categorical products given in the above definition (see~\cite{hell-nesetril-book} for a proof). 
Thus, $G \times H$ is appropriately called  the categorical product of $G$ and $H$.
\end{remark}

\section{Multiplicativity of directed graphs}\label{sec multiplicativity of directed graphs}
A \textit{directed graph}  $\overrightarrow{G}$ consists of a vertex set $V(\overrightarrow{G})$ and an arc set $A(\overrightarrow{G})$, where each arc is an ordered pair of vertices. An arc $uv$ is directed from $u$ to $v$, and we write $uv \in A(\overrightarrow{G})$.
A \textit{bidirected arc} refers two arcs $uv$ and $vu$ in opposite directions between two vertices $u$ and $v$.
For a vertex $v \in V(\overrightarrow{G})$, the \textit{out-neighborhood} of $v$ is 
$N^+(v)=\{u \mid vu \in A(\overrightarrow{G})\}$, and the \textit{in-neighborhood} of $v$ is 
$N^-(v)=\{u \mid uv \in A(\overrightarrow{G})\}$. The \textit{out-degree} and \textit{in-degree} of $v$ are defined by $d^+(v)=|N^+(v)|$ and $d^-(v)=|N^-(v)|$, respectively.
An \textit{oriented graph} is a directed graph obtained from a simple undirected graph by assigning a direction to each edge; equivalently, it is a directed graph containing no loops or bidirected arcs.

Let $\overrightarrow{G}$ and $\overrightarrow{H}$ be directed graphs. 
A \textit{homomorphism} of $\overrightarrow{G}$
to $\overrightarrow{H}$ is a vertex mapping $f: V(\overrightarrow{G}) \rightarrow V(\overrightarrow{H})$
such that $f(u)f(v)$ is an arc  of $\overrightarrow{H}$ whenever $uv$ is an arc  of $\overrightarrow{G}$. 
Additionally, if a homomorphsim $f$ preserves non-adjacencies, that is, 
$f(u),f(v)$ are non-adjacent in $\overrightarrow{H}$ 
whenever $u, v$ are non-adjacent in $\overrightarrow{G}$, then $f$ is an \textit{isomorphism}.  
We write $\overrightarrow{G} \rightarrow 
\overrightarrow{H}$ to indicate that 
$\overrightarrow{G}$ admits a homomorphism 
to $\overrightarrow{H}$.
If $\overrightarrow{G} \rightarrow \overrightarrow{H}$ and $\overrightarrow{H} \rightarrow \overrightarrow{G}$, then $\overrightarrow{G}$ and $\overrightarrow{H}$ are \textit{homomorphically equivalent}.
The set of all directed graphs endowed with 
the above defined homomorphism forms the 
\textit{category of directed graphs}, denoted by \textit{\textbf{Dir}}.

It is known~\cite{hell-nesetril-book} that  the category of directed graphs contains the \textit{categorical product} of any two directed graphs 
$\overrightarrow{G}$ and $\overrightarrow{H}$,  
denoted by $\overrightarrow{G}\times \overrightarrow{H}$, 
whose structural description can be given as follows: 
 $\overrightarrow{G}\times \overrightarrow{H}$, 
is the directed graph  with vertex set $V(\overrightarrow{G}) \times V(\overrightarrow{H})$ 
in which there is an arc from the vertex $(u,v)$  to the vertex $(u',v')$ if $uu'$ is an arc in $\overrightarrow{G}$ and $vv'$ is an arc  in $\overrightarrow{H}$. 
Naturally, a directed graph $\overrightarrow{K}$ is  \textit{multiplicative} if $\overrightarrow{G} \times \overrightarrow{H} \rightarrow \overrightarrow{K}$ implies $\overrightarrow{G} \rightarrow \overrightarrow{K}$ or $\overrightarrow{H} \rightarrow \overrightarrow{K}$. 
A directed graph $\overrightarrow{K}$ is \textit{non-multiplicative} if 
it is not multiplicative. 

Since categorical product exists for directed graphs, 
the analogue of 
Question~\ref{ques multiplicative} becomes relevant.

\begin{question}\label{ques digraph multiplicative}
     Can you characterize all multiplicative directed graphs? 
\end{question}

There has been some progress in identifying 
multiplicative and non-multiplicative 
directed graphs which we are going to 
summarize in the following. Notice that, since all bipartite graphs (with at least one edge) are 
homomorphically equivalent to $K_2$, they are multiplicative by Theorem~\ref{th kn-multiplicative}. In contrast, the oriented paths have infinitely many homomorphically equivalent classes, and therefore, their multiplicativity characterization requires more attention.  The following is what is known about them.  

\begin{theorem}[Ne\v{s}et\v{r}il and Pultr 1978~\cite{nevsetvril-pultr}]
    Any oriented path that is homomorphically equivalent to a directed path is multiplicative. 
\end{theorem}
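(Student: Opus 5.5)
The plan is to reduce to the directed path itself and then characterize the directed graphs that map to it. Write $\overrightarrow{P}_n$ for the directed path with $n$ arcs. Multiplicativity is invariant under homomorphic equivalence: if $\overrightarrow{K}\leftrightarrow\overrightarrow{K}'$, then $X\to\overrightarrow{K}$ if and only if $X\to\overrightarrow{K}'$ for every $X$, since homomorphisms compose. So it suffices to show that every $\overrightarrow{P}_n$ is multiplicative.

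\emph{Step 1 (obstruction).} For an oriented walk $W$ in $\overrightarrow{G}$, define its \emph{net length} as the number of forward arcs minus the number of backward arcs. I will prove that $\overrightarrow{G}\to\overrightarrow{P}_n$ if and only if every oriented walk in $\overrightarrow{G}$ has net length at most $n$.

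Necessity holds because a homomorphism to $\overrightarrow{P}_n$ sends a walk to a walk, which preserves net length, and net lengths in $\overrightarrow{P}_n$ are bounded by $n$.

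For sufficiency, first note that if some closed walk had nonzero net length, traversing it repeatedly (in one direction or the other) would produce unbounded net length. Hence every closed walk has net length $0$. Now fix a connected component and set $h(v)$ to be the maximum net length of a walk ending at $v$; this is finite, at least $0$ (trivial walk), and at most $n$. For an arc $uv$ we have $h(v)\ge h(u)+1$. Conversely, a walk ending at $v$ of net length $h(v)$, followed by the reversed arc $vu$, gives $h(u)\ge h(v)-1$. So $h(v)=h(u)+1$, and $h$ is a homomorphism to $\overrightarrow{P}_n$.

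\emph{Step 2 (reduction to walks of a standard shape).} Suppose $\overrightarrow{G}\not\to\overrightarrow{P}_n$ and $\overrightarrow{H}\not\to\overrightarrow{P}_n$. By Step 1, each has a walk of net length at least $n+1$. Choose a vertex of minimum height on it and cut out a subwalk from there to the first time the height is $n+1$ higher. This gives walks $W_1$ in $\overrightarrow{G}$ and $W_2$ in $\overrightarrow{H}$ whose height profiles (the running net length) are $\pm1$-step functions $h_1,h_2$. Each starts at $0$, ends at $N=n+1$, stays in $[0,N]$ throughout, and attains $0$ only at the start and $N$ only at the end (after trimming further if needed).

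\emph{Step 3 (synchronization).} This is the main step, where I expect the real obstacle. I need a walk in $\overrightarrow{G}\times\overrightarrow{H}$ of net length $N$; by Step 1 this gives $\overrightarrow{G}\times\overrightarrow{H}\not\to\overrightarrow{P}_n$, which is the required contrapositive. Such a walk amounts to two reparametrizations of $W_1$ and $W_2$ that move in lockstep with equal heights. The reparametrizations go forward or backward along each walk one step at a time, and both must take the same direction in height at each step, so that each product step is an arc traversed consistently.

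This is exactly the discrete mountain climbers' theorem. Consider the graph on pairs $(i,j)$ with $h_1(i)=h_2(j)$, where $(i,j)$ is joined to $(i\pm1,j\pm1)$ whenever both heights move by the same $\pm1$. Every vertex other than $(0,0)$ and $(\text{end},\text{end})$ has even degree, while these two corners have degree $1$. By the handshake lemma, the component containing $(0,0)$ must contain the other odd-degree vertex $(\text{end},\text{end})$. A path between them gives a walk in the product whose height profile runs from $0$ to $N$, hence net length $N=n+1$.

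The care needed is in the degree count at local extrema and at plateaus of equal height. The trimming in Step 2 is what makes the two corners the only odd-degree vertices; that is the point I would verify first.
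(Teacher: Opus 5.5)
Your proof is correct. The paper gives no proof of this statement: it is quoted from Ne\v{s}et\v{r}il and Pultr purely as background, so there is no internal argument to compare against. Your route has three parts. You reduce to $\overrightarrow{P}_n$ via homomorphic equivalence. You characterize $\overrightarrow{G}\to\overrightarrow{P}_n$ by the absence of oriented walks of net length $n+1$. You then synchronize two trimmed walks inside $\overrightarrow{G}\times\overrightarrow{H}$ using the mountain-climbers parity argument. This is the standard proof that directed paths are multiplicative. Its core is the ``zig-zag'' fact that two minimal oriented paths of the same net length have a common oriented-path preimage of that net length.

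The verification you flagged does go through.

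\begin{itemize}
\item Every step changes height by exactly $\pm 1$, so there are no plateaus to worry about.
\item Your trimming ensures that height $0$ occurs only at $(0,0)$ and height $N$ only at the pair of final indices. Hence every other vertex $(i,j)$ of the pair graph has both $i$ and $j$ interior.
\item For such a vertex, let $a_+$ and $a_-$ be the numbers of index-neighbours of $i$ at heights $h_1(i)+1$ and $h_1(i)-1$, and define $b_+,b_-$ likewise for $j$. Then $(a_+,a_-)$ and $(b_+,b_-)$ both lie in $\{(1,1),(2,0),(0,2)\}$. So the degree $a_+b_++a_-b_-$ is $0$, $2$ or $4$, hence even.
\item At the two corners, both index pairs are $(1,0)$, respectively $(0,1)$, so each corner has degree $1$.
\end{itemize}

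An edge of the pair graph with common height change $+1$ (respectively $-1$) is an arc of $\overrightarrow{G}\times\overrightarrow{H}$ traversed forward (respectively backward). So the path the handshake lemma provides is indeed an oriented walk of net length $N=n+1$. Loops and bidirected arcs cause no trouble, since everything is phrased in terms of arcs used by the walks.

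One small simplification: the closed-walk remark in Step 1 is unnecessary. The maximum defining $h(v)$ exists directly, because net lengths are integers bounded above by $n$ and the trivial walk gives $0$.
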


The next natural class to investigate are oriented cycles. A complete characterization 
with respect to multiplicativity is presented in Theorem~\ref{th oriented cycle}. We need some prerequisites to state it. 
Zhou~\cite{zhou92} introduced a special class of oriented cycles, called $\mathcal{C}$-cycle.
Let $\overrightarrow{B_n},\overrightarrow{S_n}$ and $\overrightarrow{T_n}$ be digraphs as depicted in Figure~\ref{fig c-cycles}. The class of $\mathcal{C}$-cycles is inductively defined as follows:
\begin{enumerate}
    \item Each $\overrightarrow{B_n}$ is a $\mathcal{C}$-cycle.

    \item Let $C$ be a $\mathcal{C}$-cycle and let $v$ be a vertex of out-degree $2$ (resp., in-degree $2$). Then there are two maximal directed paths $P,P'$ starting (resp., ending) at $v$, say of lengths $l\leq l'$. Let $m\leq l$ be an integer.
    Replace $v$ by $\overrightarrow{S_m}$ (resp., $\overrightarrow{T_m}$), identifying $a$ with  the beginning of $P$ and $b$ with the beginning of $P'$. The new digraph $C'$ is also a $\mathcal{C}$-cycle.

    \item There are no other $\mathcal{C}$-cycles.
\end{enumerate}

\begin{figure}
    \centering
    \usetikzlibrary{calc}
\tikzset{
    endarrow/.style={
        postaction={
            decorate,
            decoration={
                markings,
                mark=at position 1 with {
                    \arrow{Latex[length=2mm]}
                }
            }
        }
    }
}
\begin{tikzpicture}[scale=0.9,
vertex/.style={circle,fill=black,inner sep=1.5pt},
every label/.style={font=\small}]

\begin{scope}[xshift=0cm]

    \node[vertex,label=below:$a$] (a) at (0,0) {};
    \node[vertex,label=below:$b$] (b) at (3,0) {};
    \node[vertex,label=above:$u_n$] (un) at (1.5,2.5) {};
    \node[vertex,label=left:$u_1$] (u1) at (0.5,0.84) {};
    \node[vertex,label=left:$u_{n-1}$] (un-1) at (1,1.67) {};
    \node[vertex,label=right:$v_1$] (v1) at (2.5,0.84) {};
    \node[vertex,label=right:$v_{n-1}$] (vn-1) at (2,1.67) {};
    
    \foreach \t in {0.25,0.5,0.75}
    {
        \fill ($(u1)!\t!(un-1)$) circle (0.6pt);
    }
    \foreach \t in {0.25,0.5,0.75}
    {
        \fill ($(v1)!\t!(vn-1)$) circle (0.6pt);
    }
    
    \draw[endarrow] (b)--(a);
    \draw[endarrow] (a)--(u1);
    \draw[endarrow] (un-1)--(un);
    \draw[endarrow] (b)--(v1);
    \draw[endarrow] (vn-1)--(un);

    \node at (1.5,-1) {$\overrightarrow{B_n}$};

\end{scope}

\begin{scope}[xshift=4cm]

    \node[vertex,label=below:$a$] (a) at (0,0) {};
    \node[vertex,label=below:$b$] (b) at (3,0) {};
    \node[vertex,label=above:$u_n$] (un) at (1.5,2.5) {};
    \node[vertex,label=left:$u_1$] (u1) at (0.5,0.84) {};
    \node[vertex,label=left:$u_{n-1}$] (un-1) at (1,1.67) {};
    \node[vertex,label=right:$v_1$] (v1) at (2.5,0.84) {};
    \node[vertex,label=right:$v_{n-1}$] (vn-1) at (2,1.67) {};
    
    \foreach \t in {0.25,0.5,0.75}
    {
        \fill ($(u1)!\t!(un-1)$) circle (0.6pt);
    }
    \foreach \t in {0.25,0.5,0.75}
    {
        \fill ($(v1)!\t!(vn-1)$) circle (0.6pt);
    }
    
    \draw[endarrow] (a)--(u1);
    \draw[endarrow] (un-1)--(un);
    \draw[endarrow] (b)--(v1);
    \draw[endarrow] (vn-1)--(un);
    
    \node at (1.5,-1) {$\overrightarrow{S_n}$};

\end{scope}

\begin{scope}[xshift=8cm]

\node[vertex,label=above:$a$] (a) at (0,2.5) {};
\node[vertex,label=above:$b$] (b) at (3,2.5) {};
\node[vertex,label=below:$u_n$] (un) at (1.5,0) {};

\node[vertex,label=left:$u_1$] (u1) at (0.5,1.66) {};
\node[vertex,label=left:$u_{n-1}$] (unm) at (1,0.83) {};

\node[vertex,label=right:$v_1$] (v1) at (2.5,1.66) {};
\node[vertex,label=right:$v_{n-1}$] (vnm) at (2,0.83) {};

\foreach \t in {0.25,0.5,0.75}
{
    \fill ($(u1)!\t!(unm)$) circle (0.6pt);
}
\foreach \t in {0.25,0.5,0.75}
{
    \fill ($(v1)!\t!(vnm)$) circle (0.6pt);
}

\draw[endarrow] (u1)--(a);
\draw[endarrow] (un)--(unm);
\draw[endarrow] (v1)--(b);
\draw[endarrow] (un)--(vnm);

\node at (1.5,-1) {$\overrightarrow{T_n}$};

\end{scope}

\end{tikzpicture}
    \caption{The digraphs $\overrightarrow{B_n},\overrightarrow{S_n}$ and $\overrightarrow{T_n}$.}
    \label{fig c-cycles}
\end{figure}

\begin{theorem}[Hell, Zhou and Zhu 1994~\cite{hell_oriented_multiplicativity}, H\"aggkvist, Hell, Miller and Neumann Lara 1988~\cite{haggkvist_multiplicative_1988}]\label{th oriented cycle}
    An oriented cycle $\overrightarrow{C}$ is multiplicative if and only if it satisfies one of the three following conditions:
    \begin{enumerate}[(i)]

        \item $\overrightarrow{C}$ is homomorphically equivalent to a directed path,
    
        \item $\overrightarrow{C}$ is homomorphically equivalent to a directed cycle of a prime power length,

        \item $\overrightarrow{C}$ is a $\mathcal{C}$-cycle. 
    \end{enumerate}
\end{theorem}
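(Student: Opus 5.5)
This theorem is quoted from the literature (Hell, Zhou and Zhu; H\"aggkvist, Hell, Miller and Neumann-Lara), so the plan is to assemble the argument from the known ingredients. The guiding idea is that an oriented cycle $\overrightarrow{C}$ is determined up to homomorphic equivalence by its \emph{net length} $|\#\text{forward arcs} - \#\text{backward arcs}|$ together with its level structure. The proof then splits along the two values of the net length.

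\textbf{Net length zero: reduction to levels.} If the net length is $0$, then $\overrightarrow{C}$ is balanced. Every homomorphism then respects levels (heights), and $\overrightarrow{C}$ maps to a directed path. If $\overrightarrow{C}$ retracts onto a directed path, we are in case (i) and multiplicativity follows from the theorem of Ne\v{s}et\v{r}il and Pultr. Otherwise $\overrightarrow{C}$ is a core cycle of net length zero. Here I would follow Zhou's approach:
\begin{itemize}
\item[(a)] For sufficiency, show that $\mathcal{C}$-cycles are multiplicative by induction on the inductive construction. The base $\overrightarrow{B_n}$ and each insertion of $\overrightarrow{S_m}$ or $\overrightarrow{T_m}$ preserve a ``unique path-covering'' property, so a map $\overrightarrow{G}\times\overrightarrow{H}\to\overrightarrow{C}$ can be analysed level by level. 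One argues, as El-Zahar and Sauer do, that one factor already ``chooses'' the cycle structure consistently.
\item[(b)] For necessity, for each balanced core cycle that is not a $\mathcal{C}$-cycle, exhibit two oriented paths or trees $\overrightarrow{G},\overrightarrow{H}$ that do not map to $\overrightarrow{C}$ while $\overrightarrow{G}\times\overrightarrow{H}$ does. These come from two ``incompatible'' local configurations at a vertex of out-degree or in-degree $2$ violating the condition $m\le l$.
\end{itemize}

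\textbf{Nonzero net length: reduction to directed cycles.} If the net length is $k>0$, then $\overrightarrow{C}$ maps onto $\overrightarrow{C_k}$. If $\overrightarrow{C}$ is homomorphically equivalent to $\overrightarrow{C_k}$, use the characterization of H\"aggkvist et al. The key step is the identity $\overrightarrow{C_a}\times\overrightarrow{C_b}\to\overrightarrow{C_{\gcd}}$, or more precisely that the product maps into $\overrightarrow{C_{\mathrm{lcm}}}$. With $k=ab$ and $\gcd(a,b)=1$, $a,b>1$, the product $\overrightarrow{C_a}\times\overrightarrow{C_b}\cong\overrightarrow{C_{ab}}$ maps to $\overrightarrow{C_k}$ although neither factor does, which proves non-multiplicativity. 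For prime powers $k=p^r$, show that $\overrightarrow{G}\to\overrightarrow{C_k}$ iff every cycle of $\overrightarrow{G}$ has net length divisible by $k$. A product cycle projects to cycles in both factors with the same net length, and for prime powers a $p$-adic valuation argument gives that one factor already has all net lengths divisible by $p^r$. Finally, if $\overrightarrow{C}$ has net length $k>0$ but is not equivalent to $\overrightarrow{C_k}$, and also not to a path or a $\mathcal{C}$-cycle, produce a counterexample in the spirit of (b) by combining $\overrightarrow{C_k}$-like factors with level obstructions.

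\textbf{Main obstacle.} I expect the necessity direction to be the hardest: showing that every oriented cycle outside (i)--(iii) is non-multiplicative. This requires a case analysis of the cores of oriented cycles and explicit constructions of pairs $\overrightarrow{G},\overrightarrow{H}$. My first attempt would be to use the exponential-object viewpoint and show $\overrightarrow{C}^{\overrightarrow{G}}$ fails to have the required fixed-point-type property. Failing that, I would build the counterexamples directly as zig-zag paths that encode the two minimal directed paths at an offending vertex.
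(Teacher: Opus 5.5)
The paper gives no proof of this statement; it is quoted from Hell--Zhou--Zhu and H\"aggkvist--Hell--Miller--Neumann-Lara. Your outline therefore has to stand on its own, and its case split has a structural error.

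By the definition, $\overrightarrow{B_n}$ is the union of two internally disjoint directed paths from $b$ to $u_n$, of lengths $n+1$ and $n$, so its net length is $1$. Replacing a vertex by $\overrightarrow{S_m}$ or $\overrightarrow{T_m}$ inserts one directed path of length $m$ traversed forward and one traversed backward, so the net length is unchanged. Hence \emph{every $\mathcal{C}$-cycle has net length exactly one}, and none is balanced. This has three consequences:
\begin{itemize}
\item Your step (a) places $\mathcal{C}$-cycles in the net-length-zero branch and analyses maps $\overrightarrow{G}\times\overrightarrow{H}\to\overrightarrow{C}$ level by level. That cannot be set up: a digraph mapping to a $\mathcal{C}$-cycle need not be balanced (the cycle itself is one), so there is no height function.
\item The obstructions in (b), built from violations of $m\le l$, concern the net-length-one construction and say nothing about balanced cycles. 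In the balanced branch the correct target is that a balanced oriented cycle is multiplicative \emph{only} if it satisfies (i), since (ii) and (iii) force nonzero net length.
\item In your nonzero branch the case $k=1$ is empty, because an oriented cycle is never homomorphically equivalent to $\overrightarrow{C_1}$, a loop.
\end{itemize}
So the net-length-one case is handled only by your placeholder sentence beginning ``Finally''. That is exactly where the $\mathcal{C}$-cycles live and where the substance of Hell--Zhou--Zhu lies: multiplicativity of $\mathcal{C}$-cycles, which in the literature rests on Zhou's description of the digraphs mapping to them rather than on levels, and non-multiplicativity of every other cycle of net length one.

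There are two further gaps. First, in the balanced branch, ``otherwise $\overrightarrow{C}$ is a core cycle'' is false. The core may be an oriented path not equivalent to a directed path; for example, glue two internally disjoint copies of such a path at their ends. Ruling this case out needs the converse of the Ne\v{s}et\v{r}il--Pultr result quoted in the paper, namely that such paths are non-multiplicative (Zhou's characterization of multiplicative oriented paths).

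Second, in the prime-power step your projection remark runs in the trivial direction. It only shows that net lengths of closed walks in $\overrightarrow{G}\times\overrightarrow{H}$ occur in both factors. Multiplicativity of $\overrightarrow{C_{p^r}}$ needs the converse construction: from closed walks in $\overrightarrow{G}$ and $\overrightarrow{H}$ of net lengths $a,b$, neither divisible by $p^r$, build a closed walk in the product whose net length is not divisible by $p^r$. The two walks generally have different sequences of forward and backward arcs, so they cannot simply be paired. This synchronisation is the real content of the H\"aggkvist et al.\ theorem, and the observation $v_p(\mathrm{lcm}(a,b))=\max(v_p(a),v_p(b))$ only helps once it is done.

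Likewise, non-multiplicativity of cycles of net length $k\ge 2$ that are not equivalent to $\overrightarrow{C_k}$ is left as a placeholder.
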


While multiplicativity characterization of all tournaments (natural analogues of complete graphs) are not known, we do know that all transitive tournaments are multiplicative. 

\begin{theorem}[Ne\v{s}et\v{r}il and Pultr 1978~\cite{nevsetvril-pultr}]
    All transitive tournaments are multiplicative. 
\end{theorem}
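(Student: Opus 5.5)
The statement is that every transitive tournament $\overrightarrow{T_n}$ is multiplicative. I would work with a height-type invariant. For a directed graph $\overrightarrow{G}$, $\overrightarrow{G} \rightarrow \overrightarrow{T_n}$ holds if and only if $\overrightarrow{G}$ has no directed cycle (in particular no loop) and no directed path with $n$ arcs.

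To see this, suppose first that $\overrightarrow{G}$ is acyclic and its longest directed path has fewer than $n$ arcs. Send each vertex $v$ to $\ell(v)$, the number of arcs in a longest directed path ending at $v$. Then $\ell(v)\in\{0,\dots,n-1\}$, and every arc $uv$ satisfies $\ell(u)<\ell(v)$. So $\ell$ is a homomorphism to $\overrightarrow{T_n}$, viewed on $\{0,\dots,n-1\}$ with arcs $i\to j$ for $i<j$. Conversely, homomorphisms map directed walks to directed walks. Since $\overrightarrow{T_n}$ has no directed cycle and no directed path with $n$ arcs, $\overrightarrow{G}$ can have neither.

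Now suppose $\overrightarrow{G}\not\rightarrow \overrightarrow{T_n}$ and $\overrightarrow{H}\not\rightarrow\overrightarrow{T_n}$. By the characterization, each of them contains a directed closed walk or a directed path with $n$ arcs. In every case each contains a directed walk with exactly $n$ arcs, since a closed walk can be traversed repeatedly. Call these walks $g_0g_1\cdots g_n$ in $\overrightarrow{G}$ and $h_0h_1\cdots h_n$ in $\overrightarrow{H}$. Then $(g_0,h_0)(g_1,h_1)\cdots(g_n,h_n)$ is a directed walk with $n$ arcs in $\overrightarrow{G}\times\overrightarrow{H}$, because each consecutive pair is an arc in both coordinates.

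A homomorphism $\overrightarrow{G}\times\overrightarrow{H}\rightarrow\overrightarrow{T_n}$ would send this walk to a directed walk with $n$ arcs in $\overrightarrow{T_n}$. That walk would visit $n+1$ vertices with strictly increasing labels, which is impossible on $n$ vertices. So $\overrightarrow{G}\times\overrightarrow{H}\not\rightarrow\overrightarrow{T_n}$, and contrapositively $\overrightarrow{T_n}$ is multiplicative.

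The only step needing care is the reduction to walks rather than paths, which lets the two obstructions be synchronized of equal length in the product. This is handled by the equivalence above: $\overrightarrow{G}\rightarrow\overrightarrow{T_n}$ holds exactly when $\overrightarrow{G}$ has no directed walk with $n$ arcs. Beyond that the argument is routine.
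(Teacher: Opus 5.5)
Your proof is correct and complete. The paper itself gives no proof of this statement; it only cites Ne\v{s}et\v{r}il and Pultr, so there is no paper argument to compare against.

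What you wrote is the classical duality argument. First, $\overrightarrow{G}\rightarrow\overrightarrow{T_n}$ holds if and only if the directed path with $n$ arcs admits no homomorphism to $\overrightarrow{G}$; your height labelling $\ell$ supplies the nontrivial direction. Second, a homomorphism of a path into $\overrightarrow{G}\times\overrightarrow{H}$ is the same thing as a pair of homomorphisms into the factors, which is exactly your coordinatewise synchronization of the two walks.

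The key step is rephrasing ``directed cycle or path with $n$ arcs'' as ``walk with exactly $n$ arcs''. This makes the two obstructions identical, so they can be combined in the product, and you handle it correctly.

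The same reasoning shows that any target admitting a duality with a single obstruction is multiplicative. It is also the synchronization idea the paper uses, with l.c.m.\ of directed cycle lengths in place of equal-length walks, for the balanced half of Theorem~\ref{thm push-oriented-bipartite}. Like that argument, yours needs no exponential graphs.
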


An updated survey on multiplicativity in the categories of undirected and directed graphs is provided by Zhu~\cite{zhu2025survey}.
That means, the state of the art is far from finding the complete answer to Question~\ref{ques digraph multiplicative} as of now.
Thus, finding any new class of multiplicative or non-multiplicative directed graphs will be a worthwhile contribution in this research domain.

\section{Pushable homomorphisms: history and basic result}\label{sec pushable homomorphisms}
A particular modification via ``push'' operation on directed graphs has been popularly studied in the last few decades.
Introduced in 1972 by Mosesyan~\cite{mosesyan}, vertex pushing has been studied by Pretzel~\cite{pretzel1985,pretzel1986,pretzel1991}, Klostermeyer~\cite{klostermeyer1999pushing}, Babai and Cameron~\cite{babai2000} among others.
The concept was combined with homomorphisms  by Klostermeyer and MacGillivray~\cite{km2004}, who introduced  pushable homomorphisms and  pushable chromatic number. 
Since then, pushable homomorphisms and the pushable chromatic number have been further investigated: complexity dichotomy problems for pushable chromatic number were studied in~\cite{guegan_complexity_2015,km2004}, analogue of cliques were studied in~\cite{bensmail_oriented_2017}. 
Moreover, pushable chromatic number of various graph classes were studied, such as planar graphs and planar graphs with girth restrictions~\cite{borodin_universal_1998,das_pushable_2023,km2004,sen_homomorphisms_2017}, outerplanar graphs and outerplanar graphs with girth restrictions~\cite{km2004,sen_homomorphisms_2017}, graphs with bounded acyclic chromatic number~\cite{sen_homomorphisms_2017} and grids~\cite{bensmail_pushable_2023} among others.

To \textit{push} a vertex $v$ of a directed 
graph is to reverse the direction of the 
arcs incident to $v$. Let $\overrightarrow{G}$ and $\overrightarrow{G}'$
be two orientations
 of a graph $G$. If it is possible to obtain  $\overrightarrow{G}'$ from $\overrightarrow{G}$  through pushing a subset of vertices, then $\overrightarrow{G}$ is \textit{push equivalent} to $\overrightarrow{G}'$. 
  A \textit{pushable homomorphism} of $\overrightarrow{G}$
to $\overrightarrow{H}$ is a vertex mapping $f: V(\overrightarrow{G}) \rightarrow V(\overrightarrow{H})$
such that $f$ is a homomorphism of 
$\overrightarrow{G}'$ to $\overrightarrow{H}$, where $\overrightarrow{G}'$ is some directed graph push equivalent to $\overrightarrow{G}$. 
Additionally, if a pushable homomorphsim $f$ preserves non-adjacencies, that is, 
$f(u),f(v)$ are non-adjacent in $\overrightarrow{H}$ 
whenever $u, v$ are non-adjacent in $\overrightarrow{G}$, then $f$ is a \textit{pushable isomorphism}.  
We write $\overrightarrow{G} \xrightarrow{push} 
\overrightarrow{H}$ to indicate that 
$\overrightarrow{G}$ admits a pushable homomorphism to $\overrightarrow{H}$.
If $\overrightarrow{G} \xrightarrow{push} \overrightarrow{H}$ and $\overrightarrow{H} \xrightarrow{push} \overrightarrow{G}$, then $\overrightarrow{G}$ and $\overrightarrow{H}$ are \textit{pushably homomorphically equivalent}.
The set of all directed graphs endowed with 
the pushable homomorphism forms the 
\textit{category of pushable directed graphs}, denoted by \textit{\textbf{Push}}. 

Observe that, since pushable homomorphism 
allows modifications in a directed graph, 
it is not very clear whether a categorical product of two directed graphs always exists in the category of  pushable directed graphs 
or not. Recently~\cite{sen2026homomorphisms}, the existence of categorical products of directed graphs with respect to pushable homomorphism was proved 
as part of a generalized framework. Let us elaborate on this a bit.

A vertex $v$ of a directed graph $\overrightarrow{G}$ is \textit{push invariant} if it is not an isolated vertex, and yet, even after pushing $v$, $\overrightarrow{G}$ remains exactly the same (labeled) directed graph. Observe that a vertex $v$ can be push invariant if and only if its only adjacencies are through loop, or bidirected arcs.

\begin{figure}
    \centering
    \tikzset{
    mid-isharrow/.style={
        postaction={
            decorate,
            decoration={
                markings,
                mark=at position 0.7 with {
                    \arrow{Latex[length=2mm]}
                }
            }
        }
    }
}

\begin{tikzpicture}[scale=1,
vertex/.style={circle,fill=black,inner sep=1.5pt},
every label/.style={font=\small}]

\begin{scope}[xshift=0cm]

    \node[vertex,label=left:$u$] (u) at (0,0) {};
    \node[vertex,label=below:$v$] (v) at (2,0) {};
    \node[vertex,label=right:$w$] (w) at (4,0) {};
    
    \draw[midarrow,bend left=20] (u) to (v);
    
    \draw[midarrow,bend left=20] (v) to (u);

    \draw[midarrow] (w) -- (v);

\end{scope}

\begin{scope}[xshift=6cm]

    \node[vertex,label=left:$u$] (u) at (0,0) {};
    \node[vertex,label=below:$v$] (v) at (2,0) {};
    \node[vertex,label=right:$w$] (w) at (4,0) {};

    \node[vertex,label=above:$v*$] (v') at (2,2) {};
    \node[vertex,label=right:$w*$] (w') at (4,2) {};
    
    \draw[midarrow,bend left=20] (u) to (v);
    
    \draw[midarrow,bend left=20] (v) to (u);

    \draw[midarrow] (w) -- (v);
    
    \draw[midarrow,bend left=20] (v') to (u);
    \draw[mid-isharrow] (v')--(w);
    \draw[midarrow,bend left=20] (u) to (v');
    \draw[midarrow] (w')--(v');
    \draw[mid-isharrow] (v)--(w');

\end{scope}

\end{tikzpicture}
    \caption{A digraph $\overrightarrow{G}$ and its anti-twinned digraph $AT(\overrightarrow{G})$.}
    \label{fig anti-twin-eg}
\end{figure}

An \textit{anti-twin} of a vertex $v$ is another vertex $v^*$ satisfying $N^+(v^*) = N^-(v)$ and $N^-(v^*) = N^+(v)$.
The \textit{anti-twinned directed graph} of $\overrightarrow{G}$, denoted by $AT(\overrightarrow{G})$, is obtained by adding an anti-twin $v^*$ to each vertex $v$, that is not a push invariant vertex, to the graph $\overrightarrow{G}$. 
As a convention, we write $V(AT(\overrightarrow{G})) = V(\overrightarrow{G}) 
\cup V^*(\overrightarrow{G})$, where $V(\overrightarrow{G})$ is the set of original vertices from $\overrightarrow{G}$ in $AT(\overrightarrow{G})$, and $V^*(\overrightarrow{G})$ is  the set of newly added anti-twin vertices in $AT(\overrightarrow{G})$. 
This definition is a slight generalization of the existing notion of an anti-twinned directed graph, as the earlier versions were defined primarily on oriented graphs, and the push invariant vertices were just the isolated vertices. In the earlier definition, the condition of not creating an anti-twin of the push invariant vertices was not mentioned. However,  all the results and proofs can be trivially generalized, and holds  for this extended definition of the anti-twinned directed graph. We are going to recall some of the useful properties of the anti-twinned directed graphs, and also describe the construction of the categorical product 
of directed graphs with respect to pushable homomorphisms.

\begin{theorem}[\cite{km2004}]\label{th old hom}
    Let  $\overrightarrow{G}$ and $\overrightarrow{H}$ be directed graphs. Then the following are equivalent. 
    \begin{enumerate}[(i)]
        \item $\overrightarrow{G} \xrightarrow{push} \overrightarrow{H}$,

        \item $AT(\overrightarrow{G}) \rightarrow AT(\overrightarrow{H})$,

        \item $\overrightarrow{G} \rightarrow AT(\overrightarrow{H})$.
 
    \end{enumerate}
\end{theorem}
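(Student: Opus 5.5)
The plan is to prove the cycle of implications (i) $\Rightarrow$ (iii) $\Rightarrow$ (ii) $\Rightarrow$ (i). Throughout we rely on one elementary observation about $AT(\overrightarrow{H})$. For a vertex $x$ that is not push invariant, we set $\sigma(x)=x^*$ and $\sigma(x^*)=x$; for a push invariant vertex (or an isolated one, which we also fix), we set $\sigma(x)=x$. Then $\sigma$ is an involution of $V(AT(\overrightarrow{H}))$ with the following property: for any $x,y$ and any $\epsilon,\delta \in \{0,1\}$, $xy$ is an arc iff $\sigma^\epsilon(x)\sigma^\delta(y)$ is an arc when $\epsilon=\delta$, and iff $\sigma^\epsilon(y)\sigma^\delta(x)$ is an arc when $\epsilon\neq\delta$. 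This follows directly from $N^\pm(v^*)=N^\mp(v)$. For push invariant vertices all incident adjacencies are loops or bidirected arcs, so reversing them changes nothing. One also has to check the arcs between $v^*$ and $w^*$ and the loops at $v^*$; the definition of $AT$ should be read so that $v^*w^*$ is an arc iff $vw$ is, and this is what we will verify, or fix by convention.

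\textbf{(i) $\Rightarrow$ (iii).} Let $f$ be a homomorphism of $\overrightarrow{G}'$ to $\overrightarrow{H}$, where $\overrightarrow{G}'$ is obtained from $\overrightarrow{G}$ by pushing a set $S$. Define $g(v)=\sigma(f(v))$ for $v\in S$ and $g(v)=f(v)$ otherwise. Take an arc $uv$ of $\overrightarrow{G}$. If exactly one endpoint lies in $S$, the arc appears reversed in $\overrightarrow{G}'$, so $f(v)f(u)$ is an arc, and the involution property with exactly one $\sigma$ applied gives that $g(u)g(v)$ is an arc. If both or neither endpoint lies in $S$, the arc is unchanged and $\sigma$ is applied to both endpoints or to neither. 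Since $\overrightarrow{H}\subseteq AT(\overrightarrow{H})$, $g$ is a homomorphism $\overrightarrow{G}\to AT(\overrightarrow{H})$.

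\textbf{(iii) $\Rightarrow$ (ii).} Given $g:\overrightarrow{G}\to AT(\overrightarrow{H})$, extend it by $g(v^*)=\sigma(g(v))$. An arc of $AT(\overrightarrow{G})$ involving $v^*$ corresponds, by the anti-twin definition, to a reversed arc, or to an unreversed one if both endpoints are starred. The same involution property then shows that arcs are preserved.

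\textbf{(ii) $\Rightarrow$ (i).} Restrict a homomorphism $h: AT(\overrightarrow{G})\to AT(\overrightarrow{H})$ to $V(\overrightarrow{G})$. Let $S=\{v: h(v)\in V^*(\overrightarrow{H})\}$ and put $f(v)=\sigma(h(v))$ for $v \in S$, $f(v)=h(v)$ otherwise, so that $f$ maps into $V(\overrightarrow{H})$. Pushing $S$ in $\overrightarrow{G}$ gives $\overrightarrow{G}'$, and the involution argument run in reverse shows that $f:\overrightarrow{G}'\to AT(\overrightarrow{H})$. Every image arc has both ends in $V(\overrightarrow{H})$, and arcs of $AT(\overrightarrow{H})$ between original vertices are exactly the arcs of $\overrightarrow{H}$, so $f$ is a homomorphism of $\overrightarrow{G}'$ to $\overrightarrow{H}$.

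The main obstacle is the bookkeeping in the involution lemma. We must confirm that the anti-twin construction is consistent on arcs among starred vertices, on loops, and at push invariant vertices, which receive no twin and must be fixed by $\sigma$. We will first prove that lemma carefully, handling the push invariant case via the observation that such vertices have only loops or bidirected arcs; each implication is then a two-line case check.
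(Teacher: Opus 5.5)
Your proof is correct and uses the same constructions as the paper's supporting lemmas; the paper only cites this theorem. Your $g$ and its extension $g(v^*)=\sigma(g(v))$ are the maps $f'$ and $f_{ext}$ of Lemma~\ref{lem push to anti-twinned extension}, and your (ii)$\Rightarrow$(i) is the restriction of Lemma~\ref{lem anti-twinned to push}, applied directly without first making $h$ anti-twinned, which is fine since only arcs inside $V(\overrightarrow{G})$ are used.

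One small slip: in this paper isolated vertices are not push invariant and do get anti-twins, so $\sigma$ should swap them rather than fix them (otherwise your $f$ in (ii)$\Rightarrow$(i) might not land in $V(\overrightarrow{H})$), though this is harmless since such vertices carry no arcs.
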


\begin{theorem}[\cite{sen_homomorphisms_2017}]\label{th AT isomorphic}
   Let  $\overrightarrow{G}$ and $\overrightarrow{H}$ be directed graphs. Then $AT(\overrightarrow{G})$ is isomorphic to $AT(\overrightarrow{H})$
        if and only if $\overrightarrow{G}$ is pushably isomorphic to $\overrightarrow{H}$.
\end{theorem}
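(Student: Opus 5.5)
We prove the two directions separately. Throughout, write $\sigma$ for the involution of $AT(\overrightarrow{G})$ that swaps each non-push-invariant vertex $v$ with its anti-twin $v^*$ and fixes the push-invariant ones (and isolated ones). This $\sigma$ is an automorphism of $AT(\overrightarrow{G})$. Indeed, $N^+(v^*)=N^-(v)$ and $N^-(v^*)=N^+(v)$, and an arc $uv$ corresponds to the arcs $uv$, $u^*v^*$ among the lifts in the ``same direction''. The cross pairs $u v^*$ and $u^* v$ are arcs exactly when $vu$ is an arc. The first step is to record this precise description: for $x\in\{u,u^*\}$ and $y\in\{v,v^*\}$, $xy$ is an arc of $AT(\overrightarrow{G})$ if and only if $uv$ is an arc when $x,y$ have the same ``parity'', and $vu$ is an arc otherwise. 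Push-invariant vertices have only bidirected or loop adjacencies, so they are compatible with either parity.

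\emph{Pushable isomorphism implies isomorphism of the $AT$'s.} Let $f$ be a pushable isomorphism, so $f$ is a bijective homomorphism from some $\overrightarrow{G}'$, obtained by pushing a set $S\subseteq V(\overrightarrow{G})$, onto $\overrightarrow{H}$ that preserves non-adjacency. Since both graphs then have the same number of arcs up to bidirection, $f$ is an isomorphism $\overrightarrow{G}'\cong\overrightarrow{H}$. Next observe that $AT(\overrightarrow{G}')\cong AT(\overrightarrow{G})$ via the map sending $v\mapsto v^*$ and $v^*\mapsto v$ for $v\in S$ and fixing the rest; this is checked with the parity description above, because pushing $v$ exactly flips its parity. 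Composing with the map $AT(\overrightarrow{G}')\to AT(\overrightarrow{H})$ induced by $f$ (namely $v\mapsto f(v)$, $v^*\mapsto f(v)^*$) gives the isomorphism. Note that push-invariant vertices correspond under $f$, since the property is determined by the adjacency structure.

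\emph{Isomorphism of the $AT$'s implies pushable isomorphism.} Let $\phi:AT(\overrightarrow{G})\to AT(\overrightarrow{H})$ be an isomorphism. The key step, and the main obstacle, is to show that $\phi$ respects anti-twin pairs, i.e.\ $\phi(v^*)=\phi(v)^*$. Anti-twins are characterized intrinsically as the vertices $w\neq v$ with $N^+(w)=N^-(v)$ and $N^-(w)=N^+(v)$. The difficulty is that a vertex may have several such ``anti-twins'', for instance when $\overrightarrow{G}$ already contains anti-twin pairs. To handle this, I would partition $V(AT(\overrightarrow{G}))$ into classes of vertices with identical (out, in) neighbourhoods. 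Each class $X$ has a partner class $\bar X$ (its anti-twins), and $\phi$ maps classes to classes and commutes with the partner operation. Inside a class, vertices are interchangeable by automorphisms, so we may modify $\phi$ by class-preserving permutations so that it maps lifted pairs $\{v,v^*\}$ to lifted pairs. The point is that $|X\cap V(\overrightarrow{G})|+|\bar X\cap V(\overrightarrow{G})|=|X|$ when $X\neq\bar X$, and the pairs form a perfect matching between $X$ and $\bar X$, which can be matched up by counting. Self-partner classes require a separate check; they consist of vertices adjacent only via bidirected arcs or loops. The push-invariant vertices have no lift and are matched separately.

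Once $\phi$ maps lifted pairs to lifted pairs, define $f:V(\overrightarrow{G})\to V(\overrightarrow{H})$ by letting $f(v)$ be the original vertex of $\overrightarrow{H}$ in the pair $\phi(\{v,v^*\})$. Let $S$ be the set of $v$ with $\phi(v)\in V^*(\overrightarrow{H})$. Pushing $S$ in $\overrightarrow{G}$ then makes $f$ an arc-preserving bijection, and by the parity description it also preserves non-adjacency. Hence $f$ is a pushable isomorphism.
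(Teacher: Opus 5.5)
The paper does not prove this statement. It is quoted from \cite{sen_homomorphisms_2017} and used as a black box, so there is no route to compare yours with. Judged on its own, your argument is correct in outline, and it is the natural one.

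Twin classes of $AT(\overrightarrow{G})$ and the partner map $X\mapsto\bar X$ are defined from the digraph alone, so any isomorphism $\phi$ respects them. For $X\neq\bar X$ the map $v\mapsto v^*$ is a bijection $X\to\bar X$. Keeping $\phi$ on $X$ and setting $\phi'(v^*)=\phi(v)^*$ on $\bar X$ therefore changes $\phi$ only by a permutation of the twin class $\phi(\bar X)$, which is an automorphism of $AT(\overrightarrow{H})$.

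Say explicitly two things here. First, you are \emph{modifying} $\phi$; an arbitrary isomorphism need not respect pairs. Second, every permutation of a twin class is an automorphism: within a twin class, either every vertex carries a loop and all pairs are joined by bidirected arcs, or none does and the class is independent. The restriction step is then Lemma~\ref{lem anti-twinned to push}, with non-arcs handled by your parity description since $\phi'$ is an isomorphism.

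Two places need tightening.

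\emph{The self-partner classes you defer.} In $AT(\overrightarrow{G})$ one has $N^+(x)=N^-(x)$ only when $x$ is push invariant or isolated. For any other $v\in V(\overrightarrow{G})$ the two neighbourhoods already differ inside $V(\overrightarrow{G})$, and $v^*$ just swaps them. So a self-partner class is one of two kinds:
\begin{itemize}
\item a class of push-invariant vertices, where $*$ is the identity and there is nothing to do;
\item the class of isolated vertices, which under the paper's definition is closed under $*$. Here $\phi$ maps it onto the isolated class of $AT(\overrightarrow{H})$, and you may re-pair it arbitrarily.
\end{itemize}
You never mention the isolated vertices, and this case is needed.

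\emph{The forward direction.} The sentence about having the same number of arcs up to bidirection does not show that $f$ is an isomorphism. A bijective, arc-preserving, non-adjacency-preserving map can send a one-way arc onto a bidirected pair. With the paper's literal definition, the single arc $u\to v$ is ``pushably isomorphic'' to $\overrightarrow{D}_2$. Yet $AT$ of the single arc is the directed $4$-cycle $u\to v\to u^*\to v^*\to u$, while $AT(\overrightarrow{D}_2)=\overrightarrow{D}_2$.

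So you must take pushable isomorphism to mean that some $\overrightarrow{G}'$ push equivalent to $\overrightarrow{G}$ is isomorphic to $\overrightarrow{H}$ as a digraph. Then the forward direction starts from a genuine isomorphism. In the converse you should conclude $uv\in A(\overrightarrow{G}')\iff f(u)f(v)\in A(\overrightarrow{H})$, not merely that non-adjacency is preserved.
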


Let us fix some conventions now. 
Given the anti-twinned graph $AT(\overrightarrow{G})$ 
of a directed graph $\overrightarrow{G}$, 
notice that for every vertex $v$ (not push invariant) of $\overrightarrow{G}$
 an anti-twin $v^*$ is added to $AT(\overrightarrow{G})$. This gives an one-to-one correspondence between the vertices of $\overrightarrow{G}$ that are not push invariant, and the newly added anti-twins in $AT(\overrightarrow{G})$. This one-to-one correspondence can be captured through setting up the convention 
 $(v^*)^* = v^{**} = v$. Moreover, if $v$ is a push invariant vertex, then set the convention
 $v^{**} = v^* = v$.

 Given two directed graphs $\overrightarrow{G}$ and $\overrightarrow{H}$, a vertex mapping 
 $f: V(AT(\overrightarrow{G})) \to V(AT(\overrightarrow{H}))$
is called an \textit{anti-twinned homomorphism} 
if $f$ is a homomorphism of 
$AT(\overrightarrow{G})$ to $AT(\overrightarrow{H})$ satisfying 
$f(v^*) = f(v)^*$ for all $v \in V(AT(\overrightarrow{G}))$. 
While not every homomorphism of $AT(\overrightarrow{G})$ to $AT(\overrightarrow{H})$ is an anti-twinned homomorphism, if $AT(\overrightarrow{G}) \rightarrow AT(\overrightarrow{H})$, then there exists an anti-twinned 
homomorphism of $AT(\overrightarrow{G})$ to $AT(\overrightarrow{H})$.

\begin{lemma}\label{lem hom to anti-twinned}
Suppose $f$ is a homomorphism of $AT(\overrightarrow{G})$ to $AT(\overrightarrow{H})$. Then there exists an anti-twinned homomorphism $f^*$
    of $AT(\overrightarrow{G})$ to $AT(\overrightarrow{H})$. 
\end{lemma}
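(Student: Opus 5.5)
The plan is to keep $f$ on the original vertices and extend it by anti-twin symmetry. Concretely, define $f^*: V(AT(\overrightarrow{G})) \to V(AT(\overrightarrow{H}))$ by $f^*(v) = f(v)$ for $v \in V(\overrightarrow{G})$, and $f^*(v^*) = f(v)^*$ for $v^* \in V^*(\overrightarrow{G})$. We then have two things to check: that $f^*$ satisfies $f^*(x^*) = f^*(x)^*$ for every vertex $x$, and that it is a homomorphism.

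For the anti-twin condition we use the conventions $v^{**}=v$ and $v^*=v$ for push invariant $v$. Take $x=v\in V(\overrightarrow{G})$ not push invariant. Then $f^*(v^*) = f(v)^* = f^*(v)^*$ by definition. For $x = v^*$ we get $f^*(v^{**}) = f(v) = f(v)^{**} = f^*(v^*)^*$. The push invariant case needs care. If $v$ is push invariant, then $v^*=v$, so we need $f(v)^* = f(v)$, i.e., $f(v)$ must be push invariant in $\overrightarrow{H}$ (or $f(v)$ could be an anti-twin vertex, in which case the same question applies). I claim this holds. All adjacencies of $v$ are loops or bidirected arcs, and $v$ is not isolated. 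So $f(v)$ has either a loop or a bidirected arc to some vertex $y$ in $AT(\overrightarrow{H})$. I do not think this alone forces $f(v)$ to be push invariant in $AT(\overrightarrow{H})$, because $f(v)$ could have further one-directional arcs. This is the main obstacle. The fix I would try is to modify $f^*$ on push invariant vertices, mapping $v$ to some push invariant vertex. If $f(v)$ has a loop, then $f(v)^*$ also has a loop, and $f(v)f(v)^*$ is bidirected. So we would like a genuinely push invariant target. When the target is not push invariant, the honest resolution is probably to accept the identity $f(v)^* = f(v)$ as a requirement and choose images via the component structure. Push invariant vertices form components using only loops and bidirected arcs. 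Such a component maps into the "symmetric part" of $AT(\overrightarrow{H})$, and I would check from the definitions whether a symmetric image forces push invariance. If it does not, I would refine the definition of an anti-twinned homomorphism so that it only constrains non-push-invariant vertices.

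For the homomorphism property, consider an arc $xy$ of $AT(\overrightarrow{G})$ and split into cases by whether $x,y$ are original or starred.

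\emph{Both original.} Then $xy$ is an arc of $\overrightarrow{G}$ and $f$ preserves it.

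\emph{$x=u^*$, $y=w$ original.} Then $u^*w$ is an arc exactly when $wu$ is an arc of $\overrightarrow{G}$, which gives $f(w)f(u)\in A(AT(\overrightarrow{H}))$. The anti-twin property in $AT(\overrightarrow{H})$, namely $N^+(z^*) = N^-(z)$ (this also holds when $z$ is itself a starred vertex, by the symmetry of the construction), then yields $f(u)^*f(w)$ as an arc.

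\emph{Both starred.} By definition there are no arcs between $u^*$ and $w^*$ unless they are inherited appropriately. I would check that $u^*w^*$ is an arc iff $uw$ is, derived via the symmetry of $AT$ ($w^*\in N^+(u^*)=N^-(u)$ iff $u\in N^+(w^*)$, hmm, that gives $u^*w^*$ iff $wu$? I would compute this carefully). I would then transfer it analogously, using that $z\mapsto z^*$ is an automorphism-type involution of $AT(\overrightarrow{H})$ that reverses or preserves arcs consistently.

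The routine part is this case check. The push invariant and anti-twin bookkeeping is where I expect the difficulty.
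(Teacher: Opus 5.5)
Your construction is the paper's: keep $f$ on $V(\overrightarrow{G})$ and set $f^*(v^*)=f(v)^*$. The paper then checks the homomorphism property by the same three cases.

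The both-starred case you left open closes once you finish your chain of equivalences. We have $w^*\in N^+(u^*)=N^-(u)$ iff $u\in N^+(w^*)=N^-(w)$ iff $uw$ is an arc. So $u^*w^*$ is an arc iff $uw$ is, and this holds in $AT(\overrightarrow{G})$ and in $AT(\overrightarrow{H})$ alike. Hence $f(u)f(w)$ being an arc makes $f(u)^*f(w)^*$ an arc. Your mixed case is correct. The same identity $N^{\pm}(z^*)=N^{\mp}(z)$ also handles arcs between a push invariant vertex and a starred one.

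The obstacle you flagged at push invariant vertices is real, and it is not a weakness of your technique. The paper's proof asserts that $f^*(v^*)=f^*(v)^*$ holds ``as per the definition of $f^*$''. At a push invariant $v$ we have $v^*=v$, so this requires $f(v)^*=f(v)$, i.e.\ $f(v)$ push invariant, which nothing guarantees.

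In fact the statement is false as written, so no rechoice of images can save it. Here is a counterexample.
\begin{itemize}
\item Take $\overrightarrow{G}=\overrightarrow{D}_2$ with vertices $x,y$ and arcs $xy,yx$. Then $AT(\overrightarrow{G})=\overrightarrow{D}_2$ with $x^*=x$ and $y^*=y$.
\item Let $\overrightarrow{H}$ have vertices $a,b,c$ and arcs $ab,ba,ac,bc$. No vertex of $\overrightarrow{H}$ is push invariant, so $z\mapsto z^*$ has no fixed point on $AT(\overrightarrow{H})$.
\item The map $x\mapsto a$, $y\mapsto b$ is a homomorphism $AT(\overrightarrow{G})\to AT(\overrightarrow{H})$.
\item Any anti-twinned homomorphism $g$ would need $g(x)=g(x^*)=g(x)^*$, which is impossible.
\end{itemize}
So your first proposed fix, rerouting push invariant vertices to push invariant targets, cannot work in general.

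Your argument does prove two corrected versions, with your $f^*$ working verbatim in each.
\begin{itemize}
\item The lemma holds when $\overrightarrow{G}$ has no push invariant vertices, which covers every oriented graph and hence the original setting of anti-twinned graphs. More generally, it holds whenever $f$ sends push invariant vertices of $\overrightarrow{G}$ to push invariant vertices of $\overrightarrow{H}$.
\item The lemma holds for your second fix, the weakened notion of anti-twinned homomorphism. Here one requires $f(v^*)=f(v)^*$ only for non-push-invariant $v$.
\end{itemize}
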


\begin{proof} Let $f'$ be the function obtained by restricting the domain of $f$ to $V(\overrightarrow{G})$. Consider the following extension of $f'$:
    $$
f^*(v)=\begin{cases}
	f'(v), & \text{if $v\in V(\overrightarrow{G})$,}\\
    (f'(v^*))^*, & \text{if $v \in V^*(\overrightarrow{G}$)}.
		 \end{cases}
$$
We want to show that $f^*$ is an anti-twinned homomorphism of 
 $AT(\overrightarrow{G})$ to $AT(\overrightarrow{H})$.
That means, we need to verify the following two points:

\begin{itemize}
    \item for any vertex $v$ of $AT(\overrightarrow{G})$, we must have $f^*(v^*) = f^*(v)^*$,

    \item for any arc  $uv$ of $AT(\overrightarrow{G})$, $f^*(u)f^*(v)$ is an arc of $AT(\overrightarrow{H})$.
\end{itemize}

The first point is satisfied as per the definition of $f^*$. For verifying the second point, we need to consider the following three cases.

\begin{enumerate}[(i)]
    \item Suppose $uv$ is an arc of $AT(\overrightarrow{G})$  and both $u,v \in V(\overrightarrow{G})$. Then 
    $f^*(u)f^*(v)$ is an arc of $AT(\overrightarrow{H})$
    since $f^*(u)=f'(u)=f(u)$, 
        $f^*(v)=f'(v)=f(v)$ 
        and $f$ is a homomorphism.

    \item Suppose $uv$ is an arc of $AT(\overrightarrow{G})$ and exactly one of $u,v$ belongs to $V(\overrightarrow{G})$. Without loss of generality, let us assume that $u \in V(\overrightarrow{G})$ and  
    $v \in V^*(\overrightarrow{G})$. 
    Since $uv$ is an arc, note that, $v^*u$ is an arc
    of $AT(\overrightarrow{G})$. Moreover, observe that, 
    both $u$ and $v^*$ belongs to $V(\overrightarrow{G})$. 
    Thus,     $f^*(v^*)f^*(u)$ is an arc of $AT(\overrightarrow{H})$
    since $f^*(u)=f'(u)=f(u)$, 
        $f^*(v^*)=f'(v^*)=f(v^*)$ 
        and $f$ is a homomorphism.
    That means, $f^*(u)(f^*(v^*))^*$ is an arc of 
    $AT(\overrightarrow{H})$. 
    However, $(f^*(v^*))^* = (f'(v^*))^* = f^*(v)$, and thus 
    $f^*(u)f^*(v)$ is an arc of $AT(\overrightarrow{H})$.

    \item Suppose $uv$ is an arc of$AT(\overrightarrow{G})$  and both $u,v \in V^*(\overrightarrow{G})$. Then
    $u^*v^*$ is an arc of $AT(\overrightarrow{G})$
    while $u^*, v^* \in V(\overrightarrow{G})$. 
    Thus, 
    $f^*(u^*)f^*(v^*)$ is an arc of $AT(\overrightarrow{H})$
    since $f^*(u^*)=f'(u^*)=f(u^*)$, 
        $f^*(v^*)=f'(v^*)=f(v^*)$ 
        and $f$ is a homomorphism.
    Note that, $f^*(u^*)f^*(v^*)$ is an arc implies 
    $(f^*(u^*))^*(f^*(v^*))^*$ is an arc of $AT(\overrightarrow{H})$. 
    However, $(f^*(u^*))^* = (f'(u^*))^* = f^*(u)$ and 
    $(f^*(v^*))^* = (f'(v^*))^* = f^*(v)$. Therefore, 
    $f^*(u)f^*(v)$ is an arc of $AT(\overrightarrow{H})$. 
\end{enumerate}

This proves that $f^*$ is an anti-twinned homomorphism of $AT(\overrightarrow{G})$ to $AT(\overrightarrow{H})$.
\end{proof}

Let $f$ be 
 an \textit{anti-twinned homomorphism} 
 of 
$AT(\overrightarrow{G})$ to $AT(\overrightarrow{H})$.
The function 
$f_{res}: V(\overrightarrow{G}) \rightarrow V(\overrightarrow{H})$ given by 

$$
f_{res}(v)=\begin{cases}
			f(v), & \text{if $f(v) \in V(\overrightarrow{H})$,}\\
            f(v)^*, & \text{if $f(v) \in V^*(\overrightarrow{H})$}
		 \end{cases}
$$
for all $v \in V(\overrightarrow{G})$. 
This function is called an \textit{anti-twinned restriction} of $f$. Observe that, according to our definition, it is not clear whether $f_{res}$ is a pushable homomorphism or not. The following result ensures it. 

\begin{lemma}\label{lem anti-twinned to push}
    For every anti-twinned homomorphism $f$ of $AT(\overrightarrow{G})$ to $AT(\overrightarrow{H})$, its
    anti-twinned restriction $f_{res}$ is a pushable homomorphism of 
    $\overrightarrow{G}$ to $\overrightarrow{H}$. 
\end{lemma}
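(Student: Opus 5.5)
We will exhibit explicitly the set of vertices of $\overrightarrow{G}$ to be pushed, and then check that $f_{res}$ is an ordinary homomorphism of the resulting push-equivalent directed graph $\overrightarrow{G}'$ to $\overrightarrow{H}$. The natural choice is
$$S=\{v\in V(\overrightarrow{G}) : f(v)\in V^*(\overrightarrow{H}) \text{ and } f(v)\neq f(v)^*\},$$
that is, the vertices whose image lies among the newly added anti-twins. Let $\overrightarrow{G}'$ be obtained from $\overrightarrow{G}$ by pushing every vertex of $S$. Since push invariant vertices $w$ of $\overrightarrow{H}$ satisfy $w^*=w\in V(\overrightarrow{H})$, they are never in $V^*(\overrightarrow{H})$, so $f_{res}$ is well defined with values in $V(\overrightarrow{H})$.

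The key tool is the following observation about $AT(\overrightarrow{H})$. If $xy$ is an arc of $AT(\overrightarrow{H})$, then so are $x^*y$ reversed, i.e.\ $yx^*$, and $y^*x$, and hence $x^*y^*$. This follows directly from $N^+(v^*)=N^-(v)$, $N^-(v^*)=N^+(v)$, together with the conventions $v^{**}=v$ and $v^*=v$ for push invariant $v$ (for which bidirectedness or loops make the statement trivial). A short case check is needed when $x=y^*$ or when $x,y$ are both original vertices, but it is routine. Also, the arcs of $AT(\overrightarrow{H})$ between two vertices of $V(\overrightarrow{H})$ are exactly the arcs of $\overrightarrow{H}$. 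This holds because anti-twins only add arcs incident to starred vertices.

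Now take an arc $uv$ of $\overrightarrow{G}'$. Split into cases according to membership of $u,v$ in $S$. If neither is in $S$, then $uv$ is an arc of $\overrightarrow{G}$, so $f(u)f(v)$ is an arc of $AT(\overrightarrow{H})$ with both ends in $V(\overrightarrow{H})$; hence $f_{res}(u)f_{res}(v)$ is an arc of $\overrightarrow{H}$. If both are in $S$, the arc is unchanged in $\overrightarrow{G}$ (pushed twice), so $f(u)f(v)$ is an arc. Applying the observation gives that $f(u)^*f(v)^*=f_{res}(u)f_{res}(v)$ is an arc, with both ends original. If exactly one is in $S$, say $u\in S$, then $vu$ is the arc in $\overrightarrow{G}$, so $f(v)f(u)$ is an arc. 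The observation then yields $f(u)^*f(v)$, which is $f_{res}(u)f_{res}(v)$. The case $v\in S$ is symmetric. Loops at $u$ are unaffected by pushing and are handled by the same computation.

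The main obstacle is the careful verification of the anti-twin arc-swapping observation, including the degenerate situations: push invariant vertices, loops, and an arc between a vertex and its own anti-twin. I would settle these first by direct appeal to the neighbourhood definition of $v^*$. Notably, the anti-twinned property $f(v^*)=f(v)^*$ of $f$ is not really needed beyond restricting to $V(\overrightarrow{G})$; the argument uses only that $f$ is a homomorphism on original vertices.
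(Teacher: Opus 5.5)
Your proposal is correct and follows essentially the same route as the paper: the same pushing set $S$ (your extra condition $f(v)\neq f(v)^*$ is redundant, since no vertex of $V^*(\overrightarrow{H})$ is its own anti-twin), the same three-case split, and the same anti-twin arc-swapping in $AT(\overrightarrow{H})$. Your remark that the argument uses only that $f$ is a homomorphism on $V(\overrightarrow{G})$, and not the anti-twinned property, is also accurate for the paper's proof.
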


\begin{proof}
    Let 
    $S = \{v \in V(\overrightarrow{G}) | f(v) \in V^*(\overrightarrow{H})\}$. Let $\overrightarrow{G}'$ be the directed graph obtained from $\overrightarrow{G}$ by pushing the vertices of 
    $S$. 
    It is enough to show that $f_{res}$ is a homomorphism of $\overrightarrow{G}'$ to $\overrightarrow{H}$. We will verify this claim in three following cases. 
    
    \begin{enumerate}[(i)]
        \item Suppose  $uv$ is an arc of $\overrightarrow{G}$ 
        and $u, v \not\in S$. Then $uv$ is an arc of $\overrightarrow{G}'$ and 
        $f(u), f(v) \in V(\overrightarrow{G})$. 
        Since $f$ is a homomorphism, $f(u)f(v)$ is an arc of $AT(\overrightarrow{H})$. 
        Since, $f_{res}(u) = f(u)$ and $f_{res}(v) = f(v)$,
         we have $f_{res}(u)f_{res}(v)$ is an arc of $\overrightarrow{H}$.

\item Suppose  $uv$ is an arc of $\overrightarrow{G}$ 
        and exactly one of $u,v$ belongs to $S$. Without loss of generality, we may assume that $u \not\in S$ and 
        $v \in S$. Then $vu$ is an arc of $\overrightarrow{G}'$  (as only $v$  got pushed); 
        and $f(u) \in V(\overrightarrow{H})$ and 
        $f(v) \in V^*(\overrightarrow{H})$. 
        Since $f$ is a homomorphism, 
        $f(u)f(v)$ is an arc of $AT(\overrightarrow{H})$, which implies $f(v)^*f(u)$ is also an arc of 
        $AT(\overrightarrow{H})$. 
        Since by the definition of 
        $f_{res}$, we have $f_{res}(u) = f(u)$ and $f_{res}(v) = f(v)^*$,
         $f_{res}(v)f_{res}(u)$ is an arc of $\overrightarrow{H}$.

\item Suppose  $uv$ is an arc of $\overrightarrow{G}$ 
        and $u, v \in S$. Then $uv$ is an arc of $\overrightarrow{G}'$ as well (as both vertices got pushed) and $f(u), f(v)$ are vertices in $V^*(\overrightarrow{H})$. 
        Since $f$ is a homomorphism, 
        $f(u)f(v)$ is an arc of $AT(\overrightarrow{H})$, which implies $f(u)^*f(v)^*$ is also an arc of 
        $AT(\overrightarrow{H})$. 
        Since by the definition of 
        $f_{res}$, we have $f_{res}(u) = f(u)^*$ and $f_{res}(v) = f(v)^*$,
         $f_{res}(u)f_{res}(v)$ is an arc of $\overrightarrow{H}$. 
    \end{enumerate}

    Therefore $f_{res}$ is a homomorphism of $\overrightarrow{G}'$ to $\overrightarrow{H}$.
    By definition, this means that $f_{res}$ is a pushable homomorphism of $\overrightarrow{G}$ to $\overrightarrow{H}$.
\end{proof}

Given any pushable homomorphism $f$ of $\overrightarrow{G}$ to $\overrightarrow{H}$, suppose  $f_{ext}$ is an anti-twinned homomorphism of $AT(\overrightarrow{G})$ to $AT(\overrightarrow{H})$
whose anti-twinned restriction is $f$. Then $f_{ext}$ is called an \textit{anti-twinned extension} of $f$. The following result ensures that such an extension always exists. 

\begin{lemma}\label{lem push to anti-twinned extension}
    For every pushable homomorphism $f$ of 
    $\overrightarrow{G}$ to $\overrightarrow{H}$, there exists at least one anti-twinned extension $f_{ext}$, that is, an
    anti-twinned homomorphism $f_{ext}$ of $AT(\overrightarrow{G})$ to $AT(\overrightarrow{H})$ whose anti-twinned restriction is $f$. 
\end{lemma}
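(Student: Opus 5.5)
Given a pushable homomorphism $f$ of $\overrightarrow{G}$ to $\overrightarrow{H}$, fix a set $S \subseteq V(\overrightarrow{G})$ such that $f$ is an ordinary homomorphism of the directed graph $\overrightarrow{G}'$ obtained from $\overrightarrow{G}$ by pushing the vertices of $S$. The plan is to define $f_{ext}$ on $V(AT(\overrightarrow{G}))$ by
$$
f_{ext}(v)=\begin{cases}
 f(v), & \text{if } v\in V(\overrightarrow{G})\setminus S,\\
 f(v)^*, & \text{if } v\in S,\\
 (f_{ext}(v^*))^*, & \text{if } v\in V^*(\overrightarrow{G}),
\end{cases}
$$
using the conventions $v^{**}=v$, and $v^*=v$ for push invariant $v$. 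This makes $f_{ext}(v^*)=f_{ext}(v)^*$ hold by construction. We then check that $f_{ext}$ is a homomorphism and that its anti-twinned restriction equals $f$.

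\textbf{Homomorphism on $\overrightarrow{G}$.} Take an arc $uv$ of $\overrightarrow{G}$ and split into three cases, mirroring the proof of Lemma~\ref{lem anti-twinned to push} in reverse.
\begin{enumerate}[(i)]
 \item If $u,v\notin S$, then $uv$ is an arc of $\overrightarrow{G}'$, so $f(u)f(v)$ is an arc of $\overrightarrow{H}\subseteq AT(\overrightarrow{H})$.
 \item If exactly one of them, say $v$, lies in $S$, then $vu$ is an arc of $\overrightarrow{G}'$. Hence $f(v)f(u)$ is an arc of $\overrightarrow{H}$, and the anti-twin property gives that $f(u)f(v)^*$ is an arc of $AT(\overrightarrow{H})$. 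The case $u\in S$, $v\notin S$ is symmetric.
 \item If both lie in $S$, then $uv$ is an arc of $\overrightarrow{G}'$, so $f(u)f(v)$ is an arc, and therefore $f(u)^*f(v)^*$ is an arc of $AT(\overrightarrow{H})$.
\end{enumerate}
The basic fact used throughout is that for any arc $xy$ of $AT(\overrightarrow{H})$, both $y^*x$ and $x^*y^*$ are arcs, i.e.\ $xy^*$ and $x^*y^*$ are arcs after relabelling. This follows from the defining neighborhood relations $N^+(x^*)=N^-(x)$ and $N^-(x^*)=N^+(x)$. When $x$ is push invariant, $x^*=x$ and $x$ has only loops and bidirected arcs, so the fact still holds.

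\textbf{Arcs involving $V^*(\overrightarrow{G})$.} Arcs of $AT(\overrightarrow{G})$ that touch anti-twin vertices are handled exactly as in cases (ii) and (iii) of Lemma~\ref{lem hom to anti-twinned}. Every such arc corresponds, via the anti-twin relations in $AT(\overrightarrow{G})$, to an arc between original vertices, and we then apply the basic fact above. In effect, once $f_{ext}$ is known to be a homomorphism on the induced copy of $\overrightarrow{G}$, the same computation extends it.

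\textbf{Restriction.} By construction $f_{ext}(v)\in\{f(v),f(v)^*\}$ for $v \in V(\overrightarrow{G})$, so the anti-twinned restriction recovers $f(v)$.

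\textbf{Main obstacle.} The subtle point is push invariant vertices, in $\overrightarrow{H}$ and in $\overrightarrow{G}$. If $f(v)$ is push invariant, then $f(v)^*=f(v)$ lies in $V(\overrightarrow{H})$, and the restriction is still $f(v)$, so this causes no problem. If $v$ is a push invariant vertex of $\overrightarrow{G}$, it has no anti-twin, so we need $f_{ext}(v)^*=f_{ext}(v)$. This holds because a vertex with a loop or bidirected arc must map onto a loop or bidirected arc, under any push. Pushing does not change loops or bidirected arcs, so $f(v)$ has only loops or bidirected arcs among its images. Strictly, $f(v)$ might also have other arcs, and then the identity $f_{ext}(v^*)=f_{ext}(v)^*$ fails for $v$. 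This must be treated as a convention issue: $v^*$ is not a vertex here, so the requirement is vacuous, and the arc checks above remain valid.
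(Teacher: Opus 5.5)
Your construction and case analysis match the paper's proof. You use the same set $S$ and the same twisted map $v\mapsto f(v)$ or $f(v)^*$ on $V(\overrightarrow{G})$; the paper calls it $f'$ and checks that it is a homomorphism $\overrightarrow{G}\to AT(\overrightarrow{H})$ using the same cases as your (i)--(iii). You then use the same extension $v\mapsto f'(v^*)^*$ on $V^*(\overrightarrow{G})$ and the same arc check repeated from Lemma~\ref{lem hom to anti-twinned}. Your restriction step, via $f_{ext}(v)\in\{f(v),f(v)^*\}$, is tidier than the paper's. The paper asserts that $f_{ext}(v)\in V(\overrightarrow{H})$ forces $v\notin S$, which is false (though harmless) when $f(v)$ is push invariant. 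One small slip: what case (ii) actually uses is that an arc $xy$ of $AT(\overrightarrow{H})$ yields the arc $yx^*$. Your gloss ``$xy^*$'' does not follow from $xy$ alone.

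The part that does not hold together is your treatment of push invariant vertices of $\overrightarrow{G}$. You adopted $v^*=v$ for such $v$, so your third clause never applies to $v$. Hence $f_{ext}(v^*)=f_{ext}(v)^*$ does not hold ``by construction''. Nor is the requirement vacuous: it says $f_{ext}(v)$ is fixed by $*$, i.e.\ is a push invariant vertex of $\overrightarrow{H}$. This can fail for every extension, so the lemma as literally stated is false. Let $\overrightarrow{G}$ be the bidirected arc $xy,yx$, let $\overrightarrow{H}$ have arcs $ab,ba,bc$, and set $f(x)=a$, $f(y)=b$. Any extension with restriction $f$ sends $y$ to $b$ or $b^*$, and neither is fixed by $*$, while $y^*=y$. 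The paper's proof has the same blind spot: its computation $f_{ext}(v^*)=(f'(v^{**}))^*$ for $v\in V(\overrightarrow{G})$ silently assumes $v^*\in V^*(\overrightarrow{G})$. So you have found a genuine defect in the statement, and it cannot be argued away as a convention issue. State the repair explicitly instead, in one of two ways:
\begin{itemize}
\item require $f(v^*)=f(v)^*$ only for $v\in V^*(\overrightarrow{G})$ and for non-push-invariant $v\in V(\overrightarrow{G})$; or
\item assume $f$ maps push invariant vertices to push invariant vertices, which is automatic when $\overrightarrow{G}$ or $\overrightarrow{H}$ is oriented.
\end{itemize}
With either repair your argument is complete. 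The weaker notion also costs nothing in Lemma~\ref{lem anti-twinned to push}, whose proof never uses the identity $f(v^*)=f(v)^*$.
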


\begin{proof}
    Since $f$ is a pushable homomorphism of 
    $\overrightarrow{G}$ to $\overrightarrow{H}$, there must exist a set $S \subseteq V(\overrightarrow{G})$
    such that $f$ is a homomorphism of $\overrightarrow{G}'$
    to $\overrightarrow{H}$, where $\overrightarrow{G}'$
    is obtained from $\overrightarrow{G}$ by pushing the 
    vertices of $S$. 
    Define the function $f': V(\overrightarrow{G}) \to V(AT(\overrightarrow{H}))$ as follows
    $$
f'(v)=\begin{cases}
			f(v), & \text{if $v \not\in S$,}\\
            f(v)^*, & \text{if $v \in S$}.
		 \end{cases}
$$
Let us show that $f'$ is a homomorphism. 
Suppose that  $uv$ is an arc of $\overrightarrow{G}$ and consider the following two cases. 

\begin{enumerate}[(i)]
    \item If both $u,v \not\in S$ (resp., $u,v \in S$), 
then $uv$ is an arc of $\overrightarrow{G}'$ (since either none of them got pushed, or both of them got pushed) as well. 
Therefore, 
$f(u)f(v)$ is also an arc of $\overrightarrow{H}$, and $AT(\overrightarrow{H})$. This also means that 
$f(u)^*f(v)^*$ is an arc of $AT(\overrightarrow{H})$.
Since $f'(u) = f(u)$ and $f'(v) = f(v)$ 
(resp., $f'(u) = f(u)^*$ and $f'(v) = f(v)^*$), 
$f'(u)f'(v)$ is an arc of $AT(\overrightarrow{H})$.

\item If exactly one among $u,v$ belongs to $S$, then without loss of generality we may assume that 
$u \not\in S$ and $v \in S$. 
In this scenario, $vu$ is an arc in $\overrightarrow{G}'$, which implies that $f(v)f(u)$ is an arc of $\overrightarrow{H}$. That means, $f(u)f(v)^*$ is an arc of 
$AT(\overrightarrow{H})$. 
Since $f'(u) = f(u)$ and  $f'(v) = f(v)^*$, 
$f'(u)f'(v)$ is an arc of $AT(\overrightarrow{H})$.
\end{enumerate}
Therefore, $f'$ is a homomorphism.

Next, define the function $f_{ext}: AT(\overrightarrow{G}) \to AT(\overrightarrow{H})$ as follows:
   $$
f_{ext}(v)=\begin{cases}
			f'(v), & \text{if $v \in V(\overrightarrow{G})$,}\\
            f'(v^*)^*, & \text{if $v \in V^*(\overrightarrow{G})$}.
		 \end{cases}
$$
   Therefore, it is enough to prove that $f_{ext}$ is an anti-twinned extension of $f$.  

First let us show that $f_{ext}$ is a homomorphism of 
$AT(\overrightarrow{G})$ to $AT(\overrightarrow{H})$ via a verification of the following three cases. 

\begin{enumerate}[(i)]
    \item Suppose $uv$ is an arc of $AT(\overrightarrow{G})$ and both $u,v \in V(\overrightarrow{G})$. Then 
    $f_{ext}(u)f_{ext}(v)$ is an arc of $AT(\overrightarrow{H})$
    since $f_{ext}(u)=f'(u)$, 
        $f_{ext}(v)=f'(v)$ 
        and $f'$ is a homomorphism.

    \item Suppose $uv$ is an arc of $AT(\overrightarrow{G})$ and exactly one of $u,v$ belongs to $V(\overrightarrow{G})$. Without loss of generality let us assume that $u \in V(\overrightarrow{G})$ and  
    $v \in V^*(\overrightarrow{G})$. 
    Since $uv$ is an arc, note that, $v^*u$ is an arc 
    of $AT(\overrightarrow{G})$. Moreover, observe that, 
    both $u$ and $v^*$ belongs to $V(\overrightarrow{G})$. 
    Thus,     $f_{ext}(v^*)f_{ext}(u)$ is an arc of $AT(\overrightarrow{H})$
    since $f_{ext}(u)=f'(u)$, 
        $f_{ext}(v^*)=f'(v^*)$ 
        and $f'$ is a homomorphism.
    That means, $f_{ext}(u)f_{ext}(v^*)^*$ is an arc of 
    $AT(\overrightarrow{H})$. 
    However, $f_{ext}(v^*)^* = f'(v^*)^* = f_{ext}(v)$, and thus 
    $f_{ext}(u)f_{ext}(v)$ is an arc of $AT(\overrightarrow{H})$.

    \item Suppose $uv$ is an arc of $AT(\overrightarrow{G})$  and both $u,v \in V^*(\overrightarrow{G})$. Then
    $u^*v^*$ is an arc of $AT(\overrightarrow{G})$
    while $u^*, v^* \in V(\overrightarrow{G})$. 
    Thus, 
    $f'(u^*)f'(v^*)$ is an arc of $AT(\overrightarrow{H})$
    since $f'$ is a homomorphism. Hence,
    $f'(u^*)^*f'(v^*)^*$ is also an arc of $AT(\overrightarrow{H})$.
    However, $f_{ext}(u) = f'(u^*)^* $ and 
    $f_{ext}(v) = f'(v^*)^*$. Therefore, 
    $f_{ext}(u)f_{ext}(v)$ is an arc of $AT(\overrightarrow{H})$. 
\end{enumerate}
This proves that $f_{ext}$ is a homomorphism. 

Notice that, $f_{ext}(v^*) = (f_{ext}(v))^*$
by definition. Let us verify this. 
On the one hand, if $v \in V(\overrightarrow{G})$, then $$f_{ext}(v^*) = (f'(v^{**})^* = (f'(v))^* = (f_{ext}(v))^*.$$
On the other hand, 
if $v \in V^*(\overrightarrow{G})$, then 
$$f_{ext}(v^*) = f'(v^*) = (f'(v^*))^{**} = ((f'(v^*))^*)^* = (f_{ext}(v))^*.$$

The last part of the proof that remains is to check $f$ is an anti-twinned restriction of $f_{ext}$. Thus we need to show that, given any $v \in V(\overrightarrow{G})$, 
$$
f(v) = \begin{cases}
			f_{ext}(v), & \text{if $f_{ext}(v) \in V(\overrightarrow{H})$,}\\
            f_{ext}(v)^*, & \text{if $f_{ext}(v) \in V^*(\overrightarrow{H})$}.
		 \end{cases}
$$
Suppose $f_{ext}(v) \in V(\overrightarrow{H})$. 
Then $f_{ext}(v) = f'(v) \in   V(\overrightarrow{H})$. That means, $v \not\in S$, which implies $f_{ext}(v) = f'(v) = f(v)$. 
Next suppose $f_{ext}(v) \in V^*(\overrightarrow{H})$.
Then $f_{ext}(v) = f'(v) \in   V^*(\overrightarrow{H})$.
That means, $v \in S$, which implies $f_{ext}(v) = f'(v) = f(v)^*$. This further implies, $f_{ext}(v)^* = f(v)^{**} = f(v)$. 

This completes the proof. 
\end{proof}

Two anti-twinned homomorphisms $f_1, f_2$ of 
$AT(\overrightarrow{G})$ to $AT(\overrightarrow{H})$
are \textit{equivalent} if $f_1(v) = f_2(v)$ or $f_2(v^*)$ for all $v \in V(AT(\overrightarrow{G}))$. In other words, $f_1$ and $f_2$ has the same anti-twinned restrictions. If two anti-twinned homomorphisms $f_1, f_2$ are equivalent, then we denote it by $f_1 \sim  f_2$.  Observe that the above defined relation $\sim$ is an equivalence relation. 

\begin{lemma}\label{lem equiv anti-twinned extension}
    Let $f$ be a pushable homomorphism of 
    $\overrightarrow{G}$ to $\overrightarrow{H}$. Any two anti-twinned extensions of $f$ are equivalent. Moreover, any two equivalent anti-twinned homomorphisms have the same anti-twinned restrictions. 
\end{lemma}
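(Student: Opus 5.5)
The plan is to reduce both claims to one elementary observation about the ``collapsing'' map
$$r: V(AT(\overrightarrow{H})) \to V(\overrightarrow{H}), \qquad r(x) = \begin{cases} x, & \text{if } x \in V(\overrightarrow{H}),\\ x^*, & \text{if } x \in V^*(\overrightarrow{H}).\end{cases}$$
By definition, the anti-twinned restriction of any anti-twinned homomorphism $g$ is $g_{res} = r \circ g$ restricted to $V(\overrightarrow{G})$. The observation is the following.

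\textbf{Key observation.} For every $x \in V(AT(\overrightarrow{H}))$ we have $r(x) = r(x^*)$, and the fibre $r^{-1}(y)$ is exactly $\{y, y^*\}$. This is checked case by case using the conventions $v^{**}=v$, and $v^*=v$ for push invariant $v$.
\begin{enumerate}[(i)]
    \item If $x \in V(\overrightarrow{H})$ is not push invariant, then $x^* \in V^*(\overrightarrow{H})$ and $r(x^*) = x^{**} = x = r(x)$.
    \item If $x \in V(\overrightarrow{H})$ is push invariant, then $x^* = x$ and the claim is trivial.
    \item If $x \in V^*(\overrightarrow{H})$, apply case (i) to $x^*$.
\end{enumerate}
I read the definition of equivalence as $f_1(v) \in \{f_2(v), f_2(v)^*\}$. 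For anti-twinned homomorphisms this is the same as $f_1(v) \in \{f_2(v), f_2(v^*)\}$, since $f_2(v)^* = f_2(v^*)$.

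\textbf{First claim.} Let $f_1, f_2$ be anti-twinned extensions of $f$. Then $r(f_1(v)) = f(v) = r(f_2(v))$ for every $v \in V(\overrightarrow{G})$. By the fibre description, both $f_1(v)$ and $f_2(v)$ lie in $\{f(v), f(v)^*\}$, so $f_1(v) \in \{f_2(v), f_2(v)^*\}$.

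For $v \in V^*(\overrightarrow{G})$, write $v = w^*$ with $w \in V(\overrightarrow{G})$ (push invariant vertices have no anti-twin in $V^*$, so such $w$ exists). Then $f_i(v) = f_i(w)^*$. Applying $^*$ to the relation already proved for $w$ gives $f_1(v) \in \{f_2(v), f_2(v)^*\}$, using $y^{**}=y$. Hence $f_1 \sim f_2$.

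\textbf{Second claim.} Suppose $f_1 \sim f_2$. For $v \in V(\overrightarrow{G})$ we have $f_1(v) \in \{f_2(v), f_2(v)^*\}$. By the key observation, $r(f_1(v)) = r(f_2(v))$, so $(f_1)_{res} = (f_2)_{res}$.

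\textbf{Expected obstacle.} There is no substantive difficulty here. The only care needed is in the bookkeeping: the slightly ambiguous phrasing of the equivalence definition, and the push invariant vertices, where $v^* = v$. Both are handled by routing everything through the identity $r(x) = r(x^*)$.
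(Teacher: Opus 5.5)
Your proposal is correct and follows essentially the same route as the paper. Both arguments show that two extensions of $f$ take values in $\{f(v), f(v)^*\}$ on $V(\overrightarrow{G})$, transfer this to $V^*(\overrightarrow{G})$ via the anti-twinned property, and treat the second claim as immediate from the definition. Your collapsing map $r$, with $r(x)=r(x^*)$ and fibres $\{y,y^*\}$, just spells out the steps the paper leaves as ``according to the definitions'' and ``by definition''.
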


\begin{proof}
    Let $f_1$ and $f_2$ be two anti-twinned extensions of $f$ and let 
    $v \in V(AT(\overrightarrow{G}))$. 
    Thus, according to the definitions of anti-twinned extensions, 
 $f_1(v), f_2(v) \in \{f(v), f(v)^*\}$ if 
    $v \in V(\overrightarrow{G})$ and 
    $f_1(v), f_2(v) \in \{f(v^*), f(v^*)^*\}$ if 
    $v \in V^*(\overrightarrow{G})$. 
    This implies, $f_1(v) = f_2(v)$ or $f_2(v^*)$, and hence they are equivalent. 

    Suppose $f_1$ and $f_2$ be two 
    equivalent anti-twinned homomorphisms
    of $AT(\overrightarrow{G})$ to $AT(\overrightarrow{H})$. 
    By definition, they have the same anti-twinned restriction.    
\end{proof}

\section{Pushable categorical products and multiplicativity}
Let $\overrightarrow{G}$
and $\overrightarrow{H}$ be two directed graphs. 
Let $V_{1}(\overrightarrow{G})$ denote the set of vertices  from $V(\overrightarrow{G})$ in $AT(\overrightarrow{G})$) such that they are push invariant in $\overrightarrow{G}$, 
and let $V_{2}(\overrightarrow{G}) = V(\overrightarrow{G}) \setminus V_{1}(\overrightarrow{G})$.
Similarly, let  $V_{1}(\overrightarrow{H})$ denote the set of vertices  from $V(\overrightarrow{H})$ in $AT(\overrightarrow{H})$) such that they are push invariant in $\overrightarrow{H}$, 
and let $V_{2}(\overrightarrow{H}) = V(\overrightarrow{H}) \setminus V_{1}(\overrightarrow{H})$.
Let $X = (V(\overrightarrow{G}) \times V(\overrightarrow{H})) \cup (V_2(\overrightarrow{G}) \times V^*(\overrightarrow{H}))$. 
The \textit{pushable categorical product} of $\overrightarrow{G}$
and $\overrightarrow{H}$, 
denoted by $\overrightarrow{G} \times_p \overrightarrow{H}$,  
is the following directed graph 
$$(AT(\overrightarrow{G}) \times AT(\overrightarrow{H}))[X].$$ 
That is, $\overrightarrow{G} \times_p \overrightarrow{H}$ is obtained by first taking the categorical product 
$AT(\overrightarrow{G}) \times AT(\overrightarrow{H})$ of the 
anti-twin directed graphs of $\overrightarrow{G}$ and $\overrightarrow{H}$, and furthermore its subgraph induced by 
the vertex subset $X$. 
To be the true categorical product in the category of pushable directed graphs, $\overrightarrow{G} \times_p \overrightarrow{H}$ must satisfy the 
universal properties of categorical product as per its definition. That is what we will show next.

\begin{theorem}\label{thm push-product}
     Let $\overrightarrow{G}$
and $\overrightarrow{H}$ be two directed graphs. Then the directed graph $\overrightarrow{G} \times_p \overrightarrow{H}$ is a categorical product in the category of pushable directed graphs. 
\end{theorem}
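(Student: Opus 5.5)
The plan is to verify the universal property directly. I will define the two projections, show that they are pushable homomorphisms, and then build the mediating morphism out of the push sets of the two given maps. Throughout, write $\overrightarrow{P} = \overrightarrow{G} \times_p \overrightarrow{H}$. Its vertices are pairs $(x,y)$ with $x \in V(\overrightarrow{G})$ and $y \in V(AT(\overrightarrow{H}))$, where $y \in V^*(\overrightarrow{H})$ is allowed only when $x \in V_2(\overrightarrow{G})$.

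\emph{Projections.} Define $p_G(x,y) = x$, and define $p_H(x,y) = y$ if $y \in V(\overrightarrow{H})$ and $p_H(x,y) = y^*$ otherwise.
\begin{itemize}
\item The map $(x,y)\mapsto x$ is the restriction to $X$ of the first projection of $AT(\overrightarrow{G})\times AT(\overrightarrow{H})$, so it is a homomorphism $\overrightarrow{P}\to AT(\overrightarrow{G})$ with image inside $V(\overrightarrow{G})$. Hence $p_G$ is an ordinary homomorphism $\overrightarrow{P}\to\overrightarrow{G}$, and in particular a pushable one.
\item The map $(x,y)\mapsto y$ is a homomorphism $\overrightarrow{P}\to AT(\overrightarrow{H})$. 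I then run exactly the argument of Lemma~\ref{lem anti-twinned to push}: push the set $S=\{(x,y): y\in V^*(\overrightarrow{H})\}$. That lemma's proof only used that the map is a homomorphism into $AT(\overrightarrow{H})$, with the same three-case check, so $p_H$, which is this map followed by $*$-correction, is a pushable homomorphism.
\end{itemize}

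\emph{Existence of the mediating map.} Let $f_G:\overrightarrow{Z}\to\overrightarrow{G}$ and $f_H:\overrightarrow{Z}\to\overrightarrow{H}$ be pushable homomorphisms, with push sets $S_G$ and $S_H$ as in Lemma~\ref{lem push to anti-twinned extension}. Let $\overrightarrow{Z}'$ be $\overrightarrow{Z}$ pushed at $S_G$. Then $f_G$ is an ordinary homomorphism $\overrightarrow{Z}'\to\overrightarrow{G}$. Relative to $\overrightarrow{Z}'$, the map $f_H$ becomes an ordinary homomorphism into $AT(\overrightarrow{H})$ once each $z\in S_G\triangle S_H$ is sent to $f_H(z)^*$; this is the construction of $f'$ in Lemma~\ref{lem push to anti-twinned extension}, applied with the symmetric difference as the push set. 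Set
\[
g(z)=(f_G(z),\,f_H(z)) \text{ if } z\notin S_G\triangle S_H, \qquad g(z)=(f_G(z),\,f_H(z)^*) \text{ otherwise.}
\]
By the universal property in \textbf{Dir}, $g$ is a homomorphism $\overrightarrow{Z}'\to AT(\overrightarrow{G})\times AT(\overrightarrow{H})$, hence a pushable homomorphism of $\overrightarrow{Z}$, provided that it lands in $X$.

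\emph{The obstacle to landing in $X$.} The only failure is a vertex $z\in S_G\triangle S_H$ with $f_G(z)\in V_1(\overrightarrow{G})$, i.e.\ $f_G(z)$ push invariant. I plan to fix this by modifying $S_G$ beforehand. If $f_G(z)=x$ is push invariant, every adjacency of $x$ is a loop or a bidirected arc. Toggling the membership of $z$ in $S_G$ only reverses arcs at $z$, and their images all land on bidirected arcs or loops at $x$. So $f_G$ remains a homomorphism from the re-pushed digraph, and we may toggle $z$ so that $z\in S_G$ if and only if $z\in S_H$. After doing this for all such $z$, $g$ maps into $X$.

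\emph{Commutativity.} Directly from the definitions, $p_G\circ g=f_G$ and $p_H\circ g=f_H$.

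\emph{Uniqueness.} This is the step I expect to be the delicate point. The condition $p_G\circ g=f_G$ forces the first coordinate of $g(z)$. The condition $p_H\circ g=f_H$ forces the second coordinate only up to the choice between $f_H(z)$ and $f_H(z)^*$. So as bare vertex maps, mediating morphisms are \emph{not} unique. I would therefore argue uniqueness in the sense in which morphisms of \textbf{Push} are identified, namely pushable homomorphisms up to the equivalence $\sim$ of their anti-twinned extensions, using Lemma~\ref{lem equiv anti-twinned extension}.
\begin{itemize}
\item Given two candidates $g_1,g_2$, whenever they differ at $z$ we have $g_2(z)=g_1(z)^{*}$-in-the-second-coordinate.
\item In $AT(\overrightarrow{G})\times AT(\overrightarrow{H})$ the pair $(x^*,y^*)$ is the anti-twin of $(x,y)$.
\end{itemize}
I would check that $g_1$ and $g_2$ then induce equivalent anti-twinned homomorphisms into $AT(\overrightarrow{P})$, and hence the same morphism. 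Pinning down exactly which identification makes this step work is where I expect to spend the most care.
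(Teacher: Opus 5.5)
Your projections and your construction of a mediating map are correct, and they are the paper's argument written out with push sets. The paper pairs anti-twinned extensions $f^G_{ext},f^H_{ext}$ inside $AT(\overrightarrow{G})\times AT(\overrightarrow{H})=AT(\overrightarrow{G}\times_p\overrightarrow{H})$ and takes the anti-twinned restriction. That restriction silently replaces a forbidden vertex $(x,h^*)$, with $x$ push invariant, by its anti-twin $(x,h)$. Your toggling of $z$ in $S_G$ does the same thing by hand, and your $p_G,p_H$ are exactly the paper's $p^G_{res},p^H_{res}$.

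The gap is the uniqueness step, and the route you propose cannot close it. By Lemma~\ref{lem equiv anti-twinned extension}, $\sim$-classes of anti-twinned homomorphisms correspond bijectively to their anti-twinned restrictions, that is, to pushable homomorphisms viewed as vertex maps. So ``unique up to $\sim$'' means exactly ``unique as a vertex map'', which you already observed fails. Concretely, the anti-twin of $(x,y)$ is $(x^*,y^*)$, while $(x,y^*)$ has anti-twin $(x^*,y)$. If $g_1(z)=(x,y)$ and $g_2(z)=(x,y^*)$, the extensions take values in the disjoint pairs $\{(x,y),(x^*,y^*)\}$ and $\{(x,y^*),(x^*,y)\}$, so $g_1\not\sim g_2$.

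In fact the uniqueness clause is false, so no identification can rescue it. Let $\overrightarrow{Z}$ be a single vertex $z$ with no arcs; every vertex map out of it is a pushable homomorphism. Take $f_G(z)=x\in V_2(\overrightarrow{G})$ and $f_H(z)=h$ not push invariant. Then $z\mapsto(x,h)$ and $z\mapsto(x,h^*)$ are two distinct mediating morphisms. The same test object shows that any categorical product in \textbf{Push} must have exactly $|V(\overrightarrow{G})|\,|V(\overrightarrow{H})|$ vertices. But $\overrightarrow{G}\times_p\overrightarrow{H}$ has $|V(\overrightarrow{G})|\,|V(\overrightarrow{H})|+|V_2(\overrightarrow{G})|\,|V_2(\overrightarrow{H})|$, so no other choice of projections helps either.

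A non-degenerate instance: let $\overrightarrow{G}=\overrightarrow{H}$ be a single arc $a\to b$ and $\overrightarrow{Z}$ a single arc $c\to d$, with $f_G=f_H$ given by $c\mapsto a$, $d\mapsto b$. Here $AT(\overrightarrow{G})$ is the directed $4$-cycle $a\to b\to a^*\to b^*\to a$. Then $\overrightarrow{G}\times_p\overrightarrow{H}$ consists of the four disjoint arcs $(a,a)\to(b,b)$, $(a,b)\to(b,a^*)$, $(a,a^*)\to(b,b^*)$, $(a,b^*)\to(b,a)$. Both $c\mapsto(a,a),\ d\mapsto(b,b)$ and $c\mapsto(a,a^*),\ d\mapsto(b,b^*)$ are homomorphisms commuting with the projections.

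The paper's proof has the same hole. Its $\psi_{res}$ depends on which anti-twinned extensions are chosen, and different choices produce both maps above. Its last paragraph argues uniqueness of the product object, not of the mediating morphism. What your argument (and the paper's) genuinely establishes is the weak product property: projections exist, and any pair of pushable homomorphisms into $\overrightarrow{G}$ and $\overrightarrow{H}$ factors through some pushable homomorphism into $\overrightarrow{G}\times_p\overrightarrow{H}$. That suffices for the multiplicativity questions, since these only see the product up to pushable homomorphic equivalence. The honest fix is to state the theorem in that weakened form rather than to try to prove uniqueness.
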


\begin{proof}
As an important first step we are going to show that 
$AT(\overrightarrow{G} \times_p \overrightarrow{H}) = AT(\overrightarrow{G}) \times AT(\overrightarrow{H})$. 
Since $AT(\overrightarrow{G} \times_p \overrightarrow{H})$
is an induced subgraph of $AT(\overrightarrow{G}) \times AT(\overrightarrow{H})$, to prove the above relation it is enough find appropriate anti-twins
for every vertex of 
$\overrightarrow{G} \times_p \overrightarrow{H}$ in 
$AT(\overrightarrow{G}) \times AT(\overrightarrow{H})$. 
Recall that, according to the definition of anti-twinned directed graphs, push invariant vertices are anti-twins of themselves.

Let $g$ and $h$ be any vertex of $\overrightarrow{G}$ and $\overrightarrow{H}$, respectively. 
Then the vertices of $\overrightarrow{G} \times_p \overrightarrow{H}$ are of the form $(g,h)$ where $g$ and $h$ varies over all vertices of their respective directed graphs, and of the form $(g,h^*)$ where $g$ and $h$ varies over all vertices of their respective directed graphs except the push invariant vertices. Let us find the anti-twinned vertices casewise. 
\begin{enumerate}[(i)]
    \item The vertex $(g,h)$ is push invariant if and only if $g$ and $h$ are push invariant. Therefore, if both $g$ and $h$ are push invariant, then $(g,h)^* = (g,h)$.

    \item If $g$ is push invariant, but $h$ is not, then 
    $(g,h)^* = (g,h^*)$. Similarly, 
    if $h$ is push invariant, but $g$ is not, then 
    $(g,h)^* = (g^*,h)$.

    \item If none among $g$ and $h$ are push invariant, then 
    $(g,h)^* = (g^*,h^*)$, and $(g,h^*)^* = (g^*,h)$. 
\end{enumerate}
Thus, every vertex of $\overrightarrow{G} \times_p \overrightarrow{H}$ has appropriate anti-twins in 
$AT(\overrightarrow{G}) \times AT(\overrightarrow{H})$ proving 
$AT(\overrightarrow{G} \times_p \overrightarrow{H}) = AT(\overrightarrow{G}) \times AT(\overrightarrow{H})$.

We know that $AT(\overrightarrow{G}) \times 
AT(\overrightarrow{H})$ is a categorical product of 
$AT(\overrightarrow{G})$ and $AT(\overrightarrow{H})$ where 
the projections are given by $p^G(g,h) = g$ and $p^H(g,h) = h$, respectively. Notice that, both the projections are natural anti-twinned homomorphisms. Let $p^G_{res}$ and 
$p^H_{res}$ be the anti-twin restrictions of $p^G$ and $p^H$. These are the projections of 
$\overrightarrow{G} \times_p \overrightarrow{H}$ to $\overrightarrow{G}$ and $\overrightarrow{H}$, respectively.

Suppose, $\overrightarrow{P}$ is a directed graph with 
pushable homomorphisms $f^G$ and $f^H$ of 
$\overrightarrow{P}$ to $\overrightarrow{G}$ and $\overrightarrow{H}$, respectively. Thus, we 
need to find a unique pushable homomorphism $\phi$ of  $\overrightarrow{P}$ to 
$\overrightarrow{G} \times_p \overrightarrow{H}$ which satisfies $p^G_{res} \circ \phi = f^G$ and 
$p^H_{res} \circ \phi = f^H$. 
Let $f^G_{ext}$ and $f^H_{ext}$ be some anti-twinned extensions of $f^G$ and $f^H$, respectively. That means, $AT(\overrightarrow{P})$
admits anti-twinned homomorphisms $f^G_{ext}$ and $f^H_{ext}$ 
to $AT(\overrightarrow{G})$ and $AT(\overrightarrow{H})$, respectively.
That means, there exists a unique homomorphism $\psi: AT(\overrightarrow{P}) \rightarrow AT(\overrightarrow{G}) \times 
AT(\overrightarrow{H})$ satisfying 
$p^G \circ \psi = f^G_{ext}$ and 
$p^H \circ \psi = f^H_{ext}$. 
Moreover, since 
$\psi'(v) = (f^G_{ext}(v), f^H_{ext}(v))$ satisfies the above
conditions, due to the uniqueness of $\psi$, we have $\psi = \psi'$. Notice that $\psi = \psi'$ is an anti-twinned homomorphism. That means, its anti-twinned restriction 
$\psi_{res}$ satisfies the conditions required to be satisfied by $\phi$. That is, we can now fix $\phi = \psi_{res}$. 
Clearly, $p^G_{res} \circ \phi = f^G$ and 
$p^H_{res} \circ \phi = f^H$.  

That means, $\overrightarrow{G} \times_p \overrightarrow{H}$ 
satisfies all but the uniqueness (up to pushable isomorphism) condition of a categorical product. However, since any other 
directed graph $\overrightarrow{Q}$ satisfying all but the uniqueness property of categorical product in the category of 
pushable directed graphs will also imply that 
$AT(\overrightarrow{Q})$ satisfies all properties (except uniqueness) the categorical product of $AT(\overrightarrow{G})$ and $AT(\overrightarrow{H})$. Due to the uniqueness of $AT(\overrightarrow{G}) \times AT(\overrightarrow{H})$, we have $AT(\overrightarrow{Q}) = AT(\overrightarrow{G}) \times AT(\overrightarrow{H}) = AT(\overrightarrow{G} \times_p \overrightarrow{H})$, which  implies  $\overrightarrow{Q}$ is pushably isomorphic to 
$\overrightarrow{G} \times_p \overrightarrow{H}$ using Theorem~\ref{th AT isomorphic}.  
\end{proof}

Since categorical product exists in the category of pushable directed graphs, it is natural to define and study its multiplicative objects. A directed graph $\overrightarrow{K}$ is \textit{pushably multiplicative} if $\overrightarrow{G} \times_p \overrightarrow{H} \xrightarrow{push} \overrightarrow{K}$
implies 
$\overrightarrow{G} \xrightarrow{push} \overrightarrow{K}$ or $\overrightarrow{H} \xrightarrow{push} \overrightarrow{K}$. 
A directed graph is \textit{pushably non-multiplicative} if it is not pushably multiplicative. 
Thus, the natural question in this context 
is as follows.

\begin{question}\label{ques pushably digraph multiplicative}
     Can you characterize all pushably multiplicative directed graphs? 
\end{question}

  This question was unexplored, and we study it for the first time. However, our results in this context provides new infinite families of non-multiplicative directed graphs. We summarize our main contributions in Section~\ref{sec pushable multiplicativity}.

\section{Pushable exponential directed graphs}\label{sec pushable exponential}
We study Question~\ref{ques pushably digraph multiplicative} for the families of bipartite directed graphs, oriented cycles and transitive tournaments; do a complete characterization based on if they are pushably multiplicative or not. For our proof, we prove the existence of exponential objects in the category of pushable directed graphs and use it as a tool for our proofs, and in the process solve the following open question posed in~\cite{das2026update}.

\begin{question}[Das, P.D., Sen and Taruni 2026~\cite{das2026update}]\label{ques push exponential}
    Does there exist exponential objects in the category of pushable directed graphs? 
\end{question}

Let us recall the exponential object in the category of directed graphs. Given two directed graphs $\overrightarrow{K}$ and $\overrightarrow{G}$, the \textit{exponential directed graph} $\overrightarrow{K}^{\overrightarrow{G}}$
is defined as follows: 
the vertices of $\overrightarrow{K}^{\overrightarrow{G}}$
are functions $\phi: V(\overrightarrow{G}) \to V(\overrightarrow{K})$ and there is an arc from  
such a vertex $\phi$ to another vertex $\psi$ if 
for any arc $uv$ of $\overrightarrow{G}$, 
 $\phi(u)\psi(v)$ is an arc of $\overrightarrow{K}$. 
 It is known~\cite{hell-nesetril-book} that exponential directed graph $\overrightarrow{K}^{\overrightarrow{G}}$ is the exponential object in the category of directed graphs. 

We show that the exponent of an anti-twinned directed graph
to another anti-twinned directed graph is itself 
an anti-twinned directed graph. This forms a major step of our work. 

\begin{lemma}\label{lem exp key}
    Given two directed graphs $\overrightarrow{K}$ and $\overrightarrow{G}$, the directed graph 
    $AT(\overrightarrow{K})^{AT(\overrightarrow{G})}$ is also an anti-twinned directed graph. 
\end{lemma}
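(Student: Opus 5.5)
The plan is to characterize anti-twinned directed graphs intrinsically, by an involution on the vertex set, and then to exhibit such an involution on $\overrightarrow{E} := AT(\overrightarrow{K})^{AT(\overrightarrow{G})}$.

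\textbf{Step 1: intrinsic criterion.} In $AT(\overrightarrow{D})$ the map $v \mapsto v^*$ (with the convention $v^{**}=v$) is an involution $\sigma$. The defining identity $N^+(v^*) = N^-(v)$ says that for all vertices $v,w$,
$$\sigma(v)\,w \text{ is an arc} \iff w\,v \text{ is an arc}. \qquad (\dagger)$$
Two consequences of $(\dagger)$ will be used: $xy$ is an arc iff $y\,\sigma(x)$ is an arc; and a fixed point $v=\sigma(v)$ has all its adjacencies through loops or bidirected arcs.

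Conversely, suppose a digraph $\overrightarrow{E}$ carries an involution $\sigma$ satisfying $(\dagger)$, and also:
\begin{enumerate}[(a)]
\item the fixed points of $\sigma$ are exactly the push invariant vertices (in particular, none is isolated);
\item no non-fixed vertex is push invariant.
\end{enumerate}
Choose one representative from each non-trivial orbit, together with all fixed points, and let $\overrightarrow{D}$ be the induced subgraph on this set. I will check that $AT(\overrightarrow{D}) \cong \overrightarrow{E}$ by sending the added anti-twin $v^*$ to $\sigma(v)$. The arcs match by case analysis on whether each endpoint lies in $\overrightarrow{D}$, using $(\dagger)$ exactly as in the proofs of Lemmas~\ref{lem hom to anti-twinned} and~\ref{lem push to anti-twinned extension}.

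\textbf{Step 2: the involution on $\overrightarrow{E}$.} For a vertex $\phi: V(AT(\overrightarrow{G})) \to V(AT(\overrightarrow{K}))$ of $\overrightarrow{E}$, define
$$\phi^*(x) := \phi(x^*)^*.$$
Clearly $\phi^{**}=\phi$. To verify $(\dagger)$, start from the definition: $\phi^*\psi$ is an arc iff for every arc $xy$ of $AT(\overrightarrow{G})$, $\phi(x^*)^*\psi(y)$ is an arc of $AT(\overrightarrow{K})$.
\begin{itemize}
\item By $(\dagger)$ in $AT(\overrightarrow{K})$, this holds iff $\psi(y)\phi(x^*)$ is an arc.
\item By $(\dagger)$ in $AT(\overrightarrow{G})$, $xy$ ranges over all arcs exactly when $yz$ ranges over all arcs, where $z=x^*$.
\end{itemize}
So the condition becomes: $\psi(y)\phi(z)$ is an arc for every arc $yz$. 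That is precisely the statement that $\psi\phi$ is an arc of $\overrightarrow{E}$, which proves $(\dagger)$ for $\overrightarrow{E}$.

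\textbf{Step 3: the degenerate vertices (main obstacle).} Conditions (a) and (b) need not hold for the naive involution, so I will modify $\sigma$ in two ways.
\begin{itemize}
\item \emph{Push-invariant vertices in a 2-orbit.} Suppose $\phi \neq \phi^*$ but $\phi$ has only loop or bidirected adjacencies. Then $(\dagger)$ gives $N^\pm(\phi^*) = N^\mp(\phi) = N^\pm(\phi)$, so $\phi^*$ is also push invariant. Fixing both of them still satisfies $(\dagger)$, because for a push invariant $v$, "$vw$ is an arc iff $wv$ is an arc" holds. So I redefine $\sigma$ to fix every push invariant vertex.
\item \emph{Isolated vertices.} An isolated vertex can be sent to any other isolated vertex without affecting $(\dagger)$. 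I therefore re-pair the isolated vertices of $\overrightarrow{E}$ among themselves.
\end{itemize}
The real difficulty is parity. Re-pairing works only if the number of isolated vertices is even, or if an isolated fixed point is harmless under the convention for anti-twins of isolated vertices. I would first try to show this number is even by exhibiting a fixed-point-free involution on the isolated functions, for instance $\phi\mapsto\phi^*$ adjusted at a single coordinate. Failing that, I would treat isolated vertices via the paper's convention, since they play no role in homomorphisms.

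With (a) and (b) secured, Step 1 produces $\overrightarrow{D}$ with $AT(\overrightarrow{D}) \cong \overrightarrow{E}$, which proves Lemma~\ref{lem exp key}.
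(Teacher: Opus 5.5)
Your Step~2 is exactly the paper's proof. The paper defines $\phi^*(v)=\phi(v^*)^*$, checks $\phi^{**}=\phi$, and shows that if $\phi\psi$ is an arc then so is $\psi\phi^*$, using the same two anti-twin identities in $AT(\overrightarrow{G})$ and $AT(\overrightarrow{K})$; your biconditional is a slightly cleaner version of that computation. The paper stops there. It tacitly reads ``anti-twinned'' as ``every vertex has an anti-twin under this involution'' and never discusses push invariant or isolated vertices. Under that reading your Step~2 alone is a complete proof, and Steps~1 and~3 are superfluous.

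Under the strict reading you adopt, ``anti-twinned'' means being isomorphic to some $AT(\overrightarrow{D})$, where push invariant vertices are their own anti-twins but isolated vertices get genuine ones. Your repair of push invariant $2$-orbits is fine. The parity question you leave open, however, is a genuine gap, and your first plan cannot succeed in general: $\overrightarrow{E}=AT(\overrightarrow{K})^{AT(\overrightarrow{G})}$ can have an odd number of isolated vertices, and then it is not of the form $AT(\overrightarrow{D})$ at all. Here is a sketch.
\begin{enumerate}[(1)]
\item The anti-twin map is an automorphism of any $AT(\cdot)$. So $\tau\colon\phi\mapsto(\cdot)^*\circ\phi$ and $\rho\colon\phi\mapsto\phi\circ(\cdot)^*$ are commuting involutive automorphisms of $\overrightarrow{E}$.
\item Hence the number of isolated vertices has the same parity as the number of isolated $\phi$ fixed by both $\tau$ and $\rho$. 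These are the $\phi$ with values in the set $P$ of push invariant vertices of $AT(\overrightarrow{K})$ that are constant on anti-twin pairs.
\item For oriented $\overrightarrow{G}$, these are exactly the maps $\bar\phi\colon V(\overrightarrow{G})\to P$. Such a map is non-isolated iff, for every $v$, the set $\bar\phi(N(v))$ has a common out-neighbour in $AT(\overrightarrow{K})$, where $N(v)$ is the undirected neighbourhood of $v$.
\item Let $P$ be the vertex set of a connected asymmetric graph $R$ on $7$ vertices. Realise each edge $pq$ of $R$ by a connector vertex joined to $p$ and $q$ by bidirected arcs and carrying one extra ordinary arc, so that it is not push invariant. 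Then two elements of $P$ have a common out-neighbour iff they are equal or adjacent in $R$.
\item Let $\overrightarrow{G}$ be an orientation of the subdivision of a graph $Q$, and let $R^{\circ}$ be $R$ with a loop added at every vertex. The number of non-isolated such maps is $\hom(Q,R^{\circ})$ times the number of admissible colourings of the subdivision vertices, out of $7^{|V(\overrightarrow{G})|}$ maps in total.
\item M\"obius inversion over set partitions gives $1=|\mathrm{Aut}(R)|\equiv\sum_{\Theta}\hom(R^{\circ}/\Theta,R^{\circ}) \pmod 2$. The sum runs over the $232$ partitions of $V(R)$ into blocks of size at most $2$, an even number of terms. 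So some quotient, taken as $Q$ with loops discarded, has $\hom(Q,R^{\circ})$ even, and then the number of isolated vertices is odd.
\end{enumerate}

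So your fallback of letting isolated vertices be their own anti-twins is not a way of finishing the proof but a change of definition. It is precisely the tacit convention under which both the paper's argument and your Step~2 are correct.
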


\begin{proof}
Notice that, it is enough to find appropriate anti-twins for every vertex of 
$AT(\overrightarrow{K})^{AT(\overrightarrow{G})}$. 
Let $\phi$ be a vertex of $AT(\overrightarrow{K})^{AT(\overrightarrow{G})}$, that is, 
$\phi: V(AT(\overrightarrow{G})) \to V(AT(\overrightarrow{K}))$ is a function. Let $\phi^*$ be given by the following rule
$$\phi^*(v) = \phi(v^*)^*.$$ 
We claim that $\phi$ and $\phi^*$ are anti-twins of each other. For the sake of consistency, we must have $(\phi^*)^* = \phi$. Let us first check this. 
Observe that 
$$\phi^{**}(v) = (\phi^*(v^*))^* = ((\phi(v^{**}))^*)^* = \phi(v)^{**} = \phi(v).$$
Thus, our claim is consistent.

Next suppose $\phi \psi$ is an arc in $AT(\overrightarrow{K})^{AT(\overrightarrow{G})}$. 
Note that it is enough to prove that $\psi \phi^{*}$ is an arc 
in $AT(\overrightarrow{K})^{AT(\overrightarrow{G})}$. 
That means, we need to show that 
for any arc 
$uv$ in $AT(\overrightarrow{G})$, $\psi(u) \phi^*(v)$ is an arc of $AT(\overrightarrow{K})$. 
In $AT(\overrightarrow{G})$, $uv$ is an arc, which implies 
$v^*u$ is also an arc. That means, $\phi(v^*)\psi(u)$ is an arc in $AT(\overrightarrow{K})$. This implies,  $\psi(u)\phi(v^*)^*$ is an arc. Since $\phi^*(v)=\phi(v^*)^*$,  $\psi(u) \phi^*(v)$ is an arc in $AT(\overrightarrow{K})^{AT(\overrightarrow{G})}$. This completes the proof. 
\end{proof}

Let $[\overrightarrow{K}^{\overrightarrow{G}}]$ be a directed graph satisfying 
\begin{equation}\label{eq exponential antitwin}
    AT([\overrightarrow{K}^{\overrightarrow{G}}]) = AT(\overrightarrow{K})^{AT(\overrightarrow{G})}.
\end{equation}
We know that such a directed graph exists due to Lemma~\ref{lem exp key}, and it is unique (up to pushable isomorphism) due to Theorem~\ref{th AT isomorphic}. 
Thus, $[\overrightarrow{K}^{\overrightarrow{G}}]$ we will refer to the above mentioned unique directed graph that satisfies Eq$^n$~\ref{eq exponential antitwin}. 
This directed graph turns out to be the exponential object in the category of pushable directed graph, which we will show in the following. 

\begin{theorem}\label{thm push-exponential}
    Let $\overrightarrow{K}$ and $\overrightarrow{G}$ be directed graphs. 
    Let $[\overrightarrow{K}^{\overrightarrow{G}}]$ be the directed graph that satisfies Eq$^n$~\ref{eq exponential antitwin}. Then $[\overrightarrow{K}^{\overrightarrow{G}}]$ is the exponential object of $\overrightarrow{K}$ to the 
    $\overrightarrow{G}$ in the category of pushable directed graphs.    
\end{theorem}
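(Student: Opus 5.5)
The proof will mirror the one given for Theorem~\ref{thm push-product}: we transfer the universal property from the category \textbf{Dir}, applied to the anti-twinned graphs, back to \textbf{Push}. We use throughout that $AT([\overrightarrow{K}^{\overrightarrow{G}}]) = AT(\overrightarrow{K})^{AT(\overrightarrow{G})}$ and that $AT(\overrightarrow{Z} \times_p \overrightarrow{G}) = AT(\overrightarrow{Z}) \times AT(\overrightarrow{G})$. The latter was established inside the proof of Theorem~\ref{thm push-product}.

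\emph{Step 1: the evaluation morphism.} Let $\epsilon: AT(\overrightarrow{K})^{AT(\overrightarrow{G})} \times AT(\overrightarrow{G}) \to AT(\overrightarrow{K})$ be the usual evaluation $\epsilon(\phi, v) = \phi(v)$. The domain of $\epsilon$ equals $AT([\overrightarrow{K}^{\overrightarrow{G}}] \times_p \overrightarrow{G})$. The anti-twin structure on a product vertex is componentwise, so $(\phi,v)^* = (\phi^*, v^*)$. Using the rule $\phi^*(v)=\phi(v^*)^*$ from Lemma~\ref{lem exp key}, we get
\[
\epsilon(\phi^*,v^*) = \phi^*(v^*) = \phi(v)^* = \epsilon(\phi,v)^*.
\]
Hence $\epsilon$ is an anti-twinned homomorphism. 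By Lemma~\ref{lem anti-twinned to push}, its restriction $\epsilon_{res}$ is a pushable homomorphism $[\overrightarrow{K}^{\overrightarrow{G}}] \times_p \overrightarrow{G} \to \overrightarrow{K}$, and $\epsilon_{res}$ will serve as the evaluation.

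\emph{Step 2: existence of $\lambda_g$.} Let $g: \overrightarrow{Z} \times_p \overrightarrow{G} \xrightarrow{push} \overrightarrow{K}$ be given. By Lemma~\ref{lem push to anti-twinned extension}, take an anti-twinned extension $g_{ext}: AT(\overrightarrow{Z}) \times AT(\overrightarrow{G}) \to AT(\overrightarrow{K})$. The exponential property in \textbf{Dir} then yields a unique homomorphism $\lambda: AT(\overrightarrow{Z}) \to AT(\overrightarrow{K})^{AT(\overrightarrow{G})}$, namely $\lambda(z) = g_{ext}(z,\cdot)$. This $\lambda$ is anti-twinned, since
\[
\lambda(z^*)(v) = g_{ext}(z^*,v) = g_{ext}(z,v^*)^* = \lambda(z)^*(v).
\]
The middle equality uses $(z^*,v) = (z,v^*)^*$ together with the fact that $g_{ext}$ is anti-twinned. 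Set $\lambda_g = \lambda_{res}$, which is a pushable homomorphism by Lemma~\ref{lem anti-twinned to push}. Next we check that the diagram commutes, i.e.\ $\epsilon_{res} \circ (\lambda_g \times 1) = g$. In $AT$-land, $\epsilon \circ (\lambda \times 1) = g_{ext}$. Taking anti-twinned restrictions of both sides, and using that restriction is compatible with composition of anti-twinned homomorphisms (a routine check from $f(v^*) = f(v)^*$), gives the required identity.

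\emph{Step 3: uniqueness, which is the main obstacle.} Here ``unique'' must be read in the sense used for the product, that is, up to the equivalence $\sim$ of anti-twinned lifts, equivalently as a pushable homomorphism viewed as a vertex map. Suppose $\mu$ is another pushable homomorphism with $\epsilon_{res} \circ (\mu \times 1) = g$. Take its anti-twinned extension $\mu_{ext}$. Then $\epsilon \circ (\mu_{ext} \times 1)$ and $g_{ext}$ are anti-twinned homomorphisms with the same restriction, so by Lemma~\ref{lem equiv anti-twinned extension} they are equivalent. I would then argue pointwise. For each $z \in V(\overrightarrow{Z})$, the map $\mu_{ext}(z)$ agrees with $\lambda(z)$ or with $\lambda(z)^*$ at every $v$. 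I would then invoke the \textbf{Dir} uniqueness, after possibly replacing $g_{ext}$ by the equivalent extension $\epsilon \circ (\mu_{ext} \times 1)$. This gives $\mu_{ext} = \lambda'$, where $\lambda'$ is the curried map of that extension, and $\lambda' \sim \lambda$ because $\lambda'(z) \in \{\lambda(z), \lambda(z)^*\}$.

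The delicate point is the choice of extension. Different extensions of $g$ may disagree on whether $(z,v)$ is sent to $k$ or to $k^*$ independently for each $v$, so a priori the curried functions need not be $\lambda(z)$ or $\lambda(z)^*$ globally. I expect to resolve this by noting that uniqueness is only claimed up to pushable isomorphism or equivalence, as in Theorem~\ref{thm push-product}. Concretely, I would show that $\lambda_{res}(z)$ and $\mu(z)$ are pushably identified vertices of $[\overrightarrow{K}^{\overrightarrow{G}}]$, i.e.\ they coincide or are anti-twins. That statement is exactly what $\sim$ encodes.
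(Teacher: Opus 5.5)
Your Steps 1 and 2 follow the paper's proof. The paper also lifts $g$ to $g_{ext}$, gets $\lambda_{ext}$ from the exponential in \textbf{Dir}, and shows $\lambda_{ext}(x^*)=\lambda_{ext}(x)^*$ because $(x,v^*)$ and $(x^*,v)$ are anti-twins. It then restricts, and takes as evaluation the restriction of $(\phi,v)\mapsto\phi(v)$. Your remark that restriction commutes with composition supplies what the paper leaves as ``we can show that the first diagram commutes''. The paper never addresses uniqueness of $\lambda_g$. Its last paragraph only argues that any other object with the exponential property has anti-twinned graph $AT(\overrightarrow{K})^{AT(\overrightarrow{G})}$, and is therefore pushably isomorphic to $[\overrightarrow{K}^{\overrightarrow{G}}]$ by Theorem~\ref{th AT isomorphic}.

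Step 3 is a genuine gap, and it cannot be repaired, because its target claim is false: $\mu(z)$ and $\lambda_{res}(z)$ need not coincide or be anti-twins. From $\epsilon_{res}\circ(\mu\times 1)=g$ you only get the pointwise relation $\mu_{ext}(z)(u)\in\{\lambda(z)(u),\lambda(z)(u)^*\}$. Being anti-twins instead requires $\mu_{ext}(z)(u)=\lambda(z)(u^*)^*$ for all $u$ at once, and nothing makes the pointwise choices coherent.

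Here is a counterexample. Let $\overrightarrow{Z}$ be one isolated vertex $z$. Let $\overrightarrow{G}$ and $\overrightarrow{K}$ be single arcs $cd$ and $ab$, so $AT(\overrightarrow{G})$ and $AT(\overrightarrow{K})$ are the directed $4$-cycles $c\,d\,c^*\,d^*$ and $a\,b\,a^*\,b^*$. Let $g\equiv a$ on the arcless digraph $\overrightarrow{Z}\times_p\overrightarrow{G}$.
\begin{enumerate}[(i)]
\item For any non-push-invariant $\phi\in V([\overrightarrow{K}^{\overrightarrow{G}}])$, setting $\mu(z)=\phi$ gives a pushable homomorphism, since there are no arcs.
\item Its composite with evaluation sends $(z,u)$ to whichever of $\phi(u),\phi(u)^*$ lies in $V(\overrightarrow{K})$. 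So $\mu$ is admissible as soon as $\phi(u)\in\{a,a^*\}$ for all four $u$.
\item These $16$ functions form a $*$-closed set: $4$ self-anti-twin functions and $6$ anti-twin pairs of non-push-invariant vertices. One of these pairs is $\{\text{const}_a,\text{const}_{a^*}\}$, which contains $\lambda(z)$.
\item Whichever representatives define $[\overrightarrow{K}^{\overrightarrow{G}}]$, one member of each pair is a vertex of it. This gives $6$ admissible values of $\mu(z)$, pairwise neither equal nor anti-twins.
\end{enumerate}
So $\lambda_g$ is not unique even modulo $\sim$. The same already happens for $\times_p$: with both factors a single arc $ab$, an isolated vertex sent to $a$ by both maps can be mediated through $(a,a)$ or through $(a,a^*)$.

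With pushable homomorphisms (vertex maps) as morphisms, only the existence of $\lambda_g$ can be proved, and your Steps 1--2 already give that. You should drop Step 3 rather than try to complete it. Any uniqueness in the statement has to be read at the level of the object, as the paper does.
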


\begin{proof}
    Suppose for an arbitrary directed graph $\overrightarrow{H}$ a pushable homomoprhism 
    $g: \overrightarrow{H} \times \overrightarrow{G} \xrightarrow{push} \overrightarrow{K}$ is given.
 We need to prove the existence of a pushable homomorphism $\lambda: \overrightarrow{H} \xrightarrow{push} [\overrightarrow{K}^{\overrightarrow{G}}]$, 
such that the following diagram commutes. 
\begin{figure}[H]
    \centering
    \tikzset{
    mid-isharrow/.style={
        postaction={
            decorate,
            decoration={
                markings,
                mark=at position 0.7 with {
                    \arrow{Latex[length=2mm]}
                }
            }
        }
    }
}

\tikzset{
    endarrow/.style={
        postaction={
            decorate,
            decoration={
                markings,
                mark=at position 1 with {
                    \arrow{Latex[length=2mm]}
                }
            }
        }
    }
}

\tikzset{
  every label/.style={fill=white,inner sep=1pt}
}

\usetikzlibrary{backgrounds}

\pgfdeclarelayer{bg}
\pgfdeclarelayer{fg}
\pgfsetlayers{bg,main,fg}

\begin{tikzpicture}[scale=1,
vertex/.style={circle,fill=black,inner sep=1.5pt},
every label/.style={font=\small, fill=white!50, inner sep=1pt},
label distance=4pt]

    \begin{scope}[xshift=0cm]
        
        \node (h) at (0,2) {$\overrightarrow{H}$};
        \node (x) at (0,0) {$\left[ \overrightarrow{K}^{\overrightarrow{G}} \right]$};

        \draw[endarrow] (h)--node[left]{$\lambda$}node[right]{$push$}(x);     
    
    \end{scope}

    \begin{scope}[xshift=3cm]

        \node (gh) at (0,2) {$\overrightarrow{H}\times\overrightarrow{G}$};
        \node (kg) at (0,0) {$\left[ \overrightarrow{K}^{\overrightarrow{G}} \right]\times \overrightarrow{G}$};
        \node (k) at (4,0) {$\overrightarrow{K}$};

        \draw[endarrow,dashed] (gh)--node[left]{$\lambda\times 1_{\overrightarrow{G}}$}node[right]{$push$}(kg);
        \draw[endarrow] (gh)--node[above,sloped]{$push$}node[below,sloped]{$g$}(k);
        \draw[endarrow] (kg)--node[left,below]{$\epsilon$}node[left,above]{$push$}(k);
        
    \end{scope}

\end{tikzpicture}
\end{figure}
\noindent Here the evaluation function is defined as $\epsilon((\phi,v)) = \pi(\phi(v))$ where $\pi: AT(\overrightarrow{K}) \xrightarrow{push} \overrightarrow{K}$ is given by
$$
\pi(v) = \begin{cases}
			v, & \text{if $v \in V(\overrightarrow{K})$,}\\
            v^*, & \text{if $v \in V^*(\overrightarrow{K})$}.
		 \end{cases}
$$

It is known that the exponential object of $AT(\overrightarrow{K})$ to the $AT(\overrightarrow{G})$ is 
$AT(\overrightarrow{K})^{AT(\overrightarrow{G})}$ in the category of directed graphs. Let $g_{ext}$ be an anti-twinned extension of $g$. Then the universal properties of the exponential object will imply the existence of a homomorphism $\lambda_{ext}$ (say) 
of $AT(\overrightarrow{H})$ to 
$AT(\overrightarrow{K})^{AT(\overrightarrow{G})} = AT([\overrightarrow{K}^{\overrightarrow{G}}])$
such that the following diagraph commutes. 
\begin{figure}[H]
    \centering
    \tikzset{
    mid-isharrow/.style={
        postaction={
            decorate,
            decoration={
                markings,
                mark=at position 0.7 with {
                    \arrow{Latex[length=2mm]}
                }
            }
        }
    }
}

\tikzset{
    endarrow/.style={
        postaction={
            decorate,
            decoration={
                markings,
                mark=at position 1 with {
                    \arrow{Latex[length=2mm]}
                }
            }
        }
    }
}

\tikzset{
  every label/.style={fill=white,inner sep=1pt}
}

\usetikzlibrary{backgrounds}

\pgfdeclarelayer{bg}
\pgfdeclarelayer{fg}
\pgfsetlayers{bg,main,fg}

\begin{tikzpicture}[scale=1,
vertex/.style={circle,fill=black,inner sep=1.5pt},
every label/.style={font=\small, fill=white!50, inner sep=1pt},
label distance=4pt]

    \begin{scope}[xshift=0cm]
        
        \node (h) at (0,2) {$AT(\overrightarrow{H})$};
        \node (x) at (0,0) {$AT(\left[ \overrightarrow{K}^{\overrightarrow{G}} \right])$};

        \draw[endarrow] (h)--node[left]{$\lambda_{ext}$}(x);     
    
    \end{scope}

    \begin{scope}[xshift=5cm]

        \node (gh) at (0,2) {$AT(\overrightarrow{H})\times AT(\overrightarrow{G})$};
        \node (kg) at (0,0) {$AT\left(\left[ \overrightarrow{K}^{\overrightarrow{G}} \right]\right)\times AT(\overrightarrow{G})$};
        \node (k) at (4,0) {$AT(\overrightarrow{K})$};

        \draw[endarrow,dashed] (gh)--node[left]{$\lambda_{ext}\times 1_{AT(\overrightarrow{G})}$}(kg);
        \draw[endarrow] (gh)--node[above,yshift=2pt]{$g_{ext}$}(k);
        \draw[endarrow] (kg)--node[left,below]{$\epsilon_{ext}$}(k);
        
    \end{scope}

\end{tikzpicture}
\end{figure}
\noindent Here the evaluation function is defined in a natural way, that is, $\epsilon_{ext}((\phi,v)) = \phi(v)$ for all $(\phi, v) \in AT([\overrightarrow{K}^{\overrightarrow{G}}]) \times AT(\overrightarrow{G})$.

Notice that $\epsilon_{ext}$  is an anti-twinned homomorphism, and $\epsilon$ is its anti-twinned 
restriction. Moreover, the identity homomorphism 
$1_{AT(\overrightarrow{G})}: AT(\overrightarrow{G}) \rightarrow AT(\overrightarrow{G})$ is given by 
$1_{AT(\overrightarrow{G})}(v) = v$ for all $v \in V(AT(\overrightarrow{G}))$. Similarly, the identity pushable homomorphism  
$1_{\overrightarrow{G}}: \overrightarrow{G} \xrightarrow{push} \overrightarrow{G})$ is given by 
$1_{\overrightarrow{G}}(v) = v$ for all $v \in V(\overrightarrow{G})$.
Hence, notice that,
 $1_{AT(\overrightarrow{G})}$  is an anti-twinned homomorphism, and $1_{\overrightarrow{G}}$ is its anti-twinned restriction.

Next we want to show that $\lambda_{ext}$ is an 
anti-twinned homomorphism. Let $x$ be a vertex of 
$AT(\overrightarrow{H})$. To verify that 
$\lambda_{ext}$ is an anti-twinned homomorphism, we need to check that 
$\lambda_{ext}(x^*) = \lambda_{ext}(x)^*$. 
For convenience, suppose that 
$\lambda_{ext}(x) = \phi$ and 
$\lambda_{ext}(x^*) = \psi$. 
Essentially, we need to show that 
$\phi(v^*)^* = \psi(v)$ 
as per Lemma~\ref{lem exp key}. 
Notice that 
$$\phi(v^*) = (\lambda_{ext}(x))(v^*) = \epsilon \circ (\lambda_{ext} \times 1_{AT(\overrightarrow{G})})(x,v^*) = g_{ext}(x,v^*).$$
Furthermore,
$$\psi(v) = (\lambda_{ext}(x^*))(v) = \epsilon \circ (\lambda_{ext} \times 1_{AT(\overrightarrow{G})})(x^*,v) = g_{ext}(x^*,v).$$
Since $g_{ext}$ is an anti-twinned homomorphism, 
and since $(x,v^*)$ and $(x^*, v)$ are anti-twins of each other, we have 
$$g_{ext}(x,v^*)^* = g_{ext}(x^*, v).$$
This implies, $\phi(v^*)^* = \psi(v)$, and thus, 
$\lambda_{ext}$ is an anti-twinned homomorphism. 

By taking $\lambda$ as the anti-twinned restriction of $\lambda_{ext}$, we can show that the first diagram commutes. 
Moreover, if any other directed graph, say, $\overrightarrow{Q}$, satisfies 
the conditions of an exponential object of $\overrightarrow{K}$ to the $\overrightarrow{G}$ 
in the category of pushable directed graphs other than $[\overrightarrow{K}^{\overrightarrow{G}}]$, 
then $AT(\overrightarrow{Q})$ will satisfy the conditions of an exponential object of $AT(\overrightarrow{K})$ to the $AT(\overrightarrow{G})$ 
in the category of directed graphs. However, since we know that $AT(\overrightarrow{K})^{AT(\overrightarrow{G})}$ is unique, we must have $AT(\overrightarrow{Q}) = AT(\overrightarrow{K})^{AT(\overrightarrow{G})} =
AT([\overrightarrow{K}^{\overrightarrow{G}}])$. Thus, $\overrightarrow{Q}$ is pushably isomorphic to 
$[\overrightarrow{K}^{\overrightarrow{G}}]$ due to Theorem~\ref{th AT isomorphic}. 
This completes the proof. 
\end{proof}

Since we have proved the uniqueness of $[\overrightarrow{K}^{\overrightarrow{G}}]$, let us call it the  \textit{pushable exponential directed graph} from now on. 

Let us establish one of the most important tools now. 

\begin{lemma}\label{lem push-multiplicative}
    A directed graph $\overrightarrow{K}$ is pushably multiplicative if and only if one of the following conditions is satisfied:
    \begin{enumerate}[(i)]
        \item $\overrightarrow{G} \centernot{\xrightarrow{push}} \overrightarrow{K}$ implies 
        $\left[\overrightarrow{K}^{\overrightarrow{G}}\right] \xrightarrow{push} \overrightarrow{K}$,

        \item $\overrightarrow{G} \centernot{\xrightarrow{push}} \overrightarrow{K}$ implies 
        $AT(\overrightarrow{K})^{AT(\overrightarrow{G})} \xrightarrow{push} \overrightarrow{K}$.
    \end{enumerate}
\end{lemma}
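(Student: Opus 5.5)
The plan is to mimic the classical argument for \textbf{Dir} (and \textbf{Und}): in a category with products and exponentials, $\overrightarrow{K}$ is multiplicative if and only if $\overrightarrow{G}\not\to \overrightarrow{K}$ implies $\overrightarrow{K}^{\overrightarrow{G}}\to \overrightarrow{K}$. We first show that conditions (i) and (ii) are equivalent, and then prove that (i) characterizes pushable multiplicativity using Theorem~\ref{thm push-product} and Theorem~\ref{thm push-exponential}.

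For (i)$\Leftrightarrow$(ii), the only needed fact is that for any directed graph $\overrightarrow{X}$ we have $\overrightarrow{X}\xrightarrow{push}\overrightarrow{K}$ if and only if $AT(\overrightarrow{X})\xrightarrow{push}\overrightarrow{K}$. One direction comes from the inclusion $\overrightarrow{X}\subseteq AT(\overrightarrow{X})$. The other comes from Theorem~\ref{th old hom}: a homomorphism $AT(\overrightarrow{X})\to AT(\overrightarrow{K})$ can be upgraded by Lemma~\ref{lem hom to anti-twinned}, and $AT(AT(\overrightarrow{X}))\to AT(\overrightarrow{K})$ follows by mapping each new anti-twin $w^*$ to the image of the original anti-twin of $w$. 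Applying this with $\overrightarrow{X}=[\overrightarrow{K}^{\overrightarrow{G}}]$, whose anti-twinned graph is $AT(\overrightarrow{K})^{AT(\overrightarrow{G})}$ by Eq$^n$~\ref{eq exponential antitwin}, gives the equivalence.

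For necessity of (i), suppose $\overrightarrow{K}$ is pushably multiplicative and $\overrightarrow{G}\centernot{\xrightarrow{push}}\overrightarrow{K}$. The evaluation $\epsilon$ of Theorem~\ref{thm push-exponential} gives $[\overrightarrow{K}^{\overrightarrow{G}}]\times_p\overrightarrow{G}\xrightarrow{push}\overrightarrow{K}$. Multiplicativity then forces $[\overrightarrow{K}^{\overrightarrow{G}}]\xrightarrow{push}\overrightarrow{K}$.

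For sufficiency, suppose (i) holds and $\overrightarrow{H}\times_p\overrightarrow{G}\xrightarrow{push}\overrightarrow{K}$ with $\overrightarrow{G}\centernot{\xrightarrow{push}}\overrightarrow{K}$. The universal property of the exponential object yields $\lambda:\overrightarrow{H}\xrightarrow{push}[\overrightarrow{K}^{\overrightarrow{G}}]$. Condition (i) gives $[\overrightarrow{K}^{\overrightarrow{G}}]\xrightarrow{push}\overrightarrow{K}$, and composing the two pushable homomorphisms gives $\overrightarrow{H}\xrightarrow{push}\overrightarrow{K}$. Composition is legitimate because \textbf{Push} is a category; concretely, one composes anti-twinned extensions and restricts, using Lemmas~\ref{lem anti-twinned to push} and~\ref{lem push to anti-twinned extension}.

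The main point to watch is that Theorem~\ref{thm push-exponential} is written with $\times$ but must be read with the pushable product $\times_p$, whose anti-twinned graph is $AT(\overrightarrow{H})\times AT(\overrightarrow{G})$ by the proof of Theorem~\ref{thm push-product}. I would state this identification explicitly so that both the evaluation map and the currying step refer to the correct product.
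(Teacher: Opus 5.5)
Your proposal is correct and follows the paper's proof: necessity from the evaluation map of Theorem~\ref{thm push-exponential}, sufficiency by currying to $\overrightarrow{H}\xrightarrow{push}[\overrightarrow{K}^{\overrightarrow{G}}]$ and composing, and (ii) from $AT([\overrightarrow{K}^{\overrightarrow{G}}])=AT(\overrightarrow{K})^{AT(\overrightarrow{G})}$ together with the pushable equivalence of a digraph and its anti-twinned graph, which you justify more explicitly than the paper does (your detour through $AT(AT(\overrightarrow{X}))$ is unnecessary, since Theorem~\ref{th old hom} applied to $\overrightarrow{X}$ and to $AT(\overrightarrow{X})$ already shows both $\overrightarrow{X}\xrightarrow{push}\overrightarrow{K}$ and $AT(\overrightarrow{X})\xrightarrow{push}\overrightarrow{K}$ are equivalent to $AT(\overrightarrow{X})\to AT(\overrightarrow{K})$). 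Your remark that the products in Theorem~\ref{thm push-exponential} and in this proof must be read as $\times_p$ is a correct clarification of the paper's notation.
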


\begin{proof}
\noindent $(i)$ Assume that $\overrightarrow{K}$ is pushably multiplicative.
   Thus, if $\overrightarrow{G} \centernot{\xrightarrow{push}} \overrightarrow{K}$ and $\left[\overrightarrow{K}^{\overrightarrow{G}}\right] \centernot{\xrightarrow{push}} \overrightarrow{K}$, then $\left[\overrightarrow{K}^{\overrightarrow{G}}\right] \times \overrightarrow{G}\centernot{\xrightarrow{push}} \overrightarrow{K}$.
    However, we have $\epsilon: \left[\overrightarrow{K}^{\overrightarrow{G}}\right] \times \overrightarrow{G}\xrightarrow{push} \overrightarrow{K}$ by Theorem~\ref{thm push-exponential}, which is a contradiction.

    Conversely, 
    suppose that condition $(i)$ is satisfied. 
    Also suppose $\overrightarrow{G} \times_p \overrightarrow{H} \xrightarrow{push} \overrightarrow{K}$. 
    It is enough to show that 
    either $\overrightarrow{G} \xrightarrow{push} \overrightarrow{K}$ or 
    $\overrightarrow{H} \xrightarrow{push} \overrightarrow{K}$. 
    Therefore, if $\overrightarrow{G} \xrightarrow{push} \overrightarrow{K}$, then we are done. If 
    $\overrightarrow{G} \centernot{\xrightarrow{push}} \overrightarrow{K}$,
    then $\left[\overrightarrow{K}^{\overrightarrow{G}}\right] \xrightarrow{push} \overrightarrow{K}$ due to condition $(i)$. Moreover, due to the universal properties of an exponential object, 
    $\overrightarrow{H} \xrightarrow{push} \left[\overrightarrow{K}^{\overrightarrow{G}}\right]$. Composing the two above pushable homomorphisms, we have $\overrightarrow{H} \xrightarrow{push} \overrightarrow{K}$. Therefore, $\overrightarrow{K}$ is pushably multiplicative.

     \medskip

     \noindent $(ii)$ This follows from the fact 
$AT(\left[\overrightarrow{K}^{\overrightarrow{G}}\right]) = AT(\overrightarrow{K})^{AT(\overrightarrow{G})}$ due to Lemma~\ref{lem exp key} and the definition of $\left[\overrightarrow{K}^{\overrightarrow{G}}\right]$. 
\end{proof}

In particular, we will use the following corollary of the above lemma as a tool in our proofs.

\begin{corollary}
    If there exists a directed graph $\overrightarrow{G}$
    such that $\overrightarrow{G} \centernot{\xrightarrow{push}} \overrightarrow{K}$ and 
    $AT(\overrightarrow{K})^{AT(\overrightarrow{G})} \centernot{\xrightarrow{push}} \overrightarrow{K}$, then $\overrightarrow{K}$ is not pushably multiplicative. 
\end{corollary}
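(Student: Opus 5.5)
The corollary is the contrapositive of the ``only if'' direction of Lemma~\ref{lem push-multiplicative}, condition $(ii)$. Still, I will give the short direct argument rather than lean on the slightly loose phrasing of that lemma. Suppose $\overrightarrow{G} \centernot{\xrightarrow{push}} \overrightarrow{K}$ and $AT(\overrightarrow{K})^{AT(\overrightarrow{G})} \centernot{\xrightarrow{push}} \overrightarrow{K}$, and assume for contradiction that $\overrightarrow{K}$ is pushably multiplicative.

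The first step is to transfer the second hypothesis to the pushable exponential. By Eq$^n$~\ref{eq exponential antitwin} we have $AT([\overrightarrow{K}^{\overrightarrow{G}}]) = AT(\overrightarrow{K})^{AT(\overrightarrow{G})}$. Suppose $[\overrightarrow{K}^{\overrightarrow{G}}] \xrightarrow{push} \overrightarrow{K}$. Then Theorem~\ref{th old hom} gives $AT([\overrightarrow{K}^{\overrightarrow{G}}]) \to AT(\overrightarrow{K})$, that is, $AT(\overrightarrow{K})^{AT(\overrightarrow{G})} \to AT(\overrightarrow{K})$. Now $AT(\overrightarrow{K})^{AT(\overrightarrow{G})}$ is itself an anti-twinned directed graph (Lemma~\ref{lem exp key}), so it is its own $AT$ up to isomorphism. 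Hence, by the equivalence (ii)$\Leftrightarrow$(i) of Theorem~\ref{th old hom}, we get $AT(\overrightarrow{K})^{AT(\overrightarrow{G})} \xrightarrow{push} \overrightarrow{K}$. This contradicts the hypothesis, so $[\overrightarrow{K}^{\overrightarrow{G}}] \centernot{\xrightarrow{push}} \overrightarrow{K}$.

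The second step uses the evaluation map. By Theorem~\ref{thm push-exponential}, $\epsilon$ is a pushable homomorphism of $[\overrightarrow{K}^{\overrightarrow{G}}] \times_p \overrightarrow{G}$ to $\overrightarrow{K}$. Pushable multiplicativity then forces $[\overrightarrow{K}^{\overrightarrow{G}}] \xrightarrow{push} \overrightarrow{K}$ or $\overrightarrow{G} \xrightarrow{push} \overrightarrow{K}$. Both are excluded, which is the contradiction we wanted.

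The only delicate point is the identification $AT(AT(\overrightarrow{X})) \cong AT(\overrightarrow{X})$ used in the first step. I would justify it as follows. An anti-twinned graph already contains an anti-twin for every vertex that is not push invariant, and a pushable homomorphism into an anti-twinned target can be realised by an ordinary homomorphism (Theorem~\ref{th old hom}(iii)). So for an anti-twinned $\overrightarrow{X}$, the relations $\overrightarrow{X} \xrightarrow{push} \overrightarrow{K}$ and $\overrightarrow{X} \to AT(\overrightarrow{K})$ coincide, and no further isomorphism statement is needed. Alternatively, the corollary can be read off verbatim from Lemma~\ref{lem push-multiplicative}$(ii)$: if $\overrightarrow{K}$ were pushably multiplicative, then $\overrightarrow{G} \centernot{\xrightarrow{push}} \overrightarrow{K}$ would imply $AT(\overrightarrow{K})^{AT(\overrightarrow{G})} \xrightarrow{push} \overrightarrow{K}$, contrary to assumption.
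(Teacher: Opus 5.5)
Your argument is correct and is essentially the paper's: the corollary is the contrapositive of Lemma~\ref{lem push-multiplicative}$(ii)$. Its proof is the same evaluation-map argument for $[\overrightarrow{K}^{\overrightarrow{G}}]$, transferred to $AT(\overrightarrow{K})^{AT(\overrightarrow{G})}$ via the identity $AT([\overrightarrow{K}^{\overrightarrow{G}}]) = AT(\overrightarrow{K})^{AT(\overrightarrow{G})}$.

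One correction: an anti-twinned graph is not its own $AT$ up to isomorphism. Applying $AT$ again adds a fresh anti-twin to every non-push-invariant vertex; for instance, a single arc becomes a directed $4$-cycle under $AT$, and then an $8$-vertex graph. You do not need that claim. Theorem~\ref{th old hom}, (iii)$\Rightarrow$(i), applies to any directed graph, and applied to $AT(\overrightarrow{K})^{AT(\overrightarrow{G})}$ it turns $AT(\overrightarrow{K})^{AT(\overrightarrow{G})} \to AT(\overrightarrow{K})$ directly into $AT(\overrightarrow{K})^{AT(\overrightarrow{G})} \xrightarrow{push} \overrightarrow{K}$, which is what your last paragraph effectively says.
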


\section{Pushable multiplicativity of directed graphs}\label{sec pushable multiplicativity}
A directed graph is \textit{balanced} if it is possible to push some of its vertices and convert every vertex into a source or a sink. If a directed graph is not balanced, then it is \textit{unbalanced}.

\begin{theorem}\label{thm push-oriented-bipartite}
    A bipartite oriented graph is pushably multiplicative if and only if it is balanced. 
\end{theorem}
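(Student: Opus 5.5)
We have to prove both directions: a balanced bipartite oriented graph is pushably multiplicative, and an unbalanced one is not.

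\emph{Balanced case.} Let $\overrightarrow{K}$ be balanced with at least one arc. After pushing, every vertex of $\overrightarrow{K}$ is a source or a sink, so $\overrightarrow{K}$ is pushably equivalent to the single arc $\overrightarrow{K_2}$. Indeed, every vertex can be sent to the tail or the head, and $\overrightarrow{K_2}$ maps into any arc. Pushable multiplicativity is invariant under pushable homomorphic equivalence, so it suffices to treat $\overrightarrow{K_2}$.

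A directed graph $\overrightarrow{G}$ satisfies $\overrightarrow{G}\xrightarrow{push}\overrightarrow{K_2}$ exactly when it is balanced. For oriented graphs this holds exactly when the underlying graph is bipartite and has no loops or digons. For general directed graphs one must also handle bidirected arcs: a bidirected arc cannot be made into a single arc, so such graphs never map.

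So suppose $\overrightarrow{G}\times_p\overrightarrow{H}\xrightarrow{push}\overrightarrow{K_2}$ while neither factor does. We can then exhibit, in each factor, a closed walk that obstructs balancedness. Such an obstruction is an odd closed walk of the underlying graph, or a loop, or a bidirected arc; a bidirected arc is just a closed walk of length $2$ using both directions. The key quantity is the parity of each closed walk in $AT(\cdot)$. The plan is to combine the two obstructing walks into one in $AT(\overrightarrow{G})\times AT(\overrightarrow{H})$ that obstructs balancedness of the product. This uses the identity $AT(\overrightarrow{G}\times_p\overrightarrow{H})=AT(\overrightarrow{G})\times AT(\overrightarrow{H})$ from the proof of Theorem~\ref{thm push-product}, together with Theorem~\ref{th old hom}.

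Cleaner alternative: Lemma~\ref{lem push-multiplicative}(ii) reduces the claim to the following. If $\overrightarrow{G}$ is unbalanced, then $AT(\overrightarrow{K_2})^{AT(\overrightarrow{G})}$ pushes to $\overrightarrow{K_2}$. Here $AT(\overrightarrow{K_2})$ is the directed $4$-cycle $a\to b\to a^*\to b^*\to a$. I would show directly that the exponential is bipartite with no digons or loops, and hence balanced. One can track each function $\phi$ by the parity class of its values on one vertex of each component of $AT(\overrightarrow{G})$. An odd closed walk in $\overrightarrow{G}$, or a digon, yields in $AT(\overrightarrow{G})$ a closed walk whose image under an arc $\phi\psi$ forces a consistent $2$-colouring of the exponential. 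I expect this bookkeeping to be the fiddliest part of this direction.

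\emph{Unbalanced case.} Let $\overrightarrow{K}$ be a bipartite oriented graph that is not balanced. By the Corollary to Lemma~\ref{lem push-multiplicative}, it suffices to find $\overrightarrow{G}$ with $\overrightarrow{G}\not\xrightarrow{push}\overrightarrow{K}$ and $AT(\overrightarrow{K})^{AT(\overrightarrow{G})}\not\xrightarrow{push}\overrightarrow{K}$. I would take $\overrightarrow{G}=\overrightarrow{K_2}$ with a bidirected arc, or a loop. Neither can map to an oriented graph, so $\overrightarrow{G}\not\xrightarrow{push}\overrightarrow{K}$. The exponential $AT(\overrightarrow{K})^{AT(\overrightarrow{G})}$ contains the constant maps, which form a copy of $AT(\overrightarrow{K})$. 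Adding extra structure, I aim to show that the exponential contains a digon or a loop, or else an odd closed walk of its underlying graph. Then it cannot map to the bipartite oriented $\overrightarrow{K}$.

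The main obstacle is this last step: producing such a forbidden structure in the exponential using only the unbalancedness of $\overrightarrow{K}$. Unbalancedness of a bipartite $\overrightarrow{K}$ means some cycle of $\overrightarrow{K}$ has an odd number of forward arcs, i.e.\ its net push-parity is nontrivial. In $AT(\overrightarrow{K})$ this gives a closed walk from $v$ to $v^*$. I would build two functions $\phi,\psi$ on the two vertices of $\overrightarrow{G}$ and their anti-twins, using this walk, so that $\phi\psi$ and $\psi\phi$ are both arcs. That yields a bidirected arc in the exponential, which cannot push into an oriented graph. If this fails, I would instead choose $\overrightarrow{G}$ to be the unbalanced cycle of $\overrightarrow{K}$ itself, modified so that it does not map to $\overrightarrow{K}$.
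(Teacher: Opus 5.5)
\textbf{Balanced direction.} The characterization you start from is false: an oriented graph with bipartite underlying graph need not be balanced. Take the oriented $4$-cycle with arcs $v_1v_2, v_2v_3, v_3v_4, v_1v_4$. Pushing a vertex reverses two arcs of the cycle, so the number of arcs directed from $\{v_1,v_3\}$ to $\{v_2,v_4\}$ stays odd and never becomes $0$ or $4$. If your claim held, the ``only if'' half of the theorem would be vacuous, and you yourself use unbalanced bipartite $\overrightarrow{K}$ later. The correct criterion is bipartiteness together with the condition that every cycle of length $2m$ has a number of forward arcs congruent to $m$ modulo $2$. So ``an odd number of forward arcs'' in your last paragraph is also wrong; the directed $6$-cycle is a counterexample.

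This error breaks both of your routes. In your ``cleaner alternative'', a bipartite exponential without loops or digons need not push to $\overrightarrow{K_2}$. In your first route, the obstruction list misses exactly the unbalanced even cycles. Parity of closed walks in $AT(\cdot)$ cannot detect these, since $AT$ of a bipartite graph is again bipartite.

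The missing idea is to count modulo $4$. We have $\overrightarrow{G}\xrightarrow{push}\overrightarrow{K_2}$ iff $AT(\overrightarrow{G})\to AT(\overrightarrow{K_2})=\overrightarrow{DC}_4$. For every unbalanced $\overrightarrow{G}$, the graph $AT(\overrightarrow{G})$ contains a directed cycle of length odd or $\equiv 2 \pmod 4$. It is a loop, a digon, or an odd cycle or unbalanced $(4j+2)$-cycle pushed to a directed cycle. For an unbalanced $4j$-cycle pushed to $v_1v_2,\dots,v_{4j-1}v_{4j},v_1v_{4j}$, it is the cycle $v_1\cdots v_{4j}v_{4j-1}^*v_{4j}^*v_1$. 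Your unspecified ``combining'' step is then the observation that $\overrightarrow{DC}_a\times\overrightarrow{DC}_b$ contains a directed cycle of length $\mathrm{lcm}(a,b)$. This length is again not divisible by $4$, so $AT(\overrightarrow{G})\times AT(\overrightarrow{H})=AT(\overrightarrow{G}\times_p\overrightarrow{H})\not\to\overrightarrow{DC}_4$. This is the paper's proof. Alternatively, the direction follows at once from the multiplicativity of $\overrightarrow{DC}_4$ in \textbf{Dir} (Theorem~\ref{th oriented cycle}) via Theorem~\ref{th old hom}.

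\textbf{Unbalanced direction.} Your concrete target cannot be reached.
If $\overrightarrow{G}$ is a loop, then $AT(\overrightarrow{G})=\overrightarrow{G}$ and $AT(\overrightarrow{K})^{AT(\overrightarrow{G})}\cong AT(\overrightarrow{K})$, which push-maps to $\overrightarrow{K}$.
If $\overrightarrow{G}=\overrightarrow{D}_2$, then $(a,b)(c,d)$ is an arc of the exponential iff $ad$ and $bc$ are arcs of $AT(\overrightarrow{K})$. So a loop or a bidirected arc in the exponential would require a bidirected arc in $AT(\overrightarrow{K})$. There is none: $vu^*$ (resp.\ $u^*v$, $u^*v^*$) is an arc iff $uv$ (resp.\ $vu$, $uv$) is, and $\overrightarrow{K}$ is oriented.

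So of your three candidate obstructions only the odd closed walk is available, and you give no construction for it. One exists for $\overrightarrow{G}=\overrightarrow{D}_2$. Since $AT(\overrightarrow{K})$ is bipartite, the directed cycle from the previous paragraph has length $2r$ with $r$ odd; call it $w_0w_1\cdots w_{2r-1}w_0$. Set $\phi_j=(w_j,w_{j+r})$ for even $j$ and $\phi_j=(w_{j+r},w_j)$ for odd $j$, with indices modulo $2r$. Then $\phi_0\phi_1\cdots\phi_r$ is a directed closed walk of odd length $r$ in the exponential (note $\phi_r=\phi_0$). Hence the exponential is not bipartite and cannot push-map to $\overrightarrow{K}$.

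This is just the adjoint form of the paper's argument. The paper notes directly that $\overrightarrow{D}_2\times_p\overrightarrow{DC}_{2k+1}=\overrightarrow{DC}_{4k+2}\xrightarrow{push}\overrightarrow{K}$, with $2k+2$ the length of a shortest unbalanced cycle of $\overrightarrow{K}$, while neither factor push-maps to $\overrightarrow{K}$.
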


\begin{proof}
   Let $\overrightarrow{K}_2$ denote an orientation of $K_2$.     
    It is known~\cite{das2026update} that an oriented graph is pushably homomorphically equivalent to $\overrightarrow{K}_2$ if and only if it is balanced. 
    That means, to prove that all balanced bipartite directed graph are pushably multiplicative, it is enough to prove that 
    $\overrightarrow{K}_2$ is multiplicative.

    We know that any odd oriented cycle is pushably equivalent to 
    the directed cycle~\cite{das2026update}.
    We also know~\cite{das2026update} that
    an even unbalanced  cycle on $4k$ vertices, for some $k \geq 1$, is equivalent to the oriented cycle whose all but one arc is a forward arc and an  even unbalanced  cycle on $4k+2$ vertices, for some $k \geq 1$, is equivalent to the directed cycle.

    Let $\overrightarrow{UC}_{4k}$ be an unbalanced even cycle on $4k$ vertices for some $k \geq 1$. Suppose the vertices of 
$\overrightarrow{UC}_{4k}$ are $\{v_1, v_2, \ldots, v_{4k}\}$. Also, assume that its arcs are $v_1v_2, v_2v_3, \ldots, v_{4k-1}v_{4k}$, and $v_1v_{4k}$. Note that, in the anti-twinned graph  $AT(\overrightarrow{UC}_{4k})$
the following directed cycle of length $4k+2$
is present:
$$v_1v_2 \ldots v_{4k} v^*_{4k-1} v^*_{4k} v_1.$$

That means, anti-twin of any unbalanced directed graph contains a directed cycles of length 
$2k+1$ or $4k+2$, for some $k \geq 1$. Furthermore, note that, the categorical product of two directed cycles of lengths $k_1$ and $k_2$ contains a directed cycle of length equal to the L.C.M. of $k_1$ and $k_2$. Observe that the L.C.M. of two numbers of the form $2k+1$ or $4k+2$ is also of the form $2k+1$ or $4k+2$. 
Suppose $\overrightarrow{G}$ and $\overrightarrow{H}$ are two unbalanced directed graphs. Thus, we can say that  the categorical product  $AT(\overrightarrow{G}) \times AT(\overrightarrow{H})$ is an unbalanced directed graph. As any directed graph is pushably homomorphically  equivalent its anti-twinned directed graph, we can conclude that the pushable categorical product of two unbalanced graph is unbalanced, and thus does not admit a pushable homomorphism to $\overrightarrow{K}_2$. This implies that $\overrightarrow{K}_2$ is pushably multiplicative.

\medskip

Next we will show that all unbalanced bipartite directed graphs are pushably non-multiplicative.
Let $\overrightarrow{K}$ be an unbalanced bipartite directed graph, and let $2k+2$ be the length of its shortest unbalanced even oriented cycle. Notice that, $k \geq 1$ as we are working on oriented graphs.

Let $\overrightarrow{D}_2$ denote the graph on $2$ vertices $x, y$ having two arcs $xy$ and $yx$. Let $\overrightarrow{DC}_{2k+1}$ denote the directed cycle on $2k+1$ vertices. 
Notice that $\overrightarrow{D}_2 \centernot{\xrightarrow{push}} \overrightarrow{K}$ since $\overrightarrow{K}$ is an oriented graph, and  
$\overrightarrow{DC}_{2k+1} \centernot{\xrightarrow{push}} \overrightarrow{K}$ since 
$\overrightarrow{K}$ is bipartite. 

Note that  $\overrightarrow{D}_2 \times_p \overrightarrow{DC}_{2k+1} = \overrightarrow{D}_2 \times \overrightarrow{DC}_{2k+1}$, since all vertices of $\overrightarrow{D}_2$ are push invariant. Furthermore, $\overrightarrow{D}_2 \times \overrightarrow{DC}_{2k+1}$ is nothing but 
the directed cycle $\overrightarrow{DC}_{4k+2}$ on $4k+2$ vertices, that is the unbalanced even cycle on $4k+2$ vertices. 
Observe that $\overrightarrow{DC}_{4k+2}$ admits a homomorphism to any unbalanced even cycle of a lesser length. In particular, since $\overrightarrow{K}$ contains an unbalanced cycle of lesser length, $\overrightarrow{DC}_{4k+2} \xrightarrow{push} \overrightarrow{K}$. This implies that $\overrightarrow{K}$ is pushably non-multiplicative. 
\end{proof}

\begin{theorem}\label{thm push-oriented-cycle}
    An oriented cycle is pushably multiplicative if and only if it is a balanced and even.
\end{theorem}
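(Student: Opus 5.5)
Every oriented cycle falls into one of three cases, using the classification from \cite{das2026update} recalled in the proof of Theorem~\ref{thm push-oriented-bipartite}: it is balanced and even, or odd, or even and unbalanced. The plan is to dispose of the balanced even case with Theorem~\ref{thm push-oriented-bipartite} and to exhibit a counterexample pair $\overrightarrow{G},\overrightarrow{H}$ in each of the other two cases.

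\emph{Balanced and even.} A balanced even oriented cycle is a balanced bipartite oriented graph, so it is pushably multiplicative by Theorem~\ref{thm push-oriented-bipartite}.

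\emph{Even and unbalanced.} Such a cycle is an unbalanced bipartite oriented graph, so Theorem~\ref{thm push-oriented-bipartite} shows it is pushably non-multiplicative.

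\emph{Odd.} This is the main case. An odd oriented cycle $\overrightarrow{C}$ on $2k+1$ vertices is pushably equivalent to the directed cycle $\overrightarrow{DC}_{2k+1}$. It therefore suffices to show that $\overrightarrow{DC}_{2k+1}$ is pushably non-multiplicative, which I plan to do by mimicking the $\overrightarrow{D}_2$ trick.

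\begin{enumerate}[(1)]
\item Since $\overrightarrow{C}$ is oriented, $\overrightarrow{D}_2 \centernot{\xrightarrow{push}} \overrightarrow{C}$.
\item Let $\overrightarrow{H}$ be an odd directed cycle $\overrightarrow{DC}_{2m+1}$ with $2m+1>2k+1$. Pushably, odd cycles behave like undirected odd cycles, so a shorter odd cycle cannot receive a pushable homomorphism from a longer one. This would need the fact from \cite{das2026update} that $\overrightarrow{DC}_{2m+1}\xrightarrow{push}\overrightarrow{DC}_{2k+1}$ fails for $m>k$; otherwise I would argue it directly via $AT$. The reason to expect this is that $AT(\overrightarrow{DC}_{2k+1})$ is $\overrightarrow{DC}_{4k+2}$ or has girth-type restrictions.
\item Since all vertices of $\overrightarrow{D}_2$ are push invariant, $\overrightarrow{D}_2\times_p\overrightarrow{H}=\overrightarrow{D}_2\times\overrightarrow{H}=\overrightarrow{DC}_{4m+2}$. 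This is an unbalanced even cycle of length $4m+2$, which is pushably equivalent to a directed cycle.
\item It then remains to check $\overrightarrow{DC}_{4m+2}\xrightarrow{push}\overrightarrow{C}$. Pushing every other vertex of $\overrightarrow{DC}_{4m+2}$ is not quite right. Instead, map $\overrightarrow{DC}_{4m+2}$ into $AT(\overrightarrow{DC}_{2k+1})$, which contains the directed cycle $v_1\cdots v_{2k+1}$ together with anti-twins. By Theorem~\ref{th old hom}(iii) it suffices to find a homomorphism $\overrightarrow{DC}_{4m+2}\to AT(\overrightarrow{DC}_{2k+1})$. This holds if $AT(\overrightarrow{DC}_{2k+1})$ contains closed directed walks of length $4m+2$, which we obtain by combining the $(2k+1)$-cycle with the short cycles $v_i v_{i+1}^* \ldots$ present in $AT$.
\end{enumerate}

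The main obstacle is step (4) jointly with step (2). I need a length, or more generally an $\overrightarrow{H}$, such that $\overrightarrow{H}\centernot{\xrightarrow{push}}\overrightarrow{C}$ while $\overrightarrow{D}_2\times_p\overrightarrow{H}\xrightarrow{push}\overrightarrow{C}$. If the odd-cycle choice fails, my fallback is Corollary~1. I would take $\overrightarrow{G}=\overrightarrow{D}_2$, compute $AT(\overrightarrow{C})^{AT(\overrightarrow{D}_2)}$, whose vertices are pairs $(a,b)$ with both $ab$ and $ba$ arcs of $AT(\overrightarrow{C})$, and show that it has no pushable homomorphism to $\overrightarrow{C}$, for instance because it contains a long odd directed cycle.
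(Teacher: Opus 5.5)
\textbf{There is a genuine gap in the odd case.} Your even cases match the paper: both directions come from Theorem~\ref{thm push-oriented-bipartite}. The odd case is the real content, and your step (2) is false.

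An oriented cycle with $n_1$ forward and $n_2$ backward arcs maps to $\overrightarrow{DC}_{2k+1}$ iff $(2k+1)\mid(n_1-n_2)$. On a cycle, pushing can reverse any even number of arcs. Reversing $b$ arcs of $\overrightarrow{DC}_{2m+1}$ gives net length $2m+1-2b$. Take $b$ to be whichever of $m-k$ and $m+k+1$ is even; both lie in $[0,2m+1]$ when $m\ge k$. The net length then becomes $\pm(2k+1)$. Hence $\overrightarrow{DC}_{2m+1}\xrightarrow{push}\overrightarrow{DC}_{2k+1}$ for every $m\ge k$ (for example, $\overrightarrow{DC}_5\xrightarrow{push}\overrightarrow{DC}_3$). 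This mirrors $C_{2m+1}\to C_{2k+1}$ for undirected cycles, so your analogy points the other way.

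Choosing $m<k$ does not help either. Reversing an even number $b$ of arcs of $\overrightarrow{DC}_{4m+2}$ gives an even net length $4m+2-2b$. To be a multiple of $2k+1$ it must be a multiple of $4k+2>4m+2$, hence $0$, which forces $b=2m+1$, an odd number. So $\overrightarrow{D}_2\times_p\overrightarrow{DC}_{2m+1}\centernot{\xrightarrow{push}}\overrightarrow{DC}_{2k+1}$. For an even cycle $\overrightarrow{H}$, $\overrightarrow{D}_2\times\overrightarrow{H}$ is just two disjoint copies of $\overrightarrow{H}$. So with $\overrightarrow{G}=\overrightarrow{D}_2$, no cycle can serve as the second factor.

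Your fallback is the paper's route, but you have not carried it out, and as set up it would not work. The vertices of $AT(\overrightarrow{C})^{\overrightarrow{D}_2}$ are all pairs $(a,b)$, with an arc $(a,b)\to(c,d)$ iff $ad$ and $bc$ are arcs of $AT(\overrightarrow{C})$. Your condition ``$ab$ and $ba$ are arcs'' describes loops, and there are none, so your vertex set would be empty. A long odd directed cycle is also no obstruction, by the computation above.

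The missing idea is that the witness must contain several cycles constrained by one and the same push set. The paper uses the subgraph $\overrightarrow{Z}$, built from three cycles:
\begin{itemize}
\item a directed $(4k+2)$-cycle $\overrightarrow{C}_M$;
\item a directed $(4k+1)$-cycle $\overrightarrow{C}_L$ sharing a $2k$-arc path with $\overrightarrow{C}_M$;
\item a cycle $\overrightarrow{C}_R$ formed by two parallel directed paths of lengths $2k$ and $2k+1$, the first also lying on $\overrightarrow{C}_M$.
\end{itemize}
Divisibility on $\overrightarrow{C}_M$, together with the even number of reversals, forces none or all of its arcs to be reversed. Then $\overrightarrow{C}_L$ forces exactly $k$ (resp.\ $k+1$) reversals on its private path, and this number must be even. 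Meanwhile $\overrightarrow{C}_R$ forces exactly $k+1$ (resp.\ $k$) reversals on its private path, also even. These two requirements contradict each other, which is what proves the odd case.
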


\begin{proof}
    We have already shown that even oriented cycles are pushably multiplicative if and only if they are balanced in Theorem~\ref{thm push-oriented-bipartite}. Thus, it is enough to prove that directed odd cycles are pushably non-multiplicative, since all oriented odd cycles are equivalent to the directed cycle. In other words, it is enough to prove that the directed cycle 
    $\overrightarrow{DC}_{2k+1}$ is pushably non-multiplicative for any $k \geq 1$. 
    For convenience assume that $\overrightarrow{DC}_{2k+1}$ is the following directed cycle:
    $0,1,\cdots, \cdots, 2k, 0$.

Let $\overrightarrow{D}_2$ denote the directed 
graph on $2$ vertices $x$ and $y$ having two 
arcs $xy$ and $yx$. 
Since both $x$ and $y$ are push invariant, 
$AT(\overrightarrow{D}_2) = \overrightarrow{D}_2$. Clearly, $\overrightarrow{D}_2$ does not admit a pushable homomorphism to $\overrightarrow{DC}_{2k+1}$. Thus, if we can show that 
the graph $AT(\overrightarrow{DC}_{2k+1})^{\overrightarrow{D}_2}$ does not admit a homomorphism to  $\overrightarrow{DC}_{2k+1}$, then using Lemma~\ref{lem push-multiplicative} we can conclude that 
$\overrightarrow{DC}_{2k+1}$ is pushably non-multiplicative.

Indeed, we are going to show that a particular subgraph of $AT(\overrightarrow{DC}_{2k+1})^{\overrightarrow{D}_2}$ does not admit a homomorphism to 
$\overrightarrow{DC}_{2k+1}$. As a first step, let us discuss a convention to describe the above-mentioned subgraph. Notice that, the vertices of $AT(\overrightarrow{DC}_{2k+1})^{\overrightarrow{D}_2}$ are functions from $V(\overrightarrow{D}_2)$ to $V(AT(\overrightarrow{DC}_{2k+1}))$. 
If $f$ is such a function, then as a convention let us denote it by $(f(x),f(y))$. Notice that it is enough to know the values of $f(x)$ and $f(x)$ to know the whole function. 

Using this convention, in Fig.~\ref{fig AT-directed-cycle} we have presented a subgraph of $AT(\overrightarrow{DC}_{2k+1})^{\overrightarrow{D}_2}$. Let us call this graph $\overrightarrow{Z}$. 
It is possible to verify that this graph is indeed a subgraph of  $AT(\overrightarrow{DC}_{2k+1})^{\overrightarrow{D}_2}$.  While verifying, one may use the following observation for convenience:  there is an arc from 
$(a,b)$ to $(c,d)$ in $\overrightarrow{Z}$
if $ad, bc$ are arcs in $AT(\overrightarrow{DC}_{2k+1})$.

Now all that remains is to prove that 
$\overrightarrow{Z}$ does not admit a pushable homomorphism to $\overrightarrow{DC}_{2k+1}$. 
Let us first try to understand the structure of $\overrightarrow{Z}$, beginning with a directed path decomposition. We will list down some directed paths of $\overrightarrow{Z}$ which partitions its arcs.  
\begin{itemize}
    \item $\overrightarrow{P}_{L1} = (2k,0), (1,(2k-1)^*), ((2k-2),2), \dots ,((2k-1),1^*), (0,2k)$,

    \item $\overrightarrow{P}_{L2} = (0,2k), (0,1), (2,1) \dots (2k,(2k-1)), (2k,0)$,

    \item $\overrightarrow{P}_{R1} = (0,2k^*), ((2k-1),1), (2,(2k-2)^*), \dots ,(1,(2k-1)), (2k,0^*)$,

    \item $\overrightarrow{P}_{R2} = (0,2k^*), ((2k-1),2k^*), ((2k-1),(2k-1)^*), \dots ,(1,0^*), (2k,0^*)$,

    \item $\overrightarrow{P}_U = (0,2k),(0,2k^*)$,

    \item $\overrightarrow{P}_D = (2k,0^*),(2k,0)$.
\end{itemize}

Observe that the above-listed directed paths have the following size (number of arcs).

\begin{table}[h]
\centering
\renewcommand{\arraystretch}{1.7}
\begin{tabular}{|c|c|}
\hline
\textbf{Name of directed path} & \textbf{Size} \\
\hline
$\overrightarrow{P}_{L1}$ & $2k$ \\
\hline
$\overrightarrow{P}_{L2}$ & $2k + 1$ \\
\hline
$\overrightarrow{P}_{R1}$ & $2k$ \\
\hline
$\overrightarrow{P}_{R2}$ & $2k + 1$ \\
\hline
$\overrightarrow{P}_{U}$  & $1$ \\
\hline
$\overrightarrow{P}_{D}$  & $1$ \\
\hline
\end{tabular}
\end{table}

Next note that there are the following oriented cycles in $\overrightarrow{Z}$, all induced by a collection of the above mentioned directed paths.  
\begin{itemize}
    \item  $\overrightarrow{C}_{M} = \overrightarrow{P}_U, \overrightarrow{P}_{R1}, \overrightarrow{P}_{D}, \overrightarrow{P}_{L1}$, 

    \item $\overrightarrow{C}_{L} = \overrightarrow{P}_{L1},\overrightarrow{P}_{L2}$,

    \item $\overrightarrow{C}_{R} = \overrightarrow{P}_{R1},\overrightarrow{P}_{R2}$.
\end{itemize}
Observe that the above-listed oriented cycles have the following orders, number of forward arcs, number of backward arcs. Here since all the three oriented cycles are already embedded in a planar way in Fig.~\ref{fig AT-directed-cycle}, the arcs oriented clockwise will be counted as forward, and the arcs oriented counter-clockwise will be counted as backward. 

\begin{table}[h]
\centering
\renewcommand{\arraystretch}{1.7}
\begin{tabular}{|c|c|c|c|}
\hline
\textbf{Name of oriented cycle} & \textbf{Order} & \textbf{Forward arcs} & \textbf{Backward arcs} \\
\hline
$\overrightarrow{C}_{M}$ & $4k + 2$ & $4k + 2$ & $0$\\
\hline
$\overrightarrow{C}_{L}$ & $4k + 1$ & $0$ & $4k + 1$\\
\hline
$\overrightarrow{C}_{R}$ & $4k + 1$ & $2k + 1$ & $2k$\\
\hline
\end{tabular}
\end{table}

Suppose, $\overrightarrow{C}$ be an oriented cycle 
having $n_1$ forward arcs and $n_2$ backward arcs. The oriented cycle admits a 
homomorphism to $DC_{2k+1}$ if and only if $(2k+1)$ 
divides $(n_1-n_2)$. Notice that, if you allow to 
push the vertices of $\overrightarrow{C}$, then the 
number of forward and backward arcs will change yet 
their parity will be the same. We will use these two 
observations in the following.

Let us assume that $\overrightarrow{Z}'$ is a push 
equivalent directed graph of $\overrightarrow{Z}$ 
that admits a homomorphism to 
$\overrightarrow{DC}_{2k+1}$. Let us assume that the oriented cycles $\overrightarrow{C}_{M}, \overrightarrow{C}_{L}, \overrightarrow{C}_{R}$ got converted into $\overrightarrow{C}'_{M}, \overrightarrow{C}'_{L}, \overrightarrow{C}'_{R}$, respectively, in $\overrightarrow{Z}'$.

Suppose that  $\overrightarrow{C}'_{M}$ has
$n_1$ forward and $n_2$ backward arcs. 
Since $\overrightarrow{C}_{M}$ had $4k+2$ forward and $0$ backward arcs, $n_1, n_2$ must be even. Moreover, 
 we must have 
$n_1-n_2 \equiv 0~(mod~2k+1)$. That means the only two feasible solutions are
$(n_1, n_2) = (4k+2, 0)$ and 
$(0, 4k+2)$.  That means, $\overrightarrow{C}'_{M}$ is a directed cycle. Let us assume that all its arcs are forward (the other case, when all its arcs are backward, can be handled in a similar way, hence is omitted from the proof).

Notice that $\overrightarrow{C}_{L}$ has $0$ forward and $4k+1$ backward arcs. That means, $\overrightarrow{C}'_{L}$ has even number of forward and odd number of backward arcs. 
Moreover,
$\overrightarrow{C}'_{L}$ has at least $2k$ backward arcs (as they are part of $\overrightarrow{C}'_{M}$). 
Suppose $\overrightarrow{C}'_{L}$ have $n_1$ forward and $n_2$ backward arcs, then we must have 
$n_1-n_2 \equiv 0~(mod~2k+1)$. 
Observe that there is only one solutions that will satisfy all the constraints, namely, 
$(n_1, n_2) = (k, 3k+1)$ where $k$ must be even. 
 
Notice that $\overrightarrow{C}_{R}$ has $2k+1$ forward and $2k$ backward arcs. That means, $\overrightarrow{C}'_{R}$ has odd number of forward and even number of backward arcs. 
Moreover,
$\overrightarrow{C}'_{R}$ has at least $2k$ backward arcs (as they are part of $\overrightarrow{C}'_{M}$). 
Suppose $\overrightarrow{C}'_{L}$ have $n_1$ forward and $n_2$ backward arcs, then we must have 
$n_1-n_2 \equiv 0~(mod~2k+1)$. 
Observe that there is only one solutions that will satisfy all the constraints, namely, 
$(n_1, n_2) = (k, 3k+1)$ where $k$ must be odd.

This is a contradiction as $k$ cannot be simultaneously even and odd. Thus, $\overrightarrow{Z}$ does not admit a pushable homomorphism to $\overrightarrow{DC}_{2k+1}$. 
\end{proof}

\begin{figure}
    \centering
    \tikzset{
    mid-isharrow/.style={
        postaction={
            decorate,
            decoration={
                markings,
                mark=at position 0.7 with {
                    \arrow{Latex[length=2mm]}
                }
            }
        }
    }
}

\tikzset{
    endarrow/.style={
        postaction={
            decorate,
            decoration={
                markings,
                mark=at position 1 with {
                    \arrow{Latex[length=2mm]}
                }
            }
        }
    }
}

\tikzset{
  every label/.style={fill=white,inner sep=1pt}
}

\usetikzlibrary{backgrounds}

\pgfdeclarelayer{bg}
\pgfdeclarelayer{fg}
\pgfsetlayers{bg,main,fg}

\begin{tikzpicture}[scale=0.7,
vertex/.style={circle,fill=black,inner sep=1.5pt},
every label/.style={font=\small, fill=white!50, inner sep=1pt},
label distance=4pt]

    \begin{scope}[xshift=0cm]
        
        \node (a) at (-1,0) {$(0, 1)$};
        \node (b) at (3,0) {$(0, 2k)$};
        \node (c) at (7,0) {$(0, 2k^*)$};
        \node (d) at (11,0) {$((2k-1), 2k^*)$};
        
        \draw[endarrow] (b) -- (a);
        \draw[endarrow] (b) -- (c);
        \draw[endarrow] (c) -- (d);

        \node (l1) at (-1,-2) {$(2, 1)$};
        \node (l2) at (-1,-4) {};
        \node (l4) at (-1,-6) {$((2k-2),(2k-1))$};
        \node (l5) at (-1,-8) {$(2k, (2k-1))$};
        
        \draw[endarrow] (a) -- (l1);
        \draw[endarrow,dotted] (l1) -- (l4);
        \draw[endarrow] (l4) -- (l5);

        \node (m1) at (7,-2) {$((2k-1), 1)$};
        \node (m2) at (7,-4) {$(2, (2k-2)^*)$};
        \node (m3) at (7,-6) {$(1, (2k-1))$};
        \node (m4) at (7,-8) {$(2k, 0^*)$};

        \draw[endarrow] (c) -- (m1);
        \draw[endarrow] (m1) -- (m2);
        \draw[dotted,endarrow] (m2) -- (m3);
        \draw[endarrow] (m3) -- (m4);

        \node (r1) at (11,-2) {$((2k-1), (2k-2)^*)$};
        \node (r2) at (11,-6) {$(1, 2^*)$};
        \node (r3) at (11,-8) {$(1, 0^*)$};
        
        \draw[endarrow] (d) -- (r1);
        \draw[dotted,endarrow] (r1) -- (r2);
        \draw[endarrow] (r2) -- (r3);

        \node (b1) at (3,-8) {$(2k, 0)$};
        \node (b2) at (3,-6) {$(1, (2k-1)^*)$};
        \node (b3) at (3,-4) {$((2k-2), 2)$};
        \node (b4) at (3,-2) {$((2k-1), 1^*)$};

        \draw[endarrow] (b1) -- (b2);
        \draw[endarrow] (b2) -- (b3);
        \draw[endarrow,dotted] (b3) -- (b4);
        \draw[endarrow] (b4) -- (b);
        
        \draw[endarrow] (l5) -- (b1);
        \draw[endarrow] (m4) -- (b1);
        \draw[endarrow] (r3) -- (m4); 
    
    \end{scope}

\end{tikzpicture}
    \caption{A subgraph $\overrightarrow{Z}$ of $AT(\overrightarrow{DC}_{2k+1})^{\overrightarrow{D}_2}$.}
    \label{fig AT-directed-cycle}
\end{figure}

\begin{theorem}\label{thm push-oriented-TT}
    A transitive tournament $\overrightarrow{TT}_n$ on $n$ vertices is pushably multiplicative if and only if $n \leq 2$. 
\end{theorem}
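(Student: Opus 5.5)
The plan is to treat $n \leq 2$ and $n \geq 3$ separately, handling the second case with the corollary of Lemma~\ref{lem push-multiplicative} and the test graph $\overrightarrow{G}=\overrightarrow{D}_2$, exactly as in the proof of Theorem~\ref{thm push-oriented-cycle}.

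For $n=1$: $\overrightarrow{G}\xrightarrow{push}\overrightarrow{TT}_1$ holds iff $\overrightarrow{G}$ has no arcs. If both $\overrightarrow{G}$ and $\overrightarrow{H}$ have an arc, then $AT(\overrightarrow{G})\times AT(\overrightarrow{H})$ has an arc between two vertices of $X$. So $\overrightarrow{G}\times_p\overrightarrow{H}$ has an arc and does not map to $\overrightarrow{TT}_1$. For $n=2$: $\overrightarrow{TT}_2$ is a balanced bipartite oriented graph, so it is pushably multiplicative by Theorem~\ref{thm push-oriented-bipartite}.

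For $n\geq 3$, first note $\overrightarrow{D}_2\centernot{\xrightarrow{push}}\overrightarrow{TT}_n$, since $\overrightarrow{TT}_n$ is oriented. Since $AT(\overrightarrow{D}_2)=\overrightarrow{D}_2$, it suffices to show that $\overrightarrow{E}=AT(\overrightarrow{TT}_n)^{\overrightarrow{D}_2}$ does not push to $\overrightarrow{TT}_n$. I will write its vertices as pairs $(a,b)$ and use the rule from the proof of Theorem~\ref{thm push-oriented-cycle}: $(a,b)\to(c,d)$ iff $ad$ and $bc$ are arcs of $AT(\overrightarrow{TT}_n)$. In $AT(\overrightarrow{TT}_n)$, an arc $uv$ forces $vu^*$ and $v^*u$.

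The cleanest obstruction is a clique. A pushable homomorphism to a loopless oriented graph is injective on the vertices of any clique, so any $(n+1)$-clique in the underlying graph of $\overrightarrow{E}$ blocks a map to $\overrightarrow{TT}_n$.
\begin{itemize}
\item Label $\overrightarrow{TT}_n$ by $1,\dots,n$ with $i\to j$ for $i<j$. Then $j\to i$ in $\overrightarrow{TT}_n$ (that is, $j<i$) gives both $ij^*$ and $i^*j$. Hence $(i,i^*)\to(j,j^*)$ for all $j<i$, and likewise $(i^*,i)\to(j^*,j)$.
\item Also $(i,i^*)$ is adjacent to $(j^*,j)$ for every $i\neq j$. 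So $S=\{(i,i^*),(i^*,i)\}$ induces $K_{2n}$ minus a perfect matching.
\item This graph alone only forces $n$ colours, so it does not yet suffice.
\item I will therefore look for extra vertices $(a,b)$ with $a,b$ chosen among $i,i^*$ for distinct indices that are adjacent to a large part of $S$. The aim is an $(n+1)$-clique, for instance one vertex from each matched pair plus one new vertex such as $(1,n^*)$ or $(n,1)$ adjacent to all of them.
\end{itemize}

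If no such clique exists, I will fall back to a parity and counting argument on oriented cycles of $\overrightarrow{E}$, as was done for $\overrightarrow{DC}_{2k+1}$. A homomorphism to $\overrightarrow{TT}_n$ sends every oriented cycle to an acyclic image, so the net length along any directed walk is at most $n-1$. If $\overrightarrow{E}$ contains a directed path of length $n$ whose orientation cannot be broken by pushing (for example, a path lying inside a structure like $S$ where pushing is constrained by triangles), then no push-equivalent copy of $\overrightarrow{E}$ can map to $\overrightarrow{TT}_n$.

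The main obstacle is this last step: exhibiting an explicit subgraph of $\overrightarrow{E}$ that genuinely needs more than $n$ vertices of $\overrightarrow{TT}_n$ under every pushing, while making sure the construction fails for $n=2$, as Theorem~\ref{thm push-oriented-bipartite} requires. The triangle $1\to2\to3$, $1\to 3$ present only when $n\geq3$ should be the ingredient that supplies the extra vertex or the unbreakable odd structure.
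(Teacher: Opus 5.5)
\textbf{There is a genuine gap: the key obstruction is never constructed, and the route you favour cannot produce it.} Your reduction is the right one and is the paper's: apply the corollary of Lemma~\ref{lem push-multiplicative} with $\overrightarrow{G}=\overrightarrow{D}_2$. Your direct treatment of $n=1$ is fine. But the proposal stops exactly where the proof has to begin. You never exhibit a subgraph of $AT(\overrightarrow{TT}_n)^{\overrightarrow{D}_2}$ that fails to push to $\overrightarrow{TT}_n$.

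Your specific extension of $S$ fails. In $AT(\overrightarrow{TT}_n)$ a vertex $x$ is non-adjacent only to $x$ and $x^*$. So adjacency of $(a,b)$ and $(c,d)$ forces $d\notin\{a,a^*\}$ and $c\notin\{b,b^*\}$. An $n$-clique inside $S$ takes one vertex from each pair $\{(i,i^*),(i^*,i)\}$, so its second coordinates meet every class $\{i,i^*\}$. Hence no further vertex, in particular neither $(1,n^*)$ nor $(n,1)$, is adjacent to all of it.

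More generally, identify $AT(\overrightarrow{TT}_n)$ with the circulant on $\mathbb{Z}_{2n}$ with arcs $x\to x+s$ for $1\le s\le n-1$, and $x^*=x+n$. One checks that the underlying graph of $AT(\overrightarrow{TT}_n)^{\overrightarrow{D}_2}$ has clique number exactly $n$. Sketch:
\begin{enumerate}
\item A class $\{i,i^*\}$ used by one clique vertex in both coordinates is used by no other clique vertex.
\item On the remaining clique vertices, the first-coordinate classes and second-coordinate classes are disjoint.
\item The sign condition lets at most two of these vertices share a first-coordinate class, and at most two share a second-coordinate class. So there are at most as many of them as classes they use.
\end{enumerate}
Hence no clique argument works for any $n\ge 3$.

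Your fallback is not an obstruction as stated either. A path is a tree, so pushing can realise any orientation of it. Whatever blocks the map must come from cycles, and must hold under all pushings at once.

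\textbf{What the paper does instead.} The missing idea is to stop trying to beat $n$ by size. The paper treats $n=3$ separately: $\overrightarrow{TT}_3$ is pushably equivalent to $\overrightarrow{DC}_3$, so Theorem~\ref{thm push-oriented-cycle} applies. That is where a cycle-parity argument of the kind you mention is actually used.

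For $n\ge 4$ the paper notes that $AT(\overrightarrow{TT}_4)^{\overrightarrow{D}_2}$ is a subgraph of $AT(\overrightarrow{TT}_n)^{\overrightarrow{D}_2}$. It then exhibits there an $8$-vertex subgraph $\overrightarrow{F}$: two $4$-vertex tournaments joined by a perfect matching. This $\overrightarrow{F}$ cannot be pushed to an acyclic digraph at all. Both tournaments would have to become transitive, which leaves $4\times 8$ pushings to check, and each one produces a directed cycle through two matching arcs. Becoming acyclic after pushing is necessary for mapping to any transitive tournament. So this one fixed obstruction handles every $n\ge 4$ simultaneously.
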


\begin{proof}
    The transitive tournaments on $1$ and $2$ vertices are pushably multiplicative using 
    Theorem~\ref{thm push-oriented-bipartite} since they are balanced bipartite graphs. The transitive tournament 
    on $3$ vertices is pushably non-multiplicative due to Theorem~\ref{thm push-oriented-cycle} as it is pushably equivalent to the directed cycle on $3$ vertices.

    Thus, we are left with showing that transitive tournaments $\overrightarrow{TT}_n$ on $n$ vertices are pushably non-multiplicative for all $n \geq 4$. 

    Let $\overrightarrow{D}_2$ denote the directed 
graph on $2$ vertices $x$ and $y$ having two 
arcs $xy$ and $yx$. 
Since both $x$ and $y$ are push invariant, 
$AT(\overrightarrow{D}_2) = \overrightarrow{D}_2$. Clearly, $\overrightarrow{D}_2$ does not admit a pushable homomorphism to $\overrightarrow{TT}_{n}$ for any $n \geq 4$. 
Thus, if we can show that 
the graph $AT(\overrightarrow{TT}_{n})^{\overrightarrow{D}_2}$ does not admit a homomorphism to  $\overrightarrow{DC}_{2k+1}$, then using Lemma~\ref{lem push-multiplicative} we can conclude that 
$\overrightarrow{TT}_{n}$ is pushably non-multiplicative.

Indeed, we are going to show that a particular subgraph of $AT(\overrightarrow{TT}_{n})^{\overrightarrow{D}_2}$ does not admit a homomorphism to 
$\overrightarrow{TT}_{n}$, for any $n \geq 4$. As a first step, let us discuss a convention to describe the above-mentioned subgraph. Notice that, the vertices of $AT(\overrightarrow{TT}_{n})^{\overrightarrow{D}_2}$ are functions from $V(\overrightarrow{D}_2)$ to $V(AT(\overrightarrow{TT}_{n}))$. 
If $f$ is such a function, then as a convention let us denote it by $(f(x),f(y))$. Notice that it is enough to know the values of $f(x)$ and $f(x)$ to know the whole function. 
Moreover, as a convention, we can suppose that the vertices of $\overrightarrow{TT}_n$ are $0, 1, \ldots, n-1$, while its arcs are oriented from $i$ to $j$ where $i < j$. From this naming convention it is clear that if $\overrightarrow{F}$ is a subgraph of 
$AT(\overrightarrow{TT}_{4})^{\overrightarrow{D}_2}$, then it is also a subgraph of 
$AT(\overrightarrow{TT}_{n})^{\overrightarrow{D}_2}$ for all $n \geq 4$. We present such a subgraph $\overrightarrow{F}$ on $8$ vertices and show that it does not admit a pushable homomorphism to any $\overrightarrow{TT}_n$ for $n \geq 4$. That will conclude the proof.

First of all, let us present the special subgraph $\overrightarrow{F}$ in Fig.~\ref{fig F}. One can verify that $\overrightarrow{F}$ is indeed a subgraph of $AT(\overrightarrow{TT}_{4})^{\overrightarrow{D}_2}$, and thus a subgraph of 
$AT(\overrightarrow{TT}_{n})^{\overrightarrow{D}_2}$ for all $n \geq 4$. Thus, we are left to prove that $\overrightarrow{F}$ does not admit a pushable homomorphism to $\overrightarrow{TT}_n$ 
for any $n \geq 4$. 
Notice that, $\overrightarrow{F}$ admits a pushable homomorphism to $\overrightarrow{TT}_8$ 
if and only if it is possible to push 
some vertices of $\overrightarrow{F}$ and convert it into a directed acyclic graph. 
Let $\overrightarrow{F}'$ be a  directed acyclic graph push equivalent
to $\overrightarrow{F}$. 
Let $F_1 = \{(1^*,0), (3,2^*), (3^*,0^*), (1^*,2)\}$ and $F_2 = \{(1,0), (3^*,2^*), (3,0^*), (1^*,2^*)\}$. 
Observe that, the two induced tournaments $\overrightarrow{F}'[F_1]$ and $\overrightarrow{F}'[F_2]$ , respectively, must be transitive tournaments. 

We do an exhaustive case analysis to prove the impossibility of converting $\overrightarrow{F}$
into a directed acyclic graph $\overrightarrow{F}'$ via pushing some vertices. Suppose, $S$ is the set of vertices that we pushed to obtain $\overrightarrow{F}'$ from $\overrightarrow{F}$. Since pushing $S$ or $V(\overrightarrow{F}) \setminus S$ has the same effect, we can assume that $S$ contains the vertex $(3^*,0^*)$ in particular. Moreover, let $X = S \cap F_1$ and $Y = S \cap F_2$. The case analysis is based on the following assumptions and observations. 

\begin{figure}
    \centering
                \tikzset{
    mid-isharrow/.style={
        postaction={
            decorate,
            decoration={
                markings,
                mark=at position 0.7 with {
                    \arrow{Latex[length=2mm]}
                }
            }
        }
    }
}

\tikzset{
    endarrow/.style={
        postaction={
            decorate,
            decoration={
                markings,
                mark=at position 1 with {
                    \arrow{Latex[length=2mm]}
                }
            }
        }
    }
}

\tikzset{
  every label/.style={fill=white,inner sep=1pt}
}

\usetikzlibrary{backgrounds}

\pgfdeclarelayer{bg}
\pgfdeclarelayer{fg}
\pgfsetlayers{bg,main,fg}

\begin{tikzpicture}[scale=1,
vertex/.style={circle,fill=black,inner sep=1.5pt},
every label/.style={font=\small, fill=white!50, inner sep=1pt},
label distance=4pt]

        




    \begin{scope}[xshift=0cm]

        \node (10) at (0,0) {$(1,0)$};
        \node (3*2*) at (2,0) {$(3^*,2^*)$};
        \node (30*) at (4,0) {$(3,0^*$)};
        \node (1*2*) at (6,0) {$(1^*,2^*)$};
        
        \node (1*0) at (0,1) {$(1^*,0)$};
        \node (32*) at (2,1) {$(3,2^*)$};
        \node (3*0*) at (4,1) {$(3^*,0^*)$};
        \node (1*2) at (6,1) {$(1^*,2)$};

        \draw[endarrow] (1*0) to (32*);
        \draw[endarrow] (1*0) to (10);
        \draw[endarrow] (32*) to (3*2*);
        \draw[endarrow] (32*) to (3*0*);
        \draw[endarrow] (3*0*) to (1*2);
        \draw[endarrow,bend right=20] (3*0*) to (1*0);
        \draw[endarrow] (1*2) to (1*2*);
        \draw[endarrow,bend right=20] (1*2) to (32*);
        \draw[endarrow,bend right=20] (1*2) to (1*0);
        
        \draw[endarrow,bend left=20] (1*2*) to (3*2*);
        \draw[endarrow,bend left=20] (1*2*) to (10);
        \draw[endarrow] (30*) to (1*2*);
        \draw[endarrow] (30*) to (3*2*);
        \draw[endarrow] (30*) to (3*0*);
        \draw[endarrow] (3*2*) to (10);
        \draw[endarrow,bend right=20] (10) to (30*);

        \node at (7,1) {$F_1$};
        \node at (7,0) {$F_2$};

        \node at (3,-1.5) {$\overrightarrow{F}$};

    \end{scope}

\end{tikzpicture}
    \caption{The graph $\overrightarrow{F}$. The graphs $\overrightarrow{F_1}$ and $\overrightarrow{F_2}$ refer to the subgraphs of $F$ induced by the four top and four bottom vertices, respectively.
    \label{fig F}}
\end{figure}

First we observe that their are $4$ choices for $X$. In the table below, we list those four choices along with the topological order they induce on the tournament $\overrightarrow{F}'[F_1]$. 
\begin{center}
        \begin{tabular}[h]{ll}
                 $X$ &  topological order \\\hline
                $(3^*,0^*)$        & $(1^*,2), (1^*,0), (3^*,0^*),( 3,2^*)$ \\ 
               $(3^*,0^*), (1^*,2)$  &        ($1^*,0), (3^*,0^*), (3,2^*), (1^*,2)$ \\
                $(3^*,0^*), (3,2^*)$  &        $(3,2^*), (1^*,2), (1^*,0), (3^*,0^*)$ \\
                $(3^*,0^*), (1^*,2), (1^*,0)$    &        $(3^*,0^*), (3,2^*), (1^*,2), (1^*,0)$ 
        \end{tabular}
\end{center}

Similarly, there are $8$ choices for $Y$ that are listed below: 

       $$ \{ (3,0^*) \}, \{ (1,0) \}, \{ (3,0^*), (1^*,2^*)\}, \{ (1,0), (3,0^*) \}, \{(3^*,2^*), (1^*,2^*)\}, \{(1,0), (3^*,2^*)\},$$ 
       $$\{ (3,0^*), (3^*,2^*), (1^*,2^*)\}, \{(1,0), (3^*,2^*), (1^*,2^*)\} . $$ 

\medskip
  
Therefore, let us do the case analysis based on the choices of $X$. 
Let $\overrightarrow{F}''$ denote the directed graph obtained from $\overrightarrow{F}$ by pushing the vertices of $X$. The explicit drawing of $\overrightarrow{F}''$ is provided below for each of the $4$ choices of $X$. 
It is tedious, but easy to observe that, for each choice of $X$, there does not exist any choice of $Y$ (the options are the eight choices listed above) which will convert $\overrightarrow{F}''$ into a directed acyclic graph (after pushing the vertices of $Y$).

\begin{enumerate}[(i)]
    \item If $X = \{ (3^*,0^*) \}$, then $\overrightarrow{F}''$ is as follows:
        \begin{center}
            \begin{tikzcd}
                    (1^*,2)\arrow[rr,bend left]\arrow[rrr, bend left] \arrow[d]\arrow[r] & (1^*,0)\arrow[rr,bend left] \arrow[d]\arrow[r] &  (3^*,0^*) \arrow[d]\arrow[r] & (3,2^*) \arrow[d]\\
                    (1^*,2^*) \arrow[rrr,bend right]\arrow[r] & (1,0) \arrow[r] &  (3,0^*)\arrow[ll, bend left] \arrow[r] & (3^*,2^*)\arrow[ll, bend left]
            \end{tikzcd}
        \end{center}

    \item $X = \{ (3^*,0^*), (1^*,2) \}$, then $\overrightarrow{F}''$ is as follows:
                \begin{center}
                        \begin{tikzcd}
                                (1^*,0) \arrow[rr, bend left]\arrow[rrr, bend left]\arrow[d]\arrow[r] & (3^*,0^*) \arrow[rr, bend left]\arrow[d]\arrow[r] &  (3,2^*) \arrow[d]\arrow[r] & (1^*,2 )\arrow[d,<-]\\
                                (1,0) \arrow[r]\arrow[rr,bend right,<-]\arrow[rrr,bend right,<-] & (3,0^*) \arrow[rr,bend right]\arrow[r] &  (3^*,2^*) \arrow[r,<-] & (1^*,2^*)
                        \end{tikzcd}
                \end{center}

        \item $X = \{ (3^*,0^*), (3,2^*) \}$, then $\overrightarrow{F}''$ is as follows:
                \begin{center}
                        \begin{tikzcd}
                                (3,2^*) \arrow[rr, bend left]\arrow[rrr, bend left]\arrow[d,<-]\arrow[r] & (1^*,2) \arrow[rr, bend left]\arrow[d]\arrow[r] &  (1^*,0) \arrow[d]\arrow[r] & (3^*,0^*) \arrow[d]\\
                                (3^*,2^*)\arrow[rr,bend right]\arrow[rrr,bend right,<-] \arrow[r,<-] & (1^*,2^*) \arrow[r]\arrow[rr,bend right,<-] &  (1,0) \arrow[r] &        (3,0^*)
                        \end{tikzcd}
                \end{center}

        \item $X = \{(3^*,0^*), (1^*,2), (1^*,0)\}$, then $\overrightarrow{F}''$ is as follows:
                \begin{center}
                        \begin{tikzcd}
                                (3^*,0^*) \arrow[rr, bend left]\arrow[rrr, bend left]\arrow[d]\arrow[r] & (3,2^*) \arrow[rr, bend left]\arrow[d]\arrow[r] &  (1^*,2) \arrow[d,<-]\arrow[r] & (1^*,0) \arrow[d,<-]\\
                                (3,0^*) \arrow[r]\arrow[rr,bend right]\arrow[rrr,bend right,<-] & (3^*,2^*) \arrow[r,<-]\arrow[rr,bend right] &  (1^*,2^*) \arrow[r] & (1,0)
                        \end{tikzcd}
                \end{center}
\end{enumerate} 

That shows it is not possible to push some vertices of $\overrightarrow{F}$ and convert it into a directed acyclic graph $\overrightarrow{F}'$, and thus it concludes the proof. 
\end{proof}

\section{Non-multiplicative families of directed graphs}\label{sec directed non-multiplicative}
Furthermore, we explore a connection between 
pushably non-multiplicative directed graphs 
and non-multiplicative directed graphs.

\begin{lemma}\label{lem non-multiplicative}
    If $\overrightarrow{K}$ is a pushably non-multiplicative directed graph, then $AT(\overrightarrow{K})$ is a non-multiplicative directed graph. 
\end{lemma}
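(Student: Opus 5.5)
Since $\overrightarrow{K}$ is pushably non-multiplicative, there are directed graphs $\overrightarrow{G}$ and $\overrightarrow{H}$ with $\overrightarrow{G} \times_p \overrightarrow{H} \xrightarrow{push} \overrightarrow{K}$, $\overrightarrow{G} \centernot{\xrightarrow{push}} \overrightarrow{K}$ and $\overrightarrow{H} \centernot{\xrightarrow{push}} \overrightarrow{K}$. We will use $AT(\overrightarrow{G})$ and $AT(\overrightarrow{H})$ as the witnesses of non-multiplicativity of $AT(\overrightarrow{K})$ in \textbf{Dir}. This requires three facts: $AT(\overrightarrow{G}) \times AT(\overrightarrow{H}) \rightarrow AT(\overrightarrow{K})$, $AT(\overrightarrow{G}) \centernot{\rightarrow} AT(\overrightarrow{K})$, and $AT(\overrightarrow{H}) \centernot{\rightarrow} AT(\overrightarrow{K})$.

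The two non-existence statements follow directly from Theorem~\ref{th old hom}. The equivalence (i)$\Leftrightarrow$(ii) there says $\overrightarrow{G} \xrightarrow{push} \overrightarrow{K}$ if and only if $AT(\overrightarrow{G}) \rightarrow AT(\overrightarrow{K})$. So $\overrightarrow{G} \centernot{\xrightarrow{push}} \overrightarrow{K}$ gives $AT(\overrightarrow{G}) \centernot{\rightarrow} AT(\overrightarrow{K})$, and the same argument works for $\overrightarrow{H}$.

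For the existence statement, recall the identity established in the proof of Theorem~\ref{thm push-product}:
$$AT(\overrightarrow{G} \times_p \overrightarrow{H}) = AT(\overrightarrow{G}) \times AT(\overrightarrow{H}).$$
Applying Theorem~\ref{th old hom} (i)$\Rightarrow$(ii) to $\overrightarrow{G} \times_p \overrightarrow{H} \xrightarrow{push} \overrightarrow{K}$ gives $AT(\overrightarrow{G} \times_p \overrightarrow{H}) \rightarrow AT(\overrightarrow{K})$, that is, $AT(\overrightarrow{G}) \times AT(\overrightarrow{H}) \rightarrow AT(\overrightarrow{K})$. Here $\times$ is the ordinary categorical product in \textbf{Dir}, so $AT(\overrightarrow{K})$ is non-multiplicative.

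The main point to check is that identity: it is an equality of directed graphs, not merely an isomorphism up to pushing. The proof of Theorem~\ref{thm push-product} matches every vertex of $\overrightarrow{G}\times_p\overrightarrow{H}$ with its anti-twin inside $AT(\overrightarrow{G}) \times AT(\overrightarrow{H})$, including the cases with push invariant coordinates. Those anti-twins exhaust all vertices of the product, so the full product, with its arcs, is the anti-twinned graph of the induced subgraph on $X$. If one is worried about this, it is enough to have a homomorphism $AT(\overrightarrow{G}) \times AT(\overrightarrow{H}) \rightarrow AT(\overrightarrow{G} \times_p \overrightarrow{H})$, and the vertex correspondence built in that proof supplies one.
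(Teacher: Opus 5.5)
Your proof is correct and follows essentially the same route as the paper. You take the witnesses $\overrightarrow{G},\overrightarrow{H}$, use the identity $AT(\overrightarrow{G}\times_p\overrightarrow{H}) = AT(\overrightarrow{G})\times AT(\overrightarrow{H})$ from the proof of Theorem~\ref{thm push-product}, and transfer all three (non-)existence statements through Theorem~\ref{th old hom}; you simply cite Theorem~\ref{th old hom} explicitly where the paper leaves it implicit.
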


\begin{proof}
    Suppose $\overrightarrow{K}$ is a pushably non-multiplicative directed graph. 
    That means, there exists $\overrightarrow{G}$
    and $\overrightarrow{H}$ 
    such that 
    $\overrightarrow{G} \times_p \overrightarrow{H} \xrightarrow{push} \overrightarrow{K}$
    but 
    $\overrightarrow{G} \centernot{\xrightarrow{push}} \overrightarrow{K}$ and 
    $\overrightarrow{H} \centernot{\xrightarrow{push}} \overrightarrow{K}$.
    Since $AT(\overrightarrow{G}) \times AT(\overrightarrow{H}) = 
    AT(\overrightarrow{G} \times_p \overrightarrow{H})$, 
    by Theorem~\ref{thm push-product}, we can conclude that 
    $AT(\overrightarrow{G}) \times AT(\overrightarrow{H}) \rightarrow AT(\overrightarrow{K})$, but 
    $AT(\overrightarrow{G}) \not\rightarrow AT(\overrightarrow{K})$
    and
    $AT(\overrightarrow{H}) \not\rightarrow AT(\overrightarrow{K})$. 
    That means, $AT(\overrightarrow{K})$ is non-multiplicative. 
\end{proof}

Based on the above theorem, we are able to 
find new (infinite) families of non-
multiplicative directed graphs contributing 
in the answer of Question~\ref{ques digraph 
multiplicative}, and also building new tools 
to find non-multiplicative directed graphs.

\begin{theorem}\label{th non-multiplicative directed}
    The directed graph $AT(\overrightarrow{G})$ is non-multiplicative if $\overrightarrow{G}$
    is  an unbalanced bipartite directed graph, or an unbalanced oriented cycle, or a transitive tournament on at least three vertices. 
\end{theorem}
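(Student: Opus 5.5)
The plan is to reduce the statement directly to Lemma~\ref{lem non-multiplicative}. That lemma says that $AT(\overrightarrow{K})$ is non-multiplicative whenever $\overrightarrow{K}$ is pushably non-multiplicative. So it suffices to check that each of the three families consists of pushably non-multiplicative directed graphs. Each check is a citation of one of the three characterization theorems from Section~\ref{sec pushable multiplicativity}, together with a short argument about which members fall on the non-multiplicative side.

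\textbf{Unbalanced bipartite graphs.} By Theorem~\ref{thm push-oriented-bipartite}, a bipartite oriented graph is pushably multiplicative if and only if it is balanced. Hence every unbalanced bipartite directed graph is pushably non-multiplicative. The theorem is stated for oriented graphs, so I will read ``bipartite directed graph'' as ``bipartite oriented graph'', as in that theorem.

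\textbf{Unbalanced oriented cycles.} By Theorem~\ref{thm push-oriented-cycle}, an oriented cycle is pushably multiplicative only if it is balanced and even, so an unbalanced oriented cycle is pushably non-multiplicative. The one point to verify is that every odd oriented cycle is unbalanced. Being balanced means that after pushing, every vertex is a source or a sink. Along a cycle the vertices would then have to alternate between sources and sinks, which forces even length. So the odd case is covered as well. An unbalanced even cycle fails the ``balanced'' condition directly.

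\textbf{Transitive tournaments.} By Theorem~\ref{thm push-oriented-TT}, $\overrightarrow{TT}_n$ is pushably non-multiplicative for every $n \geq 3$, which is exactly the range in the statement.

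Applying Lemma~\ref{lem non-multiplicative} to each case gives that $AT(\overrightarrow{G})$ is non-multiplicative, which completes the argument. I do not expect a genuine obstacle. The only delicate point is making sure the hypotheses of the cited theorems match the statement: the parity observation for odd cycles, and the oriented versus directed wording in the bipartite case. Beyond that, the proof is a direct combination of earlier results.
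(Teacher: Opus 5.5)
Your proposal is correct and is exactly the paper's argument: the paper also proves Theorem~\ref{th non-multiplicative directed} by applying Lemma~\ref{lem non-multiplicative} to the non-multiplicative halves of Theorems~\ref{thm push-oriented-bipartite}, \ref{thm push-oriented-cycle} and~\ref{thm push-oriented-TT}. Your two hypothesis checks are sound details the paper leaves implicit, namely that odd oriented cycles are automatically unbalanced and that the bipartite case must be read for oriented graphs as in Theorem~\ref{thm push-oriented-bipartite}; the oriented reading is genuinely needed, since for instance $\overrightarrow{D}_2$ is bipartite and unbalanced yet pushably multiplicative, so the reduction cannot apply to it.
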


\begin{proof}
 The proof directly follows from Theorems~\ref{thm push-oriented-bipartite}, \ref{thm push-oriented-cycle}, and \ref{thm push-oriented-TT} using Lemma~\ref{lem non-multiplicative}.     
\end{proof}

\section{Conclusions}\label{sec conclusions}
\noindent 
(1) We present the state of the art 
of multiplicativity research in \textbf{Und}, \textbf{Dir}, \textbf{Push} in a tabular form (see Table~\ref{tab result}).

\begin{table}[h]
\centering
\setlength{\tabcolsep}{10pt}        
\renewcommand{\arraystretch}{1.5}   
\begin{adjustbox}{width=\columnwidth,center}

\begin{tabular}{|c|p{5cm}|p{5cm}|}
\hline
\textbf{Category} & \textbf{Multiplicative} & \textbf{Non-multiplicative} \\
\hline

\multirow{4}{*}{Undirected Graphs}
& $K_1, K_2, K_3$~\cite{el-zahar-sauer}
& $K_n$ for $n \ge 4$~\cite{tardiff-kn4,shitov-counter} \\
\cdashline{2-3}
& Cycles $C_n$~\cite{hell_oriented_multiplicativity}
& \multirow{3}{*}{} \\
\cdashline{2-2}
& Circular cliques $K_{{n/k}}$ with ${n/k} < 4$~\cite{tardiff-circ-cliq} &
\\
\cdashline{2-2}
& Square-free graphs~\cite{wrochna-square-free} &
\\
\hline

\multirow{4}{*}{Directed Graphs}
& Oriented paths and cycles homomorphically equivalent to directed paths~\cite{nevsetvril-pultr}
& 
All oriented cycles not listed among the multiplicative directed graphs. 
\\
\cdashline{2-3}
& $\mathcal{C}$-cycles~\cite{hell_oriented_multiplicativity}
& \cellcolor{green!25} Anti-twinned directed graphs of unbalanced bipartite graphs, odd oriented cycles, transitive tournaments on 3 or more vertices [Theorem~\ref{th non-multiplicative directed}]\\
\cdashline{2-2}
& Directed cycles of prime power length~\cite{haggkvist_multiplicative_1988}
&  \\
\cdashline{2-2}
& Transitive tournaments~\cite{nevsetvril-pultr}
&  \\

\hline

\multirow{4}{*}{Pushable digraphs}
& \cellcolor{green!25} Balanced bipartite graphs
  \newline [Theorem~\ref{thm push-oriented-bipartite}]
& \cellcolor{green!25} Unbalanced bipartite graphs
  \newline [Theorem~\ref{thm push-oriented-bipartite}] \\
\cdashline{2-3}
&
&
\cellcolor{green!25} All odd oriented cycles
  \newline [Theorem~\ref{thm push-oriented-cycle}] \\
\cdashline{3-3}
&
&
\cellcolor{green!25} Transitive tournaments on $n\ge 3$
  \newline [Theorem~\ref{thm push-oriented-TT}] \\
\hline

\end{tabular}
\end{adjustbox}
\vspace{0.5cm}
\caption{Summary of known characterization of multiplicative (directed) graphs related results  in the categories of undirected graphs, directed graphs, and pushable directed graphs. The results proved in this article are highlighted using green color.}
\label{tab result}

\end{table}

\medskip

\noindent
(2) 
We have established the existence of exponential objects in the category of pushable digraphs in Theorem~\ref{thm push-exponential} resolving Question~\ref{ques push exponential} from~\cite{das2026update}. We managed to use the exponential objects and their properties as a tool to prove results related to pushable multiplicative directed graphs (in Theorems~\ref{thm push-oriented-cycle} and~\ref{thm push-oriented-TT}, to be specific). 
We wonder if it is possible to 
prove an analogue of the `Denstiy Theorem' known 
for undirected and directed graphs (with respect to ordinary homomorphism), for directed graphs with respect to the pushable homomorphism.

\medskip

\noindent
(3) 
Our results partially address Question~\ref{ques pushably digraph multiplicative} which asks to characterize all pushably multiplicative directed graphs. However, the complete resolution of the question is far from complete. We propose the following ambitious conjecture, which proposes a complete solution to Question~\ref{ques pushably digraph multiplicative}. 

\begin{conjecture}
A directed graph $\overrightarrow{G}$ is pushably multiplicative if and only if it is balanced.
\end{conjecture}

\medskip

\noindent
(4) 
As a natural intermediate step towards the above conjecture would be to determine the multiplicativity of tournaments with respect to  pushable homomorphisms. 
In particular, one may try to prove that every tournament on at least $n$ vertices is pushably non-multiplicative, where $n \geq 3$. 
Note that this claim is correct for $n = 3$ since the directed cycle on $3$ vertices is pushably non-multiplicative according to Theorem~\ref{thm push-oriented-cycle}. 
In fact, we have proofs
of this claim for $n =4$ and $n=5$ cases (not included in this article). Thus, it will not be surprising to find a proof of the claim for all $n \geq 4$. 

\medskip

\noindent
(5) The techniques developed here rely heavily on the exponential directed graphs.  
It would be worthwhile to examine whether similar ideas can be adapted to the classical categories \textbf{Dir} and \textbf{Und}, possibly yielding new insights into multiplicativity or related structural phenomena.

\medskip

\noindent
(6) 
Finally, an intriguing direction is the study of multiplicativity in other graph categories where categorical product exists, such as, the category of signed graphs. 
We did explore this direction a bit, and surprisingly found that most of the signed graphs are non-multiplicative. We are yet to resolve the analogue of Question~\ref{ques multiplicative} for signed graphs though. We will probably record our findings for signed graphs in a different article.

\bibliographystyle{abbrv}
\bibliography{references}

\end{document}